\title{Tolerant Algorithms for Learning with \\ Arbitrary Covariate Shift}
\author{%
  Surbhi Goel \\
  Department of Computer Science\\
  University of Pennsylvania\\
  \texttt{surbhig@seas.upenn.edu } \\
\and
  Abhishek Shetty\thanks{Supported by an Apple AI/ML PhD Fellowship
  } \\
  Department of EECS\\
   UC Berkeley \\
  \texttt{shetty@berkeley.edu  } \\
  \and
  Konstantinos Stavropoulos
  \thanks{
  Supported by the NSF AI Institute for Foundations of Machine Learning (IFML) and by scholarships from Bodossaki Foundation and Leventis Foundation.
  } \\
  Department of Computer Science\\
   UT Austin \\
  \texttt{kstavrop@cs.utexas.edu} \\
  \and
  Arsen Vasilyan\thanks{
  Supported by NSF awards CCF-2006664, DMS-2022448, CCF-1565235, CCF-1955217, Big George Fellowship and Fintech@CSAIL.
  } \\
  Department of Computer Science\\
   MIT \\
\texttt{vasilyan@mit.edu  } \\
}
\theoremstyle{plain}
\newtheorem{theorem}{Theorem}[section]
\newtheorem{lemma}[theorem]{Lemma}
\newtheorem{proposition}[theorem]{Proposition}
\newtheorem{fact}[theorem]{Fact}
\newtheorem{claim}{Claim}
\newtheorem{observation}[theorem]{Observation}
\theoremstyle{definition}
\newtheorem{definition}[theorem]{Definition}
\theoremstyle{remark}
\newtheorem{remark}[theorem]{Remark}
\numberwithin{equation}{section}
\def\A{\mathcal{A}}
\def\C{\mathcal{F}}
\def\D{\mathcal{D}}
\def\L{\mathcal{L}}
\def\X{\mathcal{X}}
\def\Y{\mathcal{Y}}
\def\S{\mathbb{S}}
\newcommand*{\N}{{\mathcal{N}}}
\newcommand*{\bR}{{\mathbb{R}}}
\let\eps\epsilon
\let\phi\varphi
\DeclareMathOperator*{\pr}{\mathbb{P}}
\DeclareMathOperator*{\ex}{\mathbb{E}}
\DeclareMathOperator*{\bE}{\mathbb{E}}
\DeclareMathOperator{\unif}{Unif}
\let\hat\widehat
\DeclareMathOperator{\sign}{sign}
\DeclareMathOperator{\ind}{\mathbbm{1}}
\newcommand{\opt}{\mathsf{opt}}
\newcommand{\cube}[1]{\{\pm 1\}^{#1}}
\newcommand{\ignore}[1]{} 
\newcommand*{\w}{\mathbf{w}}
\newcommand*{\vv}{\mathbf{v}}
\newcommand*{\wopt}{\mathbf{w}^{*}}
\newcommand*{\x}{\mathbf{x}}
\newcommand*{\Dgeneric}{\D}
\newcommand*{\Dtrainjoint}{\D^{\mathrm{train}}}
\newcommand*{\Dtestjoint}{\D^{\mathrm{test}}}
\newcommand*{\Dtrainmarginal}{\D_{\X}^{\mathrm{train}}}
\newcommand*{\Dtestmarginal}{\D_{\X}^{\mathrm{test}}}
\newcommand*{\Dgenericjoint}{\D_{\X\Y}}
\newcommand*{\error}{\mathrm{err}}
\newcommand{\train}{\mathrm{train}}
\newcommand{\test}{\mathrm{test}}
\newcommand{\optcommon}{\lambda}
\newcommand{\concept}{f}
\newcommand{\coptcommon}{\concept^*}
\newcommand{\Slabelled}{S}
\newcommand{\Sunlabelled}{X}
\newcommand{\Strain}{\Slabelled_\train}
\newcommand{\pup}{p_{\mathrm{up}}}
\newcommand{\pdown}{p_{\mathrm{low}}}
\newcommand{\nats}{\mathbb{N}}
\newcommand{\Gauss}{\mathcal{N}}
\newcommand{\pbound}{B}
\newcommand{\degbound}{k}
\newcommand{\accuracy}{\eps'}
\newcommand{\mtrain}{m_{\train}}
\newcommand{\mtest}{m_{\test}}
\newcommand{\nhalfspaces}{\ell}
\newcommand{\dtsize}{s}
\newcommand{\aczsize}{s}
\newcommand{\aczdepth}{\ell}
\newcommand{\disagreementregion}{\mathbf{D}}
\newcommand{\R}{\mathbb{R}}
\DeclareMathOperator{\E}{\mathbb{E}}
\newcommand{\poly}{\mathrm{poly}}
\newcommand{\tv}{\mathrm{d}_{\mathrm{TV}}}
\newcommand{\p}{p}
\newcommand{\Dtrain}{\Dtrainjoint}
\newcommand{\Dtest}{\Dtestjoint}
\newcommand{\rejrate}{\eta}
\newcommand{\iid}{\mathrm{iid}}
\renewcommand{\accuracy}{\gamma}
\newcommand{\norm}[1]{\left\lVert#1\right\rVert}
\newcommand{\margin}{200\sqrt{d^{2k} \frac{\log N}{N}\log \frac{1}{\delta}} }
\DeclareMathOperator*{\argmax}{arg\,max}
\newcommand{\borel}{\mathcal{B}}
\newcommand{\powset}{\mathrm{Pow}}
\newcommand{\toler}{\theta}
\newcommand{\noiserate}{\eta}
\def\DP{\mathcal{D}}
\def\DQ{{\mathcal{D}'}}
\newcommand{\Sinp}{S}
\newcommand{\Siid}{S_{\mathrm{iid}}}
\newcommand{\Sadv}{S_{\mathrm{adv}}}
\newcommand{\Srem}{S_{\mathrm{rem}}}
\newcommand{\Sfiltered}{S_{\mathrm{filt}}}
\begin{document}

\maketitle

\begin{abstract}


   We study the problem of learning under arbitrary distribution shift, where the learner is trained on a labeled set from one distribution but evaluated on a different, potentially adversarially generated test distribution. We focus on two frameworks: \emph{PQ learning} \cite{goldwasser2020beyond}, allowing abstention on adversarially generated parts of the test distribution, and \emph{TDS learning} \cite{klivans2023testable}, permitting abstention on the entire test distribution if distribution shift is detected. All prior known algorithms either rely on learning primitives that are computationally hard even for simple function classes, or end up abstaining entirely even in the presence of a tiny amount of distribution shift.
   
   We address both these challenges for natural function classes, including intersections of halfspaces and decision trees, and standard training distributions, including Gaussians. For PQ learning, we give efficient learning algorithms, while for TDS learning, our algorithms can tolerate moderate amounts of distribution shift. At the core of our approach is an improved analysis of spectral outlier-removal techniques from learning with nasty noise. 
   Our analysis can (1) handle arbitrarily large fraction of outliers, which is crucial for handling arbitrary distribution shifts, and (2) obtain stronger bounds on polynomial moments of the distribution after outlier removal, yielding new insights into polynomial regression under distribution shifts. Lastly, our techniques lead to novel results for tolerant \emph{testable learning} \cite{rubinfeld2022testing}, and learning with nasty noise.

\end{abstract}

\section{Introduction}

Despite the tremendous progress of machine learning, real-world deployment and use of machine learning models has proven challenging. A major reason for this is \textit{distribution shift}, which occurs when the model is trained on one distribution $\Dtrain$ over $\X \times \{\pm 1\}$, while the data during deployment comes from a different distribution $\Dtest$. In such scenarios, a model can unexpectedly make incorrect predictions, leading to loss of reliability, as well as erosion of trust in the machine learning system itself. Among many other critical applications, distribution shift continues to be a major challenge in healthcare applications \cite{zech2018variable,subbaswamy2020development,wong2021external,ternov2022generalizability}.

Handling distribution shift when $\Dtrain$ and $\Dtest$ are allowed to be arbitrary is known to be impossible \cite{david2010impossibility}. To circumvent this impossibility, recent works \cite{goldwasser2020beyond,kalai2021efficient,klivans2023testable,goel2024adversarial,klivans2024learning} allow the machine learning model to additionally \textit{abstain} (not make a prediction) on some or all of the inputs. These frameworks generalize standard PAC learning, requiring the algorithm to abstain from making predictions rather than giving incorrect predictions. In this work, we focus on two such frameworks for binary classification: 

\noindent\textbf{PQ learning} \cite{goldwasser2020beyond, kalai2021efficient}, requiring the learning algorithm to output a \textit{selective} classifier $\hat{f}$, which is allowed to abstain on some inputs and simultaneously satisfy: 
        (i) \textit{$\epsilon$-accuracy:} the probability that $\hat{f}$ does not abstain and incorrectly classifies an input $\x$ from the test distribution $\Dtest$ is at most $\epsilon$, and
        (ii) \textit{$\epsilon$-rejection rate:} the probability that $\hat{f}$ abstains on an input $\x$ from the original distribution $\Dtrain$ is at most $\epsilon$. In particular, this implies that  $\hat{f}$ abstains on $\Dtest$ with probability at most $\tv(\Dtrain, \Dtest)+\epsilon$, i.e. the probability of abstention deteriorates only in proportion to the amount of distribution shift.

\noindent\textbf{Testable Distribution Shift (TDS)} \cite{klivans2023testable}, allowing the classifier to abstain on the entire distribution $\Dtest$ if any distribution shift is detected. If there is no distribution shift, then the classifier is $\epsilon$-accurate on $\Dtest$.

Prior known algorithms for both these settings have strong inherent limitations, making them impractical for real-world scenarios. For PQ learning, all known algorithms require access to oracles that are computationally inefficient even for the most basic concept classes and training distributions. For example, even for the most basic class of halfspaces (linear separators) over $\R^d$ under the Gaussian training distribution, no PQ learning algorithm has run-time better than $2^{d^{\Omega(1)}}$. On the other hand, TDS learning algorithms, while being computationally efficient, reject entire test sets even when the test set has a tiny amount of distribution shift. For example, the algorithms of \cite{klivans2023testable}, use the \textit{low-degree moment-matching approach}, which can reject distributions $\Dtest\neq \Dtrain$ even when $\tv(\Dtrain, \Dtest)=o(\epsilon)$.

\subsection{Our results} 
In this work, we overcome both these limitations using a unified approach: spectral outlier removal \cite{diakonikolas2018learning} in tandem with strong polynomial approximation results in terms of $\L_2$-sandwiching \cite{klivans2023testable}. For PQ learning, we give the first dimension-efficient learning algorithms. For TDS learning, we give the first tolerant
TDS learners that accept test sets with moderate amount of distribution shift in TV distance, $\tv(\Dtrain, \Dtest)=O(\epsilon)$. We summarize our results in \cref{table:summary}.
\begin{table}[H]
\centering
\begin{tabular}{@{}lcccc@{}}
\toprule
Concept class $\C$& $\Dtrainmarginal$ & PQ runtime & TDS runtime  \\
\midrule
 Halfspaces (\textit{realizable})& $\mathcal{N}(0, I)$  & $d^{O(\log 1/\eps)}$ & $d^{O(\log 1/\eps)}$ \\
 Halfspaces & $\mathcal{N}(0, I)$, $\mathcal{U}\left(\{\pm 1\}^d\right)$   & $d^{\tilde{O}(1/\eps^4)}$ & $d^{\tilde{O}(1/\eps^2)}$ \\
Intersections of $\ell$ halfspaces & $\mathcal{N}(0, I)$, $\mathcal{U}\left(\{\pm 1\}^d\right)$   & $d^{\tilde{O}(\ell^6/\eps^4)}$& $d^{\tilde{O}(\ell^6/\eps^2)}$ \\
Size-$s$ decision trees & $\mathcal{U}\left(\{\pm 1\}^d\right)$   & $d^{O(\log (s/\epsilon))}$ & $d^{O(\log (s/\epsilon))}$  \\
Size-$s$ depth-$\ell$ formulas & $\mathcal{U}\left(\{\pm 1\}^d\right)$   & $d^{O(\sqrt{s}(\log (s/\epsilon))^{5\ell/2})}$& $d^{O(\sqrt{s}(\log (s/\epsilon))^{5\ell/2})}$\\
\bottomrule\\
\end{tabular}
\caption{Summary of our results for PQ learning and tolerant TDS learning. Except for the first row, all results are for the agnostic noise model.}
\label{table:summary}
\end{table}

\paragraph{Application: Testable Agnostic Learning.} Our techniques give new learning algorithms in the \textit{testable agnostic learning} framework of \cite{rubinfeld2022testing}.  Testable learning does not address distribution shift, but assumes that the training and testing
distributions are the same distribution and indirectly tests whether the training set actually satisfies certain assumptions, such as Gaussianity.
Similarly to the TDS learning algorithms of \cite{klivans2023testable}, all known testable agnostic learning algorithms are based either entirely \cite{rubinfeld2022testing,gollakota2022moment} or partially \cite{gollakota2023tester, diakonikolas2024efficient,gollakota2023efficient} on low-degree moment matching\footnote{In fact, \cite{gollakota2023efficient}, is partially based on a slightly more general hypercontractivity tester by \cite{kothari2017better}. However, this tester will reject some distributions $\Dtest\neq \Dtrain$ for which $\tv(\Dtrain, \Dtest)=o(\epsilon)$.}, and are subsequently not tolerant to small amounts of violations of the testing assumption in TV distance. We give the first tolerant testable learning algorithms for a number of function classes, including Halfspaces and low-depth formulas (see Table \ref{table: tolerant agnostic learning} for details).  

\begin{table}[htbp]
\centering
\makebox[\linewidth]{
\begin{tabular}{ |c|c|c|  }
\hline
Function class& Training distribution &  Run-time \\
\hline
\multirow{2}{*}{Halfspaces} & Standard Gaussian   & \multirow{2}{*}{$d^{\tilde{O}(1/\eps^4)}$}\\
 & Uniform on $\{\pm 1\}^d$     &   \\
 \hline
\multirow{2}{*}{Intersections of $\ell$ halfspaces} & Standard Gaussian    &  \multirow{2}{*}{$d^{\tilde{O}(\ell^6/\eps^4)}$}\\
 & Uniform on $\{\pm 1\}^d$     &   \\
\hline
Size-$s$ decision trees & Uniform on $\{\pm 1\}^d$    & $d^{O(\log (s/\epsilon))}$ \\
\hline
Size-$s$ depth-$\ell$ formulas & Uniform on $\{\pm 1\}^d$     & $d^{O(\sqrt{s}(4\log (s/\epsilon))^{5\ell/2})}$\\
\hline
\end{tabular}
}
\caption{Our results for $\Omega(\eps)$-tolerant testable agnostic learning.}
\label{table: tolerant agnostic learning}
\end{table}

\paragraph{Application: Learning with Nasty Noise.} As a corollary of our tolerant agnostic learning algorithms we obtain algorithms that withstand an $\Omega(\epsilon)$ amount of nasty noise corruption, and produce classifiers with an error at most $\epsilon$. (In this setting $\Omega(\epsilon)$ fraction of both labels and examples given to the algorithm are corrupted). The error bound of $\epsilon$ compares favorably with the bound of \cite{klivans2018efficient} under $\Omega(\epsilon)$ nasty noise, which is $\sqrt{\epsilon}$. Compared with the results of \cite{diakonikolas2018learning} in the nasty noise setting, our results are incomparable (see relevant discussion in \Cref{section:outlier-removal-main}, \Cref{section:nasty-noise} for more information).

\subsection{Our Techniques}


To explain our technical approach, we focus of the PQ learning setting (in TDS learning setting and testable agnostic learning setting, our approach is analogous).
 If the TV distance between $\Dtest$ and the training distribution $\Dtrain$ is at most $\tv(\Dtrain, \Dtest) $, then we think of the dataset as consisting of $1-\tv(\Dtrain, \Dtest)$ fraction of \textit{inliers} and an $\tv(\Dtrain, \Dtest)$ fraction of \textit{outliers}. In order to accomplish PQ learning, we aim to remove a portion of the test set while (i) ensuring that
a learning algorithm based on low-degree polynomial regression \cite{klivans2008learning} works on the remaining data (ii) not removing more than $\epsilon$ fraction of the inliers\footnote{Precisely, we aim to avoid removing more than $\epsilon N$ outliers, where $N$ is the size of the test dataset.}. 

It was known from
\cite{klivans2023testable} that the degree-$k$ polynomial regression performs correctly if the dataset satisfies  the degree-$k$ moment-matching test.
Despite its power, the low-degree moment test can reject distributions even $\epsilon/d^{O(k)}$-close to the reference distribution.
 However (i) it is not clear how to efficiently prune the dataset, so the remaining datapoints satisfy the moment-matching condition (ii) even if one could do this efficiently, this can require one remove a constant fraction of inliers. To overcome this issue, we introduce the notion of \textit{low-degree spectral boundedness}, which requires that for every degree-$k$ polynomial $p$ the expectation $\E_{\x\sim \D} [p(\x)^2]$ does not exceed the analogous expectation with respect to the reference distribution by more than a desired factor. Our first key insight is that by using the notion of $\L_2$-sandwiching polynomials \cite{klivans2023testable}, for many settings the low-degree moment matching test can be replaced by this low-degree spectral boundedness test. 
 
 If our dataset does not satisfy low-degree spectral boundedness, 
 our second key insight is to make it do so by removing outliers. 
 As in many other algorithms based on outlier removal\footnote{Our notion of outlier removal is connected to the notion of sampling correctors from \cite{canonne2016sampling}. We note that over $\R^d$ the algorithms of \cite{canonne2016sampling} run in time $2^{\Omega(d)}$, while ours are dimension-efficient.} (see e.g. \cite{robust_mean_1, robust_mean_2, robust_mean_3, steinhardt2018robust, robust_mean_4, DUNAGAN2004335} and references therein), our outlier-removal algorithm repeatedly finds regions in $\R^d$, such that at least $1-\epsilon$ fraction of points in them are outliers. This way, as we remove all the points in such outlier-rich regions, we will not remove too many inliers. Finally, we find such outlier-rich regions efficiently using a spectral approach. Specifically, if the dataset $S$ does not satisfy the low-degree spectral boundedness, then there is some polynomial $p$ for which $\E_{x\sim S} [p(\x)^2]$ is much greater than the corresponding expectation over the reference distribution. We infer that, for an appropriate value of $\tau$, at least $1-\epsilon$ fraction of points in the region $\{\x:\: p(\x)^2>\tau\}$ are outliers.

\subsection{Related work}
\noindent\textbf{Domain Adaptation.} During the last two decades, there has been a long line of works in domain adaptation literature (see, e.g., \cite{ben2006analysis,ben2010theory,mansour2009domadapt,blitzer2007learning,david2010impossibility,redko2020survey,kalavasis2024transfer} and references therein), aiming to provide generalization bounds for the error on the test distribution, after training using only labeled examples from the training distribution. However, the generalization bounds provided involve distances between the training and test marginals that typically involve enumerations over the whole concept class and no efficient algorithms for estimating or even testing such distances directy are available.

\noindent\textbf{PQ Learning.} The PQ learning framework was defined by \cite{goldwasser2020beyond}, which showed that a PQ learner can be efficiently implemented using an oracle to a distribution-free agnostic learner. In follow-up work by \cite{kalai2021efficient}, it was shown that distribution-free PQ learning is actually equivalent to distribution-free agnostic reliable learning, which is a learning primitive known to be hard even for the fundamental class of halfspaces ($\exp({\Omega(\sqrt{d})})$ time is believed to be necessary).
Here, we show how to take advantage of standard assumptions on the training marginal (e.g., Gaussianity) in order to obtain the first dimension-efficient results for PQ learning of several fundamental concept classes.

\noindent\textbf{TDS Learning.} Testable learning with distribution shift was defined recently by \cite{klivans2023testable}, where dimension-efficient algorithms for several concept classes including halfspaces, halfspace intersections, decision trees and boolean formulas were provided. In this work, we give similar results for each of these classes in the tolerant TDS learning framework. Further work by \cite{klivans2024learning} provided improved guarantees for TDS learning halfspace intersections in the realizable case. We believe that our techniques can likely be used to provide similar improvements for tolerant TDS learning, but, for ease of exposition, we do not include such results in this work.

 \paragraph{Tolerant Distribution Testing:}
 The notion of tolerance in property testing was introduced in \cite{parnas2006tolerant} and has been the focus of many works including \cite{fischer2005tolerant,valiant2011estimating, blais2019tolerant, rubinfeld2020monotone, canonne2022price, chakraborty2022exploring, blum2018active, chen2023new}. However, over $\R^d$  all existing tolerant distribution testing algorithms (such as \cite{valiant2011estimating}) have run-times and sample complexities of $2^{\Omega(d)}$, which greatly exceeds our run-times.

\section*{Acknowledgements.}
We thank Adam Klivans for insightful conversations and helpful references. A.V. thanks Shyam Narayanan for helpful conversations about robust mean estimation and outlier removal.

\section{Preliminaries}

\noindent\textbf{Notation.} For details on the notation, see \Cref{appendix:preliminaries}. We denote with $\x^{\otimes k}$ the vector of monomials of degree $k$ of $\x\in\R^d$, i.e., $\x^{\otimes k}$ is a vector of length $d^k$ with elements of the form $\x^r =\prod_{i=1}^d\x^{r_i}$, where $\sum_{i\in[d]} r_i \le k$, $r_i\in \nats$, $r = (r_1,\dots,r_d)$ and $k$ is the degree of $\x^r$. A polynomial $p$ over $\R^d$ is a function $p(\x) = \sum_{r\in\nats^d} p_r \x^r = p^\top \x^{\otimes d}$, where we abuse the notation to denote with $p$ the vector of coefficients of the corresponding polynomial. 
A polynomial $p$ over $\cube{d}$ is defined similarly, but all of the coefficients corresponding to monomials $\x^r$ where $r_i>1$ for some $i$ are zero.

\noindent\textbf{Learning Setting.}  We consider distribution $\DP$ over $\X$ and $\Dtrain,\Dtest$ distributions over $\X\times\cube{}$ such that the marginal on $\X$ of $\Dtrain$ is $\DP$ and the marginal of $\Dtest$ is $\Dtestmarginal$. We also consider some concept class $\C\subseteq\{\X\to \cube{}\}$. The learner is given access to labeled examples from $\Dtrain$ as well as unlabeled examples from $\Dtestmarginal$ and the goal is to produce some hypothesis with low error on $\Dtest$, but is also allowed to abstain from predicting either on specific points (for PQ learning, Def.~\ref{definition:pq-learning}) or even the entire distribution (for TDS learning, Def.~\ref{definition:tolerant-tds-learning}) if distribution shift is detected. 

In the \textbf{realizable} setting, the labels of both the training distribution and the test distribution are generated according to some concept $\coptcommon\in\C$ and the training examples are of the form $(\x,\coptcommon(\x))$, where $\x\sim \DP$. The target test error is $\eps$ for some arbitrarily chosen $\eps\in (0,1)$. In the \textbf{agnostic} setting, the distributions $\Dtrain$ and $\Dtest$ can be arbitrary, except from the assumption that the marginal of $\Dtrain$ is $\Dtrainmarginal=\DP$. To quantify the target error, we use parameter $\optcommon = \optcommon(\C;\Dtrain,\Dtest) = \min_{\concept\in\C} (\error(\concept;\Dtrain) + \error(\concept;\Dtest))$, where $\error(\concept;\Dtrain) = \pr_{(\x,y)\sim \Dtrain}[y\neq \concept(\x)]$ (and similarly for $\error(\concept;\Dtest)$). The error guarantee we can hope for is some function of $\optcommon$, because $\optcommon$ encodes the (unknown) relationship between the training and test distributions, in that $\optcommon$ is small when there is a concept in the class $\C$ that has low error on both training and test distributions. Error bounds in terms of $\optcommon$ are standard (and necessary) in the domain adaptation literature (see, e.g., \cite{ben2006analysis,ben2010theory}) as well as TDS learning (see \cite{klivans2023testable}).

\noindent\textbf{Properties of Distributions.} We make standard assumptions about training marginal $\DP$. We denote with $\Gauss_d$ the standard Gaussian distribution over $\R^d$ and with $\unif(\cube{d})$ the uniform distribution over the hypercube $\cube{d}$.
A distribution $\DP$ over $\X$ is $k$\textbf{-tame} if for every degree-$k$ polynomial $p$ over $\X$ with $\ex_{\x \sim \DP}[(p(\x))^2]\leq 1$ and every $B$ greater than $e^k$ we have $\pr_{\x \sim \DP}\left[(p(\x))^2>B\right]\leq e^{-\Omega(B^{1/(2k)})}$. And note that the Gaussian distribution, all isotropic log-concave distributions over $\R^d$, as well as the uniform distribution over $\cube{d}$ are $k$-tame for all $k\in\nats$ (see \Cref{appendix:tame-distributions}).

For a concept class $\C$, a distribution $\DP$ over $\X$, $\eps\in(0,1)$, we say that $\C$ has $\eps$-$\L_2$ \textbf{sandwiching degree} $k$ with respect to $\DP$ if for any $\concept\in \C$, there exist polynomials $\pup,\pdown$ over $\X$ with degree at most $k$ such that (1) $\pdown(\x)\le \concept(\x)\le \pup(\x)$ for all $\x\in\X$ and (2) $\E_{\x\sim \DP}[(\pup(\x)-\pdown(\x))^2] \le \eps$. If the coefficients of $\pup,\pdown$ are all absolutely bounded by $B$, we say that $\C$ has $\eps$-$\L_2$ sandwiching coefficient bound $B$.

\section{Outlier Removal Procedure}\label{section:outlier-removal-main}

The key ingredient of our approach is an outlier removal procedure which is closely related to the corresponding procedure proposed by \cite{diakonikolas2018learning} in the context of learning with nasty noise, but ours enjoys stronger error guarantees and works even when the fraction of outliers is arbitrarily large. The last property is important because we aim to handle arbitrary covariate shifts. 
Our outlier removal procedure outputs a selector $g:\X \to \{0,1\}$ that satisfies two main guarantees, provided examples drawn independently from some arbitrary, unknown distribution $\DQ$: (1) for any low-degree polynomial $p$, the part of the expectation of $p^2(\x)$ under $\DQ$ within the selected subset of $\X$ (i.e., $\E_{\x\sim \DQ}[p^2(\x)g(\x)]$), is a bounded multiple of the expectation of $p^2(\x)$ under the reference distribution $\DP$ and (2) the probability of rejecting a fresh sample drawn from $\DP$ (i.e., $\pr_{\x\sim\DP}[g(\x) = 0]$) is bounded by a multiple of the statistical distance between $\DP$ and $\DQ$. Formally, we prove the following theorem.

\begin{theorem}[Outlier Removal, see \Cref{appendix:outlier-removal}]\label{theorem:outlier-removal-main}
    There exists an algorithm (\Cref{algorithm:outlier-removal-main}) that, given sample access to an arbitrary distribution $\DQ$ over $\X\subseteq\R^d$, sample access to a $k$-tame probability distribution $\DP$ over $\X$, parameters $\epsilon, \alpha, \delta\in (0,1)$ and $k\in \nats$, runs in time  $\poly(\frac{1}{\epsilon}(kd)^k \log \frac{1}{\delta})$ and outputs a succinct $\poly(\frac{1}{\epsilon}(kd)^k \log \frac{1}{\delta})$-time-computable description of a function $g: \X \rightarrow \{0,1\}$ that satisfies the following properties with probability at least $1-\delta$.
		\begin{enumerate}[label=\textnormal{(}\alph*\textnormal{)},leftmargin=6ex]
		    \item\label{condition:outlier-removal-bounding}
			$
			\bE_{\x \sim \DQ}
			\left[
			(p(\x))^2 g(\x)
			\right]
			\leq \frac{200}{\alpha} \ex_{\x \sim \DP}[(p(\x))^2]
			$, for any polynomial $p$ with $\deg(p) \le k$. 
			\item\label{condition:outlier-removal-reasonable} $ \pr_{\x \sim \DP} [g(\x) = 0]
			\leq \alpha \, \tv(\DP,\DQ)
			+
			\frac{\epsilon}{2} 
			$.
		\end{enumerate}
\end{theorem}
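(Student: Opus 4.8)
The plan is to implement the outlier-rich-region strategy described in the techniques section. First I would set up a working dataset: draw $N$ i.i.d. samples from $\DQ$, forming a multiset $S$ whose empirical distribution I will call $\widehat{\DQ}$, and separately draw a reference sample from $\DP$. The algorithm maintains a shrinking subset of $S$ (equivalently, a partial selector $g$ that is $0$ on removed points and, by default, $1$ elsewhere including unseen points; care is needed to define $g$ on all of $\X$, which is why the theorem asks only for a succinct description — I would have $g$ output $0$ exactly on the union of the ``outlier regions'' discovered during the run). The core loop: while there exists a degree-$k$ polynomial $p$ with $\bE_{\x\sim\widehat{\DQ}}[p(\x)^2 g(\x)] > \tfrac{C}{\alpha}\,\bE_{\x\sim\DP}[p(\x)^2]$ for an appropriate constant $C$, find such a $p$ efficiently, pick a threshold $\tau$, and set $g(\x)=0$ on the region $\region = \{\x : p(\x)^2 > \tau\}$ (intersected with the points still alive). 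The fact that such a $p$ can be found by an eigenvector computation on the degree-$k$ moment matrices (size $d^{O(k)}$) is what gives the $\poly(\tfrac1\epsilon (kd)^k\log\tfrac1\delta)$ runtime; when no such $p$ exists, property \ref{condition:outlier-removal-bounding} holds on the empirical distribution and then on $\DQ$ by a uniform-convergence / VC-type argument over the class of degree-$k$ polynomial squares truncated appropriately.

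The key quantitative step — and the main obstacle — is choosing the threshold $\tau$ so that the region $\region$ is \emph{outlier-rich}: at least a $1-\epsilon$ fraction of the alive points falling in $\region$ must be outliers (points not attributable to the coupling with $\DP$). Here I would use $k$-tameness crucially. Normalize $p$ so that $\bE_{\x\sim\DP}[p(\x)^2]\le 1$; then tameness gives $\pr_{\x\sim\DP}[p(\x)^2 > B] \le e^{-\Omega(B^{1/(2k)})}$ for $B > e^k$. Since the whole point of entering the loop is that the conditional second moment of $p^2$ on the alive set exceeds $\tfrac{C}{\alpha}$ times the $\DP$-moment, a Paley–Zygmund / layer-cake argument shows there is a dyadic level $\tau$ (which one can search over, only polynomially many relevant levels because points are finite) at which the alive-mass of $\{p^2 > \tau\}$ is much larger than the $\DP$-mass of the same region, forcing most of that alive-mass to be outliers. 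Quantitatively: if a region has $\DP$-probability $q$ but alive-probability $\gg q/\epsilon$, then removing it costs at most a $q$ fraction of inliers while deleting $\ge (1-\epsilon)$-fraction outliers within it. Summing $q$ over all iterations must stay below $\epsilon/(2\alpha)$-ish, so I need the total $\DP$-mass removed, across all rounds, to telescope; this is where the tameness tail bound and the per-round ``gap'' multiply to give a geometric decrease in the surplus second moment, bounding the number of rounds and hence the cumulative inlier loss.

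For property \ref{condition:outlier-removal-reasonable}, I would track a potential that pairs the alive-points with a coupling between $\widehat{\DQ}$ and (an empirical version of) $\DP$: under the optimal coupling, a $1-\tv(\DP,\DQ)$ fraction of the $\DQ$-sample can be matched to $\DP$-distributed points, and these matched points are the ``inliers.'' Every removed region is outlier-rich, so across the whole run the number of removed inliers is at most $\epsilon$-ish fraction of matched points plus an $\alpha\cdot\tv$ term coming from the fact that a region of $\DP$-mass $q$ can legitimately be removed if its alive mass is $\alpha q$-ish scaled — i.e. the ``$1-\epsilon$ outlier'' guarantee only kicks in once alive mass exceeds roughly $q/\epsilon$, and below that the region's total mass is itself tiny. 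Pushing the constants through, $\pr_{\x\sim\DP}[g(\x)=0] \le \alpha\,\tv(\DP,\DQ) + \epsilon/2$, after replacing empirical quantities by population ones via a final uniform-convergence argument over the (low-complexity, since they are sublevel sets of degree-$k$ polynomials) removed regions. The delicate point throughout is that the outlier fraction is \emph{unbounded} — it may be close to $1$ — so unlike classical nasty-noise outlier removal I cannot assume the alive set stays mostly inliers; instead the argument must be entirely ``region-local,'' charging inlier removal only against the $\DP$-mass of each region and never against a global inlier budget.
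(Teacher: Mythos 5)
Your overall strategy matches the paper's: iteratively find a degree-$k$ polynomial whose empirical second moment on the surviving sample exceeds $O(1/\alpha)$ times its $\DP$-moment, pick a level $\tau$ at which the surviving mass of $\{p^2>\tau\}$ exceeds the reference mass by a factor $\Omega(1/\alpha)$, delete that region, and charge the deleted $\DP$-mass region-locally against the deleted $\DQ$-mass (so the total inlier loss is $O(\alpha)\cdot\tv(\DP,\DQ)$ plus accumulated margin terms); terminate when no violating polynomial exists and transfer both properties from the empirical sets to the distributions by uniform convergence over the low-complexity class of selectors. The eigenvector implementation, the initial truncation at $B=\poly(d^k/\epsilon)$ needed to make the uniform-convergence lemma applicable, and the use of tameness are all as in the paper.

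The one genuine gap is your bound on the number of iterations, which is not a side issue here: the selector $g$ is an AND of $i_{\max}$ region-complements, so the VC dimension of the class containing $g$ scales with $i_{\max}$, and both the generalization of condition (a) to $\DQ$ and the accumulated per-round margin terms $i_{\max}\cdot\Delta$ in condition (b) require a bound on $i_{\max}$ that is \emph{independent of $N$}. Your proposed mechanism --- ``the tameness tail bound and the per-round gap multiply to give a geometric decrease in the surplus second moment'' --- does not work as stated: deleting the region for $p_i$ provably brings $p_i$'s own surviving moment back below the threshold (this is the paper's Claim~\ref{claim: after iteration polynomial pi is not a problem anymore}, proved by exactly the layer-cake computation you describe), but it says nothing about the $d^k-1$ orthogonal directions, any one of which may violate the bound in the next round. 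The paper closes this by taking as potential the \emph{trace} of the whitened empirical moment matrix, $\phi_i=\sum_j\lambda_j^i$: since the deleted points contribute at least $\lambda_1^i-O(1/\alpha)\ge 0.8\lambda_1^i\ge\frac{0.8}{d^k}\phi_i$ to the trace, $\phi_i$ decays by a factor $(1-\Omega(d^{-k}))$ per round, and since $\phi_0\le B_0 d^k$ and $\phi_{i_{\max}-1}>1$ this gives $i_{\max}=O(kd^k\log(B_0 d))$ (Claim~\ref{claim: iteration bound independent of N}). Without some such aggregate potential, the only generic bound is that each round deletes at least one point, i.e.\ $i_{\max}\le N$, which is exactly what breaks the generalization step. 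You should also fix the statement of the charging inequality in your condition-(b) paragraph: a region of $\DP$-mass $q$ is removed only when its surviving $\DQ$-sample mass is at least $\tfrac{10}{\alpha}q$ (not ``$\alpha q$-ish''), which is what makes the removed $\DP$-mass at most $\tfrac{\alpha}{10}$ of the removed $\DQ$-mass and yields, after rearranging, the $\alpha\,\tv(\DP,\DQ)$ term.
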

\begin{remark}
    In \Cref{theorem:outlier-removal-main}, Condition \ref{condition:outlier-removal-reasonable} also implies some bound on the rejection rate over the distribution $\DQ$ and, in particular, $
			\pr_{\x \sim \DQ} [g(\x) = 0]
			\leq (1+\alpha)\tv(\DP, \DQ)+\epsilon/2
			$.
\end{remark}
\begin{remark}
\label{remark: strengtened completeness}
Our algorithm further satisfies a strengthened form of Condition \ref{condition:outlier-removal-reasonable} (with probability at least $1-\delta$). For $\sigma>\alpha/2$ and any distribution $\D''$ that is ${1}/{\sigma}$-smooth w.r.t. $\D$, (i.e. for any measurable set $T\subset \R^d$ we have $\Pr_{x\sim \D''}[x\in T] \leq \frac{1}{\sigma} \Pr_{x\sim \D}[x\in T]$) it is the case that
     \[
     \pr_{\x \sim \DP} [g(\x) = 0]
			\leq 
     \frac{\alpha}{\sigma} \, \tv(\D'',\DQ)
			+
			\frac{\epsilon}{2}, 
     \]
     which in particular implies that $\pr_{\x \sim \DP} [g(\x) = 0]
			\leq 
			{\epsilon}/{2} $ if $\DQ$ itself is ${2}/{\alpha}$-smooth w.r.t. $\DP$. 
\end{remark}
\begin{algorithm}
\caption{Outlier Removal Procedure}\label{algorithm:outlier-removal-main}
\KwIn{Sets $S_\DP, S_\DQ$, each of size $N$, containing points in $\X\subseteq\R^d$ and parameters ${k}$, $\eps$, $\delta$, $\alpha$}
\KwOut{A succinct description of a selector function $g:\X \to \{0,1\}$}
Let $t = {{d+k}\choose{k}}$, $B = \frac{4}{\eps} d^{3{k}}$ and $\Delta = 200 B d^{k} (\frac{\log N}{N} \log(1/\delta))^{1/2}$ \\
Compute monomial correlations estimate $\hat M$ by running \Cref{algorithm:moment-matrix-estimator} on inputs $S_\DP$, ${k}$ and $\delta/10$.
\\
$S^0 \leftarrow S_\DQ \setminus \{\x\in S_\DQ: \text{ there is }{\p\in\R^t} \text{ with } (\p^\top \x^{\otimes {k}})^2 > B \text{ and } \p^\top \hat M \p \le 1\}$\\
\For{$i = 1, 2, \dots, N$}{
    Let $\p_i \in \R^t$  be the solution and $\mu_i$ and the value of the following quadratic program. 
    \begin{align*}
        \max_{\p\in\R^t}& \,\frac{1}{N} \sum_{\x\in S^{i-1}}(\p^\top \x^{\otimes {k}})^2 \hspace{1em}\text{ s.t.:}\,\, \p^\top \hat M \p \le 1 
    \end{align*}\\
\lIf{$\mu_i \le \frac{50}{\alpha} (1+\Delta)$}{set $i_{\max} = i-1$ and exit the loop}
\lElse{let $\tau_i$ be the minimum non-negative real number such that the following is true
    \[
        \frac{1}{N}\Bigr|\{\x\in S^{i-1}: (\p_i^\top \x^{\otimes {k}})^2 > \tau_i\}\Bigr| \ge \frac{10}{\alpha} \Bigr(\pr_{\x\sim S_\DP}[B\ge (\p_i^\top \x^{\otimes {k}})^2 > \tau_i]+\Delta\Bigr)
    \]
    \\
    Set $S^i \leftarrow S^{i-1} \setminus \{\x\in S^{i-1}: (\p_i^\top \x^{\otimes {k}})^2 > \tau_i\}$
}
}
Set $g(\x)$ to be $0$ if and only if either there is ${\p\in\R^t}$ \text{ with } $(\p^\top \x^{\otimes {k}})^2 > B$ \text{ and } $\p^\top \hat M \p \le 1$, or $(\p_i^\top \x^{\otimes {k}})^2 > \tau_i$ for some $i\in[i_{\max}]$. Otherwise, set $g(\x) = 1$.
\end{algorithm}

The outlier removal procedure of \Cref{theorem:outlier-removal-main} iteratively solves a quadratic program with quadratic constraints (which can be solved efficiently, see \Cref{sec: how to implement}) and increases the rejection region by setting $g(\x) = 0$ on each point $\x$ where the corresponding (maximum second moment) polynomial takes large values. The procedure halts when the solution of the quadratic program has value bounded by $O(1/\alpha)$ (which implies condition \ref{condition:outlier-removal-bounding}). 

\noindent\textbf{Proof overview.} The main idea for the analysis is that whenever the stopping criterion does not hold, then there is a polynomial with unreasonably large second moment over the remaining part of $\DQ$ (after the rejections). When such a polynomial $p$ exists, there must be a threshold $\tau$ for the squared values of $p$ such that $\DQ$ assigns $\Omega(1/\alpha)$ times more mass on non-rejected points $\x$ with $p^2(\x) > \tau$ compared to the reference distribution $\DP$. Such points can be safely rejected, because, in that case, the mass of points under $\DP$ rejected is multiplicatively smaller (by a factor of $O(\alpha)$) than the corresponding mass under $\DQ$ (which implies condition \ref{condition:outlier-removal-reasonable}). Note that the procedure will have to end eventually, because in each iteration, at least one example is removed.

In order to account for errors incurred by sampling (from $\DP$ and $\DQ$), it is important to provide a bound on the number of iterations that is independent from the number of examples drawn, because the complexity of the selector $g$ depends on the number of iterations and we need the desired properties of $g$ to generalize to the actual distributions $\DP$ and $\DQ$. To this end, we consider the trace of the matrix $M_i = \frac{1}{N}\sum_{\x\in S^i} (\x^{\otimes k})(\x^{\otimes k})^\top$ as a potential function and we show that it reduces by a multiplicative factor in each iteration (see \Cref{claim: iteration bound independent of N} in the Appendix).

\noindent\textbf{Comparison with \cite{diakonikolas2018learning}.} Among all outlier removal algorithms, ours is most related to the algorithm of \cite{diakonikolas2018learning}, which also removes elements in regions of the form $\{\x:\: (p(\x))^2>\tau\}$. However, there are two differences. First, \cite{diakonikolas2018learning} assume that the fraction of outliers is bounded, while ours provides meaningful guarantees even in the presence of arbitrary fraction of outliers. In particular, we can maintain low rejection rates even in the presence of large fractions of outliers by relaxing the bound on the polynomial moments after outlier removal. This is important for PQ learning, because we need low error guarantees even when the amount of distribution shift is arbitrarily large.
    Second, even when the fraction of outliers is small, our bound on the second moments of polynomials does not depend on the degree and the degree dependence only appears in the runtime of the outlier removal process. This gives new insights on polynomial regression in the presence of outliers (due to distribution shift or noise). In contrast, the moment bound of \cite{diakonikolas2018learning} scales with the degree of the corresponding polynomial and when the degree bound scales with the target learning error, their results become vacuous. 
This enables us to combine the outlier removal process with $\L_2$ sandwiching results from TDS learning to obtain, for example, the first dimension-efficient robust learners with nasty noise of rate $\Omega(\epsilon)$ that achieve error $\epsilon$ for the class of intersections of halfspaces. While \cite{diakonikolas2018learning} also provide robust learners for this class, they only achieve error guarantees that scale as $\tilde{O}(k^{1/12}\epsilon^{1/11})$, for intersections of $k$ halfspaces. The key difference between our analysis and that of \cite{diakonikolas2018learning} is that, to bound the number of iterations, we use an appropriate potential function, while \cite{diakonikolas2018learning} ensure that the number of iterations is bounded by making sure to remove at least some fraction of points in each step. As a result, their stopping criterion scales with the target polynomial degree.

\section{Selective Classification with Arbitrary Covariate Shift}

In order to provide provable learning guarantees in the presence of distribution shift, when no test labels are available, one reasonable approach is to enable the model to abstain on certain regions for which the training samples do not provide sufficient information. The model should not be able to abstain frequently on samples from the training distribution, since, otherwise the provided guarantees would be vacuous (e.g., when the model abstains always). A formal definition of this framework was given by \cite{goldwasser2020beyond} and, in this section, we provide the first end-to-end, dimension-efficient algorithms for learning various fundamental classes (e.g., halfspaces) in this setting.

\noindent\textbf{PQ Setting.}
We first consider the case where the test samples are independently drawn from some (potentially adversarial) distribution $\Dtest$ and the goal of the learner is to achieve low error under $\Dtest$ (on points where the learner does not abstain), without abstaining frequently on fresh training samples, as described formally in the following definition of agnostic PQ learning.

\begin{definition}[PQ Learning \cite{goldwasser2020beyond}]
\label{definition:pq-learning}
    Let $\C$ be a concept class over $\X\subseteq \R^d$ and $\DP$ a distribution over $\X$. The algorithm $\A$ is a PQ-learner for $\C$ with respect to $\DP$ up to error $\accuracy$, rejection rate $\rejrate$ and probability of failure $\delta$ if, upon receiving $\mtrain$ labeled samples from a training distribution $\Dtrain$ with $\X$-marginal $\DP$ and $\mtest$ unlabeled samples from a test distribution $\Dtest$, algorithm $\A$ outputs, w.p. at least $1-\delta$, a hypothesis $h:\X\to \cube{}$ and a selector $g:\X\to \{0,1\}$ such that:
    \begin{enumerate}[label=\textnormal{(}\alph*\textnormal{)},leftmargin=6ex]
    \item\label{item:accuracy-pq} (\textit{accuracy}) The test error is bounded as $\pr_{(\x,y)\sim \Dtest}[y \neq h(\x)\text{ and }g(\x) = 1] \le \accuracy$.
    \item\label{item:rejrate-pq} (\textit{rejection rate}) The probability of rejection is bounded as $\pr_{\x\sim \DP}[g(\x) = 0] \le \rejrate$.
    \end{enumerate}
    The error $\accuracy$ and the rejection rate $\eta$ are, in general, functions of the parameter $\lambda = \lambda(\C; \Dtrain, \Dtest)$. 
\end{definition}

\noindent\textbf{Adversarial Setting.} Another reasonable scenario from \cite{goldwasser2020beyond} corresponds to the case where the test examples are not independent, but are chosen adversarially as follows. The adversary receives $N$ independent samples $S_\iid$ from $\DP$ and substitutes any number of them adversarially, forming a new unlabeled dataset $S_\test$ which is given to the learner along with a fresh set of independent samples from $\DP$, labeled according to some hypothesis $\coptcommon \in \C$ (realizable setting). The goal is to learn a hypothesis $h:\X\to \cube{}$ and a set $S_g \subseteq S_\test$ such that $\pr_{\x\in S_\test}[h(\x)\neq \coptcommon(\x) \text{ and }\x \in S_g] \le \accuracy$ and $|S_\iid \cap (S_\test \setminus S_g)| \le \rejrate N$ (only a small fraction of i.i.d. points can be rejected).
Note that the adversarial setting is primarily interesting in the realizable case, since there is no underlying test distribution and for any meaningful notion learning to be possible, there needs to be some relationship between the training and test labels.
In the rest of this section, we focus on positive results for PQ learning, but, as we argue in \Cref{appendix:pq-adversarial}, all of our positive results on (realizable) PQ learning also work analogously in the adversarial setting. This is because our outlier removal process (\Cref{theorem:outlier-removal-main}) also works when the examples from the distribution $\DQ$ are in fact generated adversarially (see \Cref{thm: outlier removal}). 

\subsection{PQ Learning of Halfspaces}

We now give the first dimension-efficient PQ learning algorithms for the fundamental concept class of halfspaces, in the realizable setting and with respect to the Gaussian distribution, i.e., when both the training and the test labels are generated by some unknown halfspace and the training marginal $\DP$ is the standard Gaussian distribution $\Gauss_d$.

\noindent\textbf{Warm-Up: Homogeneous Halfspaces.} We first focus on the class $\C$ of homogeneous halfspaces, i.e., functions $\concept:\R^d\to \cube{}$ with $\concept(\x) = \sign(\w\cdot \x)$ for $\w\in\S^{d-1}$. Recent work by \cite{klivans2023testable} showed that there is a simple fully polynomial-time TDS learner for this problem. 
In fact, their approach readily implies a PQ learner as well.

\begin{proposition}[Implicit in \cite{klivans2023testable}]
    For any $\eps,\delta\in(0,1)$, there is an algorithm that PQ learns the class of homogeneous halfspaces with respect to $\Gauss_d$ in the realizable setting, up to error and rejection rate $\eps$ and probability of failure $\delta$ that runs in time $\poly(d,\frac{1}{\eps}) \log(1/\delta)$.
\end{proposition}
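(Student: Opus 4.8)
The plan is to learn $\wopt$ from its degree-one Chow parameter and to reject an angular ``margin cone'' around the learned decision boundary whose half-angle matches our statistical uncertainty about the true boundary. First I would draw $\mtrain = \poly(d,1/\eps)\log(1/\delta)$ labeled examples $(\x_i,y_i)$ from $\Dtrain$ (so $\x_i\sim\Gauss_d$ and $y_i = \sign(\wopt\cdot\x_i)$), set $\hat\mu = \frac{1}{\mtrain}\sum_i y_i \x_i$ and $\hat\w = \hat\mu/\|\hat\mu\|$, and output the hypothesis $h(\x) = \sign(\hat\w\cdot\x)$. By Gaussian integration by parts, $\ex_{\x\sim\Gauss_d}[\sign(\wopt\cdot\x)\,\x] = \sqrt{2/\pi}\,\wopt$; since each coordinate of $y\x$ is a bounded multiple of a subgaussian random variable, standard subgaussian concentration together with a union bound over the $d$ coordinates gives $\|\hat\mu - \sqrt{2/\pi}\,\wopt\| \le \nu$ with probability $1-\delta$, provided $\mtrain = \poly(d,1/\nu)\log(1/\delta)$. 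Since $\|\sqrt{2/\pi}\,\wopt\|$ is bounded away from $0$, this forces $\|\hat\w - \wopt\| = O(\nu)$, and hence the angle $\theta$ between $\hat\w$ and $\wopt$ satisfies $\sin\theta \le \|\hat\w - \wopt\| \le \theta_0$, where $\theta_0 = C\nu$ for a suitable absolute constant $C$ is known in advance.

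Next I would fix the selector: set $g(\x) = 0$ exactly when $|\hat\w\cdot\x| < \sin(\theta_0)\,\|\x\|$, i.e., reject $\x$ whenever the angle between $\x$ and the hyperplane $\hat\w^{\perp}$ is below $\theta_0$; this is $\poly(d)$-time computable from $\x$ and the stored vector $\hat\w$, and the $\mtest$ unlabeled test points are never used. The crux is a purely geometric, \emph{pointwise} containment that holds for every $\x$: the disagreement region $\{\x : \sign(\hat\w\cdot\x)\neq\sign(\wopt\cdot\x)\}$ is contained in $\{\x : |\hat\w\cdot\x| < \sin(\theta)\,\|\x\|\}$, which in turn lies in $\{\x : g(\x) = 0\}$ since $\theta\le\theta_0$ and both are well below $\pi/2$. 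To see the first containment, project $\x$ onto the plane $\spn(\hat\w,\wopt)$: there the disagreement region is the union of two opposite wedges of angular width $\theta$, and on each wedge the $\hat\w$-coordinate is at most $\sin\theta$ times the two-dimensional projection norm, which is at most $\|\x\|$.

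Given this, the two PQ guarantees are immediate. In the realizable setting the test labels are also $\sign(\wopt\cdot\x)$, so $\{(\x,y): y\neq h(\x)\}$ is exactly the disagreement region, which is disjoint from $\{g(\x)=1\}$ by the containment; hence $\pr_{(\x,y)\sim\Dtest}[y\neq h(\x)\text{ and }g(\x)=1] = 0 \le \eps$ for \emph{any} $\Dtest$. For the rejection rate, rotational invariance of $\Gauss_d$ makes $\hat\w\cdot\x/\|\x\|$ distributed as the first coordinate of a uniform point on $\Sphere^{d-1}$, whose density is $O(\sqrt d)$, so $\pr_{\x\sim\Gauss_d}[g(\x)=0] = \pr[|\hat\w\cdot\x| < \sin(\theta_0)\|\x\|] = O(\sqrt d\,\theta_0)$; choosing $\theta_0 = O(\eps/\sqrt d)$ makes this at most $\eps$, and this only requires $\nu = O(\eps/\sqrt d)$, i.e.\ $\mtrain = \poly(d,1/\eps)\log(1/\delta)$. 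The one step that needs care — and the reason a naive fixed-width band selector $\{|\hat\w\cdot\x| < \gamma\}$ does not suffice — is the geometric containment above: against an adversarial $\Dtest$ supported on points of very large norm that lie just barely on the wrong side of $\wopt^{\perp}$, only the angular margin (the cone), and not any band, captures all of the errors; everything else is routine Gaussian concentration.
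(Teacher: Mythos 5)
Your proof is correct and follows essentially the same route as the paper: recover $\wopt$ to accuracy $O(\eps/\sqrt d)$ from training data alone, reject the angular uncertainty cone around the learned hyperplane (which contains the disagreement region pointwise, so the test error on accepted points is zero for any $\Dtest$), and bound its Gaussian mass by $O(\sqrt d\,\theta_0)$. The only cosmetic difference is that you use the Chow-parameter averaging estimator where the paper invokes ERM for parameter recovery; both yield the same $\|\hat\w-\wopt\|_2\le\eps'$ guarantee and the same selector.
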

The algorithm of \cite[Proposition 5.1]{klivans2023testable} rejects when the probability that a randomly chosen example $\x$ from the test marginal falls in some particular region $\disagreementregion$ in $\R^d$ (for which there is an efficient membership oracle) is greater than $\Omega(\eps)$. Since the training marginal is Gaussian, the ERM, run on sufficiently many labeled training examples, outputs a hypothesis $h(\x) = \sign(\hat\w\cdot \x)$ such that $\|\hat\w-\wopt\|_2 \le \eps'$, where $\wopt$ is the ground truth. Region $\disagreementregion$ consists precisely of the points $\x$ for which the ERM hypothesis $h$ is not confident: there are two (potential ground truth) unit vectors $\vv_1$ and $\vv_2$ that are both $\eps'$-close to $\hat\w$ and $\sign(\vv_1\cdot \x) \neq \sign(\vv_2\cdot \x)$. Crucially, the Gaussian mass of $\disagreementregion$ is known to be $\poly(\eps')\cdot \sqrt{d}$ (see, e.g., \cite{hanneke2014theory}). Therefore, for PQ learning, we may return the classifier $h$ along with the selector $g(\x) = \ind\{\x\not\in \disagreementregion\}$ and note that access to unlabeled test examples is not neeeded to form $h $ and $ g$.



\noindent\textbf{General Halfspaces.} For the class of general halfspaces (i.e., functions of the form $\concept(\x) = \sign(\w\cdot \x + \tau)$ where $\w\in\S^{d-1}$ and $\tau \in \R$), the labeled training samples do not always provide sufficient information to recover the unknown parameters. This is because the bias $\tau^*$ of the ground truth could take arbitrarily large positive or negative values, in which case all of the training examples will likely have the same label and (almost) no information about the ground truth $\wopt$ is revealed. The concern in that case is that the test marginal $\DQ$ assigns a lot of mass far from the origin in the direction of $\wopt$. By appropriately applying \Cref{theorem:outlier-removal-main} to select a part of the test marginal $\DQ$ that is sufficiently concentrated in every direction (hence even in the direction of $\wopt$), we obtain the following PQ learning result.

\begin{theorem}[PQ Learning of Halfspaces]\label{theorem:pq-halfspaces}
    For any $\eps,\delta\in(0,1)$, there is an algorithm that PQ learns the class of general halfspaces with respect to $\Gauss_d$ in the realizable setting, up to error and rejection rate $\eps$ and probability of failure $\delta$ that runs in time $\poly(d^{\log(\frac{1}{\eps})}, \log(1/\delta))$.
\end{theorem}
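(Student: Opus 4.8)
The plan is to mirror the homogeneous warm-up (ERM together with a disagreement-region selector) but to first apply the outlier-removal procedure of \Cref{theorem:outlier-removal-main} to the test marginal, so that even if the bias $\tau^*$ of the ground truth $\copt(\x) = \sign(\wopt\cdot\x + \tau^*)$ is arbitrarily large, the surviving part of the test distribution stays concentrated in the direction $\wopt$. Concretely, I would first run \Cref{algorithm:outlier-removal-main} with $\DQ = \Dtestmarginal$, reference $\DP = \Gauss_d$ (which is $k$-tame), degree $k = \Theta(\log(1/\eps))$, and both its accuracy parameter and $\alpha$ set to $\Theta(\eps)$, obtaining a selector $g_0$ with (i) $\E_{\x\sim\DQ}[(p(\x))^2 g_0(\x)] \le O(1/\eps)\cdot\E_{\x\sim\Gauss_d}[(p(\x))^2]$ for every polynomial $p$ with $\deg(p)\le k$, and (ii) $\pr_{\x\sim\Gauss_d}[g_0(\x) = 0] \le \eps/2$. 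From the labeled training sample I would also compute the empirical fraction $\hat q$ of $+1$ labels (which estimates $q = \Phi(\tau^*)$ to within $\poly(\eps)/\poly(d)$ by Hoeffding) and an ERM halfspace $\hat\concept(\x) = \sign(\hat\w\cdot\x + \hat\tau)$ of zero training error. Fix a threshold $T = \Theta(\sqrt{\log(1/\eps)})$ and split on whether $\min(\hat q, 1-\hat q) < \tfrac12\Phi(-T)$.

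\emph{Large-bias case.} If $\min(\hat q,1-\hat q) < \tfrac12\Phi(-T)$, then (since $\hat q$ is accurate) $|\tau^*|\ge T$; assume $\tau^*\ge T$, the other sign being symmetric. I output the constant hypothesis $h\equiv +1$ and the selector $g = g_0$. Condition (ii) bounds the rejection rate on $\DP$ by $\eps/2$. For the test error, note $\{\copt(\x) = -1\} = \{\wopt\cdot\x < -\tau^*\}\subseteq\{(\wopt\cdot\x)^{2k} > (\tau^*)^{2k}\}$ since $\tau^*>0$, so applying condition (i) to the degree-$k$ polynomial $p(\x) = (\wopt\cdot\x)^k$ (for which $\E_{\Gauss_d}[p(\x)^2] = (2k-1)!!\le(2k)^k$) and Markov's inequality gives $\pr_{\x\sim\DQ}[\copt(\x)\neq h(\x),\, g_0(\x) = 1] \le O(1/\eps)\cdot(2k)^k/(\tau^*)^{2k} \le O(1/\eps)\cdot(2k/T^2)^k$, which is at most $\eps$ once $k$ and $T$ are chosen as above (e.g. $T^2 \ge 8k$ with $k = \log_2(1/\eps)+O(1)$).

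\emph{Small-bias case.} Otherwise $|\tau^*| = O(\sqrt{\log(1/\eps)})$, so the minority Gaussian mass is $\poly(\eps)$, and hence $\mtrain = \poly(d/\eps)\log(1/\delta)$ labeled examples suffice for the zero-error ERM solution to have Gaussian error $\eta$ small enough that the (standard) error-to-parameter-distance bound for biased halfspaces, which in this regime reads $\|\hat\w - \wopt\|_2 + |\hat\tau - \tau^*| = O(\eta/\phi(\tau^*)) = O(\eta/\poly(\eps))$, forces $\|\hat\w - \wopt\|_2 + |\hat\tau - \tau^*| \le \eps'$ for some $\eps' = \poly(\eps)/\poly(d)$. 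I then output $\hat\concept$ with the selector $g(\x) = g_0(\x)\wedge\ind\{\x\notin\disagreementregion\}$, where $\disagreementregion = \{\x : \exists\, \vv, s \text{ with } \|\vv-\hat\w\|_2\le\eps',\ |s-\hat\tau|\le\eps',\ \sign(\vv\cdot\x+s)\neq\hat\concept(\x)\} \subseteq \{\x : |\hat\w\cdot\x+\hat\tau|\le\eps'(\|\x\|_2+1)\}$ admits an efficient membership oracle and has Gaussian mass $O(\eps'\sqrt d)\le\eps/4$ for this $\eps'$ (exactly as in the homogeneous warm-up). Since $(\wopt,\tau^*)$ lies in the $\eps'$-ball around $(\hat\w,\hat\tau)$, every non-rejected $\x$ satisfies $\hat\concept(\x) = \copt(\x)$, so the test error on the selected part is $0$ and the rejection rate on $\DP$ is at most $\eps/2 + \eps/4 \le \eps$.

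The running time is dominated by the outlier-removal step, $\poly(\tfrac1\eps(kd)^k\log\tfrac1\delta) = d^{O(\log(1/\eps))}\poly(\log\tfrac1\delta)$, while ERM, the frequency estimate, and evaluating the disagreement-region oracle are all $\poly(d,1/\eps)\log\tfrac1\delta$. I expect the main obstacle to be the choice of $T$: it must be large enough that $(2k/T^2)^k$ beats $\eps^{2}$ in the large-bias case, which — since $\beta\log(1/\beta)\le 1/e$ for $\beta = 2k/T^2\in(0,1)$ — already forces $T^2 = \Omega(\log(1/\eps))$ and $k = \Omega(\log(1/\eps))$, and this is exactly where the $d^{O(\log(1/\eps))}$ runtime enters; yet $T$ must be small enough that the minority mass $\Phi(-T)$ stays $\poly(\eps)$, so ERM recovers the parameters from polynomially many samples in the small-bias case. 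The value $T = \Theta(\sqrt{\log(1/\eps)})$ threads this needle. A secondary point is making the error-to-parameter-distance bound for \emph{biased} halfspaces under the Gaussian quantitatively precise, since the relevant density scale at the decision boundary is $\phi(\tau^*)$ rather than a constant.
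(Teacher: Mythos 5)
Your proposal is correct and follows essentially the same route as the paper's proof: a case split on the bias magnitude (read off from the training label frequencies), the outlier-removal selector plus a Markov bound on $(\wopt\cdot\x)^{2k}$ with $k=\Theta(\log(1/\eps))$ in the large-bias case, and parameter recovery plus a disagreement-region selector in the small-bias case. The only differences are cosmetic — the paper invokes an explicit parameter-recovery proposition (averaging $\x y$) where you use ERM plus the standard disagreement-to-parameter-distance conversion, and it runs outlier removal only in the large-bias branch rather than always.
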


The first ingredient for \Cref{theorem:pq-halfspaces} is a result from \cite{klivans2023testable} regarding recovering the parameters of an unknown general halfspace provided labeled examples from the Gaussian distribution, which was previously used for TDS learning.

\begin{proposition}[Halfspace Parameter Recovery, Proposition 5.5 in \cite{klivans2023testable}]\label{proposition:parameter-recovery-halfspaces}
    For $\eps, \delta\in(0,1)$ and $\tau\in\R$, suppose that $\Slabelled$ consists of at least $m = \poly(d,1/\eps) e^{O(\tau^2)} \log(1/\delta)$ i.i.d. samples from $\Gauss_d$, labeled by some halfspace of the form $\coptcommon(\x) =\sign(\wopt\cdot \x + \tau^*)$, for some $\wopt\in\S^{d-1}$. Then, with probability at least $1-\delta$, for $\hat\w = \sum_{(\x,y)\in\Slabelled}\x y / \|\sum_{(\x,y)\in\Slabelled}\x y\|_2$ and $\hat\tau = \hat\w\cdot \x$ for some $\x$ from $\Slabelled$ such that $\pr_{(\x,y)\in\Slabelled}[y\neq \sign(\hat\w\cdot \x+\hat\tau)]$ is minimized, we have $\|\hat\w - \wopt\|_2 \le \eps$ and $|\hat\tau-\tau^*|\le \eps$.
\end{proposition}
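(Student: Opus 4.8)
\textbf{Proof plan for Theorem~\ref{theorem:pq-halfspaces} (PQ Learning of General Halfspaces).}

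The plan is to combine the halfspace parameter recovery result (\Cref{proposition:parameter-recovery-halfspaces}) with the outlier removal procedure (\Cref{theorem:outlier-removal-main}), using the latter to tame the test marginal in every direction. First I would dispose of the easy regime: run a guess for the bias magnitude, say iterate over thresholds $\tau$ of the form $\pm 2^j$ for $j$ up to $O(\sqrt{\log(1/\eps)})$ so that $e^{O(\tau^2)} = d^{O(\log(1/\eps))}$, and for each guess draw $m = \poly(d,1/\eps)e^{O(\tau^2)}\log(1/\delta)$ labeled training samples to invoke \Cref{proposition:parameter-recovery-halfspaces}, obtaining candidate $(\hat\w,\hat\tau)$ with $\|\hat\w-\wopt\|_2\le\eps'$ and $|\hat\tau-\tau^*|\le\eps'$ for a suitably small $\eps' = \poly(\eps)$. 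Among the guesses we keep the one whose empirical training error is smallest; since the true $\tau^*$ (rounded to the nearest power of two) is among the guesses, and since the relevant candidate satisfies the recovery guarantee, the selected $(\hat\w,\hat\tau)$ is close to the ground truth.

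The core difficulty is that, even with an accurate $(\hat\w,\hat\tau)$, the test marginal $\DQ$ can place large mass on points $\x$ where $h(\x) = \sign(\hat\w\cdot\x+\hat\tau)$ disagrees with $\copt(\x) = \sign(\wopt\cdot\x+\tau^*)$ — namely on the "slab" of points within distance $O(\eps')$ of the true separating hyperplane, which can be arbitrarily far from the origin and hence invisible to a Gaussian-concentration argument. To control this, I would apply \Cref{theorem:outlier-removal-main} with the reference distribution $\DP = \Gauss_d$, degree $k = 1$ (or a small constant), and parameters $\alpha,\eps$ set so that the guarantees give (a) $\E_{\x\sim\DQ}[(p(\x))^2 g(\x)] \le \frac{200}{\alpha}\E_{\x\sim\Gauss_d}[(p(\x))^2]$ for every affine $p$, and (b) $\pr_{\x\sim\Gauss_d}[g(\x)=0]\le \alpha\,\tv(\Gauss_d,\DQ)+\eps/2$. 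Applying (a) to the linear form $p(\x) = \vv\cdot\x$ for the (unknown) unit vector $\vv$ near $\hat\w$ shows that the selected part of $\DQ$ has second moment $O(1/\alpha)$ in every direction, so by Markov's inequality the selected mass of the slab $\{\x : |\vv\cdot\x - (-\tau^*)|\le O(\eps')\}$ is at most $O(\eps'/\alpha)\cdot$(something), which we drive below $\eps$ by choosing $\eps'$ small relative to $\alpha\eps$. Here I would need to be careful: the slab threshold involves the true $(\wopt,\tau^*)$ which we don't know exactly, but since we only know $(\hat\w,\hat\tau)$ to within $\eps'$, the disagreement region $\disagreementregion$ is contained in a union of slabs of width $O(\eps')$ around hyperplanes $\eps'$-close to $\hat\w$; each such slab's selected $\DQ$-mass is bounded by the second-moment argument applied to the corresponding direction, and one takes a union bound / covering argument over an $\eps'$-net of directions, or argues directly that $\disagreementregion$ lies in a single $O(\eps')$-slab around $\hat\w$ up to lower-order terms. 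This membership test in $\disagreementregion$ is efficient exactly as in the homogeneous warm-up.

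Putting it together: the final hypothesis is $h(\x) = \sign(\hat\w\cdot\x+\hat\tau)$ and the final selector is $g'(\x) = g(\x)\cdot\ind\{\x\notin\disagreementregion\}$ where $g$ is the outlier-removal selector. For \textbf{accuracy}, on the test distribution the event $\{y\ne h(\x)\text{ and }g'(\x)=1\}$ forces $\x\notin\disagreementregion$, hence (since $\x$ passed $g$) lies in the $O(\eps')$-slab of the selected $\DQ$, whose mass we bounded by $\eps$. For the \textbf{rejection rate}, $\pr_{\x\sim\Gauss_d}[g'(\x)=0]\le \pr_{\Gauss_d}[g(\x)=0] + \pr_{\Gauss_d}[\x\in\disagreementregion] \le \alpha\,\tv(\Gauss_d,\DQ)+\eps/2 + \poly(\eps')\sqrt{d}$; since in the realizable setting the training marginal \emph{is} exactly $\Gauss_d = \DP$ we have $\tv(\DP,\DP)=0$ for the training-side bound — wait, more carefully, condition (b) of \Cref{theorem:outlier-removal-main} bounds the rejection rate under $\DP$ by $\alpha\,\tv(\DP,\DQ)+\eps/2$, and although $\tv(\DP,\DQ)$ may be large, we can instead invoke \Cref{remark: strengtened completeness} or simply note that we only need $\pr_{\x\sim\DP}[g(\x)=0]\le\eps$, which requires choosing the outlier-removal parameters so that the $\alpha\,\tv$ term is absorbed — the correct move is that condition (b) as stated already gives the bound we want \emph{only if} $\DQ$ is $2/\alpha$-smooth, so in general we need a different argument: actually PQ learning permits the rejection rate to be $\eps$ and the $\tv$-dependent abstention on the test side is exactly the "proportional degradation" phenomenon mentioned in the introduction, so the relevant bound is $\pr_{\x\sim\DP}[g(\x)=0]\le\eps$ which follows by taking $\alpha$ constant and noting — hmm, this needs $\tv$ small. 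The clean resolution: re-read condition \ref{condition:outlier-removal-reasonable} — it bounds $\pr_{\x\sim\DP}[g(\x)=0]$ by $\alpha\tv(\DP,\DQ)+\eps/2$, and \emph{this can indeed be large}; so to get rejection rate $\eps$ we must use a version where $\alpha = \eps/(2\tv)$... but $\tv$ is unknown. The actual fix, which I would present, is that PQ's rejection-rate requirement is on $\DP$ and \Cref{theorem:outlier-removal-main}(b) with $\alpha$ chosen as a function of $\eps$ only gives rejection rate $\le O(\alpha)\cdot 1 + \eps/2$; setting $\alpha = \eps$ yields rejection rate $\le 3\eps/2 = O(\eps)$, at the cost of the second-moment bound in (a) becoming $O(1/\eps)$, which is still fine because we then need $\eps' = o(\eps^2)$ to kill the slab mass $O(\eps'/\eps)$ — and since $\eps' = \poly(\eps)$ is freely chosen and the runtime depends on $\eps'$ only polynomially in $1/\eps'$ through the recovery step and through $e^{O(\tau^2)}$, the overall runtime remains $d^{O(\log(1/\eps))}\poly(1/\eps)\log(1/\delta)$. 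Assembling these bounds with a union bound over the $O(\log(1/\eps))$ bias guesses and the failure probabilities of the recovery and outlier-removal steps completes the proof. The main obstacle is the bookkeeping of how small $\eps'$ must be relative to $\alpha$ and $\eps$ so that the selected-$\DQ$ slab mass stays below $\eps$ while the rejection rate under $\Gauss_d$ stays below $\eps$ — i.e., balancing the two competing guarantees of the outlier-removal theorem — together with verifying that the disagreement region $\disagreementregion$ is genuinely contained in an $O(\eps')$-width slab around a single direction so that one application of the second-moment bound suffices.
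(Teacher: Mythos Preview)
There is a genuine gap: your plan does not handle the large-bias regime, and this is precisely where the $d^{O(\log(1/\eps))}$ runtime comes from. Your bias-guessing scheme only attempts parameter recovery for $|\tau^*|$ up to roughly $\sqrt{\log(1/\eps)}$ (beyond which $e^{O(\tau^2)}$ exceeds the budget), but when $|\tau^*|$ is larger than this the training labels are essentially constant and \Cref{proposition:parameter-recovery-halfspaces} gives no useful information about $\wopt$. The paper's proof detects this case by checking whether one label has empirical frequency at most $\eps^{C_2}/C_1$; if so, it outputs the \emph{constant} majority label and uses outlier removal with degree $k = C_3\log(1/\eps)$ --- not degree $1$ --- to build the selector. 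The accuracy bound then comes from Markov's inequality applied to $(\wopt\cdot\x)^{2k}$: condition \ref{condition:outlier-removal-bounding} gives $\E_{\DQ}[(\wopt\cdot\x)^{2k}g(\x)]\le \frac{400}{\eps}\E_{\Gauss_d}[(\wopt\cdot\x)^{2k}]$, and dividing by $(\tau^*)^{2k}$ with $|\tau^*|\ge\Omega(\sqrt{\log(1/\eps)})$ and $k=\Theta(\log(1/\eps))$ yields a bound of the form $\frac{400}{\eps}(2C_3/C_2)^k\le\eps$. Degree $1$ would only give a $O(1/(\eps(\tau^*)^2))$ bound, which is $O(1/(\eps\log(1/\eps)))$ and nowhere near $\eps$.

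A secondary point: in the small-bias case your argument is over-complicated. Once $(\hat\w,\hat\tau)$ is $\eps'$-close to $(\wopt,\tau^*)$, the disagreement region $\disagreementregion$ by definition contains every $\x$ on which two halfspaces with parameters in the $\eps'$-ball disagree; since $(\wopt,\tau^*)$ is one such halfspace, $\x\notin\disagreementregion$ forces $h(\x)=\copt(\x)$. Hence in the realizable case the error on accepted points is exactly zero and there is no need to bound any slab mass under $\DQ$, nor to invoke outlier removal at all in this branch. The paper uses only the disagreement-region selector here, with the Gaussian rejection rate bounded by the known estimate $\pr_{\Gauss_d}[\x\in\disagreementregion]\le\poly(\eps')\sqrt{d}$.
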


Therefore, in the case when the bias $\tau^*$ of the unknown ground truth halfspace is not too large in absolute value, the selector can reject all points $\x$ for which there exist two halfspaces with parameters close to $\hat\w$ and $\hat\tau$ accordingly that disagree on $\x$ (similarly to the case of homogeneous halfspaces). Once more, such a selector can be implemented efficiently via a convex program. 

When the bias is large, in TDS learning, checking whether the first $O(\log(1/\eps))$ moments of the test marginal $\DQ$ match the corresponding Gaussian moments is sufficient to ensure that the distribution is concentrated in every direction and, therefore, even in the unknown direction of $\wopt$. In order to obtain a selective classifier for this case, we instead use \Cref{theorem:outlier-removal-main} with $k = O(\log(1/\eps))$ and ensure that the selected part of the test marginal is indeed sufficiently concentrated in every direction as required. For more details, see \Cref{appendix:pq-pq-halfspaces}.

\subsection{PQ for Classes with Low Sandwiching Degree}

The outlier removal process of \Cref{theorem:outlier-removal-main} enables one to fully control the ratios between the second moment of any low-degree polynomial under the selected part of the test marginal $\DQ$ and its second moment under the reference distribution $\DP$, since the provided bound (see condition \ref{condition:outlier-removal-bounding}) does not depend on the degree of the polynomial, but only on the target rejection rate. Combining our outlier removal process with ideas from TDS learning, we provide a general result on PQ learning classes with low $\L_2$ sandwiching degree. In particular, we require the following properties for the hypothesis class $\C$ and the training marginal $\DP$.

\begin{definition}[Reasonable Pairs of Classes and Distributions]\label{definition:reasonable-pair}
    We say that the pair $(\DP,\C)$, where $\DP$ is a distribution over $\X\subseteq\R^d$ and $\C\subseteq\{\X\to \cube{}\}$ is $(\eps,\delta,k,m)$-reasonable if
        (1) the $\eps$-$\L_2$ sandwiching degree of $\C$ under $\DP$ is at most $\degbound$ with coefficient bound $B$, 
        (2) the distribution $\DP$ is $k$-tame
        and (3) if $\Slabelled$ consists of $m'$ i.i.d. samples from some distribution $\Dgeneric$ over $\X\times\cube{}$ with marginal $\DP$ and $m'\ge m$ then, with probability at least $1-\delta$ we have that for any degree-$k$ polynomial $p$ with coefficient bound $\pbound$ it holds $|\E_{\Dgenericjoint}[(y-p(\x))^2]- \E_\Slabelled[(y-p(\x))^2]|\le {\eps}$.
\end{definition}

We obtain the following theorem which gives the first dimension-efficient results on PQ learning several fundamental concept classes with respect to standard training marginals, including intersections of halfspaces, decision trees and boolean formulas. The results work even in the agnostic setting. The algorithm runs the outlier removal process once to form the selector and runs polynomial regression on the training distribution to form the output hypothesis.

\begin{theorem}[PQ Learning via Sandwiching]\label{theorem:pq-via-sandwiching}
    For $\eps,\rejrate,\delta\in(0,1)$, let $\X\subseteq\R^d$ and $(\DP,\C)$ be an $(\frac{\eps\rejrate}{C},\frac{\delta}{C},k,m)$-reasonable pair (\Cref{definition:reasonable-pair}) for some sufficiently large universal constant $C>0$. Then, there is an algorithm that PQ learns $\C$ with respect to $\DP$ up to error $O(\frac{\optcommon}{\rejrate})+\eps$, rejection rate $\rejrate$ and probability of failure $\delta$ with sample complexity $m+\poly(\frac{1}{\rejrate}(kd)^k \log(1/\delta))$ and time complexity $\poly(\frac{m}{\rejrate}(kd)^k \log(1/\delta))$.
\end{theorem}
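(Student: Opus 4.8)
The plan is to combine \Cref{theorem:outlier-removal-main} on the test marginal with $\L_2$-sandwiching-based polynomial regression on the training distribution, using the fact that spectral boundedness after outlier removal transfers squared-$\L_2$ control of polynomials from $\DP$ to the selected part of $\DQ=\Dtestmarginal$. First I would set up the two components. On the training side: since $(\DP,\C)$ is reasonable, the $\eps'$-$\L_2$ sandwiching degree of $\C$ under $\DP$ is $k$ with coefficient bound $B$ (here $\eps' = \eps\rejrate/C$), so running $\L_1$ (or $\L_2$) polynomial regression of degree $k$ on $m$ labeled training samples produces a degree-$k$ polynomial $\hat p$ with bounded coefficients whose rounding/thresholding $h = \sign(\hat p)$ satisfies $\pr_{(\x,y)\sim\Dtrain}[y\neq h(\x)] \le \opttrain + O(\eps')$, where $\opttrain = \error(\coptcommon;\Dtrain)$ for $\coptcommon$ the concept achieving $\optcommon$. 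The standard sandwiching argument (as in \cite{klivans2023testable}) gives that the squared error of the best degree-$k$ polynomial is controlled by the sandwiching slack plus $\opttrain$, and condition (3) of reasonableness handles the generalization from the empirical to the population squared loss for coefficient-bounded polynomials; a fresh hold-out set of training samples lets us pick the best threshold for $h$.

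On the test side: run \Cref{algorithm:outlier-removal-main} (as in \Cref{theorem:outlier-removal-main}) with the training marginal $\DP$ as the reference $k$-tame distribution, the test marginal $\Dtestmarginal$ as $\DQ$, parameters $k$, $\alpha = \rejrate$ (up to constants), $\epsilon = \Theta(\eps')$, and failure probability $\delta/C$. This yields a selector $g$ with (a) $\bE_{\x\sim\DQ}[(p(\x))^2 g(\x)] \le \frac{200}{\rejrate}\ex_{\x\sim\DP}[(p(\x))^2]$ for all degree-$k$ polynomials $p$, and (b) $\pr_{\x\sim\DP}[g(\x)=0] \le \rejrate\cdot\tv(\DP,\DQ) + \eps'/2 \le \rejrate$ after rescaling constants — which is exactly the PQ rejection-rate guarantee \ref{item:rejrate-pq}. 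Output $(h,g)$.

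For the accuracy guarantee \ref{item:accuracy-pq}, I would bound $\pr_{(\x,y)\sim\Dtest}[y\neq h(\x)\text{ and }g(\x)=1]$. Write the sandwiching polynomials $\pdown \le \coptcommon \le \pup$ for $\coptcommon$ under $\DP$ with $\ex_{\x\sim\DP}[(\pup-\pdown)^2]\le\eps'$. The errors of $h=\sign(\hat p)$ decompose into three contributions on the selected region: (i) points where $\coptcommon$ itself errs on $\Dtest$, contributing at most $\error(\coptcommon;\Dtest) \le \optcommon$; (ii) points where $\hat p$ and $\coptcommon$ disagree — here I use that $\hat p$ is close to $\coptcommon$ in squared $\L_2$ over $\DP$ (from the regression guarantee, up to $\opttrain + O(\eps')$), and disagreement of signs forces $|\hat p - \coptcommon|\ge$ const or $\coptcommon$ lies in the sandwiching gap, so the $\DQ$-mass (restricted by $g$) of this event is bounded via condition (a) applied to the degree-$k$ polynomial $\hat p - \pup$ (or $\hat p - \pdown$), giving a factor $\frac{200}{\rejrate}$ times the corresponding $\DP$-quantity, which is $O(\frac{\opttrain + \eps'}{\rejrate}) = O(\frac{\optcommon}{\rejrate}) + O(\eps'/\rejrate) = O(\frac{\optcommon}{\rejrate}) + O(\eps)$ by the choice $\eps' = \eps\rejrate/C$; and (iii) the sandwiching-gap region $\{\pup(\x)\neq\pdown(\x)\}$, whose $\DQ$-mass under $g$ is at most $\frac{200}{\rejrate}\ex_{\x\sim\DP}[(\pup(\x)-\pdown(\x))^2] \le \frac{200\eps'}{\rejrate} = O(\eps)$ by condition (a) applied to the degree-$k$ polynomial $\pup - \pdown$ together with the fact that $(\pup - \pdown)^2 \ge \ind\{\pup \neq \pdown\}$ pointwise (since $\pup - \pdown$ is integer-valued on the relevant region as $\coptcommon \in \cube{}$). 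Summing, $\pr_{\Dtest}[y\neq h(\x), g(\x)=1] \le \optcommon + O(\frac{\optcommon}{\rejrate}) + O(\eps) = O(\frac{\optcommon}{\rejrate}) + \eps$ after absorbing constants. The sample and time complexities are the sum of the $m$ training samples for regression/generalization and the $\poly(\frac{1}{\rejrate}(kd)^k\log(1/\delta))$ cost of \Cref{algorithm:outlier-removal-main}, matching the claim; a union bound over the $O(1)$ failure events (each of probability $\le \delta/C$) gives total failure $\le\delta$.

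The main obstacle I expect is step (ii)–(iii): carefully turning sign-disagreement events into squared-$\L_2$ statements about degree-$k$ polynomials so that condition \ref{condition:outlier-removal-bounding} applies cleanly, and in particular making sure the polynomials whose second moments we control (differences of $\hat p$ and the sandwiching polynomials) genuinely have degree at most $k$ and that the pointwise inequality $\ind\{\pup\neq\pdown\}\le(\pup-\pdown)^2$ is valid — this uses that $\coptcommon$ is $\{\pm1\}$-valued and $\pdown\le\coptcommon\le\pup$, so on any point of disagreement the gap is at least $1$. The bookkeeping of constants (relating $\alpha,\epsilon$ in the outlier-removal theorem to $\rejrate,\eps,\eps'$ and the universal constant $C$) is routine but must be done so that both the $\rejrate$ rejection bound and the $O(\optcommon/\rejrate)+\eps$ error bound come out simultaneously.
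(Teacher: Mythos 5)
Your proposal is correct and follows essentially the same route as the paper: outlier removal on the test marginal with $\alpha\approx\rejrate$ yields the selector $g$, box-constrained degree-$k$ polynomial regression on the training set yields $h=\sign(\hat p)$, and the test error is decomposed into $\optcommon$ plus terms controlled by applying condition \ref{condition:outlier-removal-bounding} to the degree-$k$ polynomials $\hat p-\pdown$ and $\pup-\pdown$, with the $\DP$-side quantities bounded via sandwiching, condition (3) of reasonableness, and optimality of $\hat p$. The only (harmless) deviation is that you set the outlier-removal accuracy parameter to $\Theta(\eps\rejrate/C)$ where $\rejrate/2$ suffices — condition \ref{condition:outlier-removal-bounding} carries no dependence on that parameter, and the paper's choice keeps the outlier-removal sample complexity at $\poly(\frac{1}{\rejrate}(kd)^k\log(1/\delta))$ as stated.
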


\begin{proof}[Proof of \Cref{theorem:pq-via-sandwiching}]
    The algorithm forms the selector $g$ by applying the outlier removal process of \Cref{theorem:outlier-removal-main} with parameters $\alpha, \eps \leftarrow \frac{\rejrate}{2}$, $\delta \leftarrow \delta/C$ and $k\leftarrow k$. Then, we run the following box-constrained least squares problem, using at least $m$ labeled examples $\Strain$ from $\Dtrain$, where $t = d^k$ and $B$ is the value specified in \Cref{definition:reasonable-pair}.
    \begin{align*}
        &\min_{p} \E_{(\x,y)\sim\Strain}[(y-p(\x))^2] \text{ s.t. } p \text{ has degree at most } k \text{ and coefficient bound }B
    \end{align*}
    Let $\hat p$ be the solution. The algorithm returns the selector $g$ and the classifier $h(\x) = \sign(\hat p(\x))$. The rejection rate is bounded due to condition \ref{condition:outlier-removal-reasonable} in \Cref{theorem:outlier-removal-main} while the accuracy can be shown by applying condition \ref{condition:outlier-removal-bounding} as follows, where $\coptcommon\in\C$ is the concept achieving $\optcommon = \min_{\concept \in \C}(\error(\concept;\Dtrain) + \error(\concept;\Dtest))$ and $\pup,\pdown$ are the corresponding sandwiching polynomials for $\coptcommon$ (as per \Cref{definition:reasonable-pair}). We show that $\pr_{(\x,y)\sim\Dtest}[y\neq h(\x), g(\x) = 1] \le O(\frac{\optcommon}{\rejrate})+\eps$.
    \begin{align*}
        \pr_{(\x,y)\sim\Dtest}[y\neq h(\x), g(\x) = 1] &\le \pr_{(\x,y)\sim\Dtest}[y\neq \coptcommon(\x)] +  \pr_{\x\sim\DQ}[\coptcommon(\x)\neq \sign(\hat p(\x)), g(\x) = 1] \\
        &\le \optcommon + \E_{\x\sim\DQ}[(\coptcommon(\x) - \hat p(\x))^2g(x)]
    \end{align*}
    The term $\E_{\x\sim\DQ}[(\coptcommon(\x) - \hat p(\x))^2g(x)]$ can be bounded as $\E_{\x\sim\DQ}[(\coptcommon(\x) - \hat p(\x))^2g(x)] \le 2\E_{\x\sim\DQ}[(\coptcommon(\x) - \pdown(\x))^2g(x)] + 2\E_{\x\sim\DQ}[(\pdown(\x) - \hat p(\x))^2g(x)]$ and since $\pup,\pdown$ sandwich $\coptcommon$, we bound $\E_{\x\sim\DQ}[(\coptcommon(\x) - \pdown(\x))^2g(x)]$ by $\E_{\x\sim\DQ}[(\pup(\x) - \pdown(\x))^2g(x)]$.

    By applying condition \ref{condition:outlier-removal-bounding} from \Cref{theorem:outlier-removal-main}, since $(\pdown(\x)-\hat p(\x))^2$ and $(\pup(\x)-\pdown(\x))^2$ are squares of polynomials of degree $k$, we have the following for some sufficiently large constant $C'$.
    \begin{align*}
        \pr_{\Dtest}[y\neq h(\x), g(\x) = 1] &\le \optcommon + \frac{C'}{\rejrate} (\E_{\x\sim\DP}[(\pdown(\x) - \hat p(\x))^2] + \E_{\x\sim\DP}[(\pup(\x) - \pdown(\x))^2])
    \end{align*}
    The term $\frac{C'}{\rejrate}\E_{\x\sim\DP}[(\pup(\x) - \pdown(\x))^2]$ is at most $\frac{\eps}{3}$ and $\frac{C'}{\rejrate} \E_{\x\sim\DP}[(\pdown(\x) - \hat p(\x))^2]$ is bounded by $O(\frac{\optcommon}{\rejrate})+2\eps/3$, as $
        \E_{\x\sim\DP}[(\pdown(\x) - \hat p(\x))^2] \le 2\E_{\Dtrain}[(\pdown(\x) - y)^2] + 2\E_{\Dtrain}[(y - \hat p(\x))^2]
    $.
    We have that $\E_{\Dtrain}[(\pdown(\x) - y)^2] \le 2\E_{\Dtrain}[(\pdown(\x) - \coptcommon(\x))^2] + 2\E_{\Dtrain}[(y - \coptcommon(\x))^2] \le \frac{\eps\rejrate}{C} + O(\optcommon)$, due to sandwiching and the definition of $\optcommon$.
    The term $\E_{\Dtrain}[(y - \hat p(\x))^2]$ is $\frac{\eps\rejrate}{C}$-close to $\E_{\Strain}[(y - \hat p(\x))^2]$ (due to \Cref{definition:reasonable-pair}) and, since $\hat p$ is the solution of the least squares program, $\E_{\Strain}[(y - \hat p(\x))^2] \le \E_{\Strain}[(y - \pdown(\x))^2] \le \E_{\Dtrain}[(y - \pdown(\x))^2]+\frac{\eps\rejrate}{C} = O(\frac{\eps\rejrate}{C}) + O(\optcommon)$.
\end{proof}

\begin{remark}\label{remark:agnostic-pq}
    In a semi-agnostic setting where $\optcommon$ is known, then $\rejrate$ can be chosen to balance the error and rejection rates in \Cref{theorem:pq-via-sandwiching}, obtaining bounds of $O(\sqrt{\optcommon})$, which is known to be best-possible in the PQ setting, even for contrived concept classes (see \cite{goldwasser2020beyond}).
\end{remark}
By combining \Cref{theorem:pq-via-sandwiching} with bounds on the $\L_2$ sandwiching degree of fundamental concept classes by \cite{klivans2023testable} (see \Cref{appendix:sandwiching-polynomials}), we obtain the results of \Cref{tbl:results-main} for PQ learning.


\section{Tolerant TDS Learning}


Another approach to provide provable learning guarantees with distribution shift is to enable rejecting the whole test distribution. The formal definition of this setting was given by \cite{klivans2023testable}, but the proposed algorithms were allowed to reject even if a miniscule amount of distribution shift was detected. We provide the first TDS learners that are guaranteed to accept whenever the test marginal $\DQ$ is close to the training marginal $\DP$ in total variation distance (see \Cref{definition:total-variation-distance}).

\begin{definition}[Tolerant TDS Learning, extension of \cite{klivans2023testable}]
\label{definition:tolerant-tds-learning}
    Let $\C$ be a concept class over $\X\subseteq \R^d$ and $\DP$ a distribution over $\X$. The algorithm $\A$ is a TDS-learner for $\C$ with respect to $\DP$ up to error $\accuracy$ (which is, in general a function of parameter $\optcommon$), tolerance $\toler$ and probability of failure $\delta$ if, upon receiving $\mtrain$ labeled samples from a training distribution $\Dtrain$ with $\X$-marginal $\DP$ and $\mtest$ unlabeled samples from a test distribution $\Dtest$, algorithm $\A$ either rejects or accepts and outputs a hypothesis $h:\X\to \cube{}$ such that, with probability at least $1-\delta$, the following hold:
    \begin{enumerate}[label=\textnormal{(}\alph*\textnormal{)},leftmargin=6ex]
    \item\label{item:soundness-tds} (\textit{soundness}) Upon acceptance, the test error is bounded as $\pr_{(\x,y)\sim \Dtest}[y \neq h(\x)] \le \accuracy$.
    \item\label{item:completeness-tds} (\textit{completeness}) 
    If $\tv(\DP,\Dtestmarginal)\le \toler$, then the algorithm accepts.
    \end{enumerate}
\end{definition}

We first observe that tolerant TDS learning is implied by PQ learning. 

\begin{proposition}[PQ implies Tolerant TDS Learning, modification of Proposition 56 in \cite{klivans2023testable}]\label{proposition:pq-implies-tds}
    Suppose that there is an algorithm $\A$ that PQ learns the class $\C$ with respect to $\DP$ up to error $\accuracy$, rejection rate $\rejrate$ and failure probability $\delta$. Then, for any $\eps,\toler\in(0,1)$ there is an algorithm that TDS learns $\C$ with respect to $\DP$ up to error $\accuracy+\rejrate+\toler+\eps$, with tolerance $\toler$ and failure probability $\delta$ that calls $\A$ once and uses $O(\frac{1}{\eps^2}\log(1/\delta))$ additional samples and evaluations of the selector given by $\A$.
\end{proposition}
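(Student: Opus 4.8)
The plan is the obvious reduction: run the PQ learner $\A$ on the supplied samples to obtain a hypothesis $h$ and a selector $g$, then use fresh \emph{unlabeled} test points to estimate how often $g$ abstains under the test marginal, and accept exactly when this abstention rate is not much larger than what the training marginal already forces.

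\textbf{Construction.} First I would run $\A$ with failure probability set to $\delta/2$, so that with probability at least $1-\delta/2$ it returns $h$ and $g$ with
\[
\pr_{(\x,y)\sim\Dtest}[y\neq h(\x)\text{ and }g(\x)=1]\le\accuracy
\qquad\text{and}\qquad
\pr_{\x\sim\DP}[g(\x)=0]\le\rejrate .
\]
Next, drawing $m=O(\tfrac{1}{\eps^2}\log(1/\delta))$ fresh unlabeled samples from $\Dtestmarginal$ and evaluating the (efficiently computable) selector $g$ on each, I would form the empirical frequency $\hat\rejrate$ of the event $\{g(\x)=0\}$; by Hoeffding's inequality $|\hat\rejrate-q|\le\eps/2$ with probability at least $1-\delta/2$, where $q:=\pr_{\x\sim\Dtestmarginal}[g(\x)=0]$. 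The TDS learner \emph{accepts} and outputs $h$ if $\hat\rejrate\le\rejrate+\toler+\eps/2$, and \emph{rejects} otherwise. This calls $\A$ once and uses $O(\tfrac{1}{\eps^2}\log(1/\delta))$ extra samples and selector evaluations, as claimed.

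\textbf{Completeness.} If $\tv(\DP,\Dtestmarginal)\le\toler$, then by the definition of total variation distance $q=\pr_{\x\sim\Dtestmarginal}[g(\x)=0]\le\pr_{\x\sim\DP}[g(\x)=0]+\tv(\DP,\Dtestmarginal)\le\rejrate+\toler$, so on the good estimation event $\hat\rejrate\le q+\eps/2\le\rejrate+\toler+\eps/2$ and the algorithm accepts. \textbf{Soundness.} Conditioned on acceptance and the good estimation event, $q\le\hat\rejrate+\eps/2\le\rejrate+\toler+\eps$, and splitting the error event on the value of $g$ gives
\[
\pr_{(\x,y)\sim\Dtest}[y\neq h(\x)]\le\pr_{(\x,y)\sim\Dtest}[y\neq h(\x),\,g(\x)=1]+\pr_{\x\sim\Dtestmarginal}[g(\x)=0]\le\accuracy+\rejrate+\toler+\eps .
\]
A union bound over the failure events of the call to $\A$ and of the concentration step shows everything holds with probability at least $1-\delta$.

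\textbf{Main obstacle.} There is essentially no obstacle; the argument is elementary. The only points needing care are (i) splitting the failure-probability budget evenly between the single call to $\A$ and the concentration bound for $\hat\rejrate$, and (ii) the fact that the selector $g$ produced by $\A$ comes with an efficient membership oracle, so that $\hat\rejrate$ is actually computable — this is exactly what the PQ learning definition guarantees, since it outputs $g$ as an explicit function $\X\to\{0,1\}$.
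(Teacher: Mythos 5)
Your proposal is correct and follows essentially the same route as the paper: the paper's proof also runs the PQ learner once, empirically estimates $\pr_{\x\sim\Dtestmarginal}[g(\x)=0]$ on $O(\frac{1}{\eps^2}\log(1/\delta))$ fresh unlabeled test samples, rejects when that estimate exceeds $\rejrate+\toler+\Omega(\eps)$, and otherwise outputs $h$, using exactly your two observations (the TV-distance bound on the abstention rate under $\Dtestmarginal$ and the split of the error event on the value of $g$). No gaps.
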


The above result readily follows by two simple observations about the selector $g$ and the hypothesis $h$ in the output of a PQ learner. In particular, we have $\pr_{\x\sim\Dtestmarginal}[g(\x)=0] \le \pr_{\x\sim\DP}[g(\x)=0] + \tv(\DP,\Dtestmarginal) \le \rejrate + \tv(\DP,\Dtestmarginal)$ and $\error(h;\Dtest) \le \pr_{\x\sim\Dtestmarginal}[g(\x)=0]+ \pr_{(\x,y)\sim\Dtest}[y\neq h(\x), g(\x)=1]$. The TDS learner will reject if the empirical estimate of $\pr_{\x\sim\Dtestmarginal}[g(\x)=0]$ is larger than $\rejrate +\toler + \Omega(\eps)$; otherwise it will output $h$.

Proposition \ref{proposition:pq-implies-tds} allows us to conclude that the realizable tolerant TDS learning algorithm in Table \ref{table:summary} follows from the PQ learning algorithm of \Cref{theorem:pq-halfspaces}. 

In the agnostic setting, however, the error rate of PQ learning is known to be necessarily high (i.e., $\Omega(\sqrt{\optcommon})$) even for very simple classes (see \Cref{remark:agnostic-pq}). Therefore, the corresponding TDS learning results implied by \Cref{proposition:pq-implies-tds} do not achieve the optimum error rate for the case of TDS learning (i.e. $\Theta(\optcommon)$). Nevertheless, we are able to use, once more, our outlier removal process directly and obtain the following analogue of \Cref{theorem:pq-via-sandwiching} for tolerant TDS learning.

\begin{theorem}[Tolerant TDS Learning via Sandwiching]\label{theorem:tolerant-tds-via-sandwiching}
    For $\eps,\toler,\delta\in(0,1)$, let $\X\subseteq\R^d$ and $(\DP,\C)$ be an $(\frac{\eps}{C},\frac{\delta}{C},k,m)$-reasonable pair (\Cref{definition:reasonable-pair}) for some sufficiently large universal constant $C>0$. Then, there is an TDS learner $\C$ with respect to $\DP$ up to error $O({\optcommon})+2\toler+\eps$, tolerance $\toler$ and probability of failure $\delta$ with sample 
    and time complexity $\poly(\frac{m}{\eps}(kd)^k \log(1/\delta))$.
\end{theorem}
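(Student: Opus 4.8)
The plan is to reuse the construction from the proof of \Cref{theorem:pq-via-sandwiching} with two modifications: run the outlier-removal procedure of \Cref{theorem:outlier-removal-main} with a \emph{constant} value of $\alpha$ rather than one that scales with a rejection-rate parameter, and convert the pointwise selector in the PQ output into a single global accept/reject decision. Concretely, the algorithm would (i) run \Cref{algorithm:outlier-removal-main} on the $\mtest$ unlabeled test points (treated as draws from $\DQ = \Dtestmarginal$) together with fresh draws from $\DP$, with parameters $\alpha \leftarrow 1$, accuracy $\eps/C'$, failure probability $\delta/C'$, and degree $k$, producing a selector $g$; (ii) solve the same box-constrained, degree-$k$, coefficient-bound-$B$ least-squares problem as in \Cref{theorem:pq-via-sandwiching} on at least $m$ labeled training examples to obtain $\hat p$, and set $h = \sign(\hat p)$; (iii) estimate $\pr_{\x\sim\Dtestmarginal}[g(\x) = 0]$ to additive error $\eps/C'$ using $O(\eps^{-2}\log\frac1\delta)$ additional unlabeled test points, \textbf{reject} if this estimate exceeds $2\toler + \Theta(\eps)$, and otherwise \textbf{accept} and output $h$. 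This mirrors \Cref{proposition:pq-implies-tds} but invokes the outlier-removal step directly, which is what lets us avoid the $1/\rejrate$ loss inherent in going through PQ.

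For \textbf{soundness} I would use the split $\error(h;\Dtest) \le \pr_{\x\sim\Dtestmarginal}[g(\x) = 0] + \pr_{(\x,y)\sim\Dtest}[y \neq h(\x),\, g(\x) = 1]$. The second term is bounded verbatim as in \Cref{theorem:pq-via-sandwiching}: write it as $\le \optcommon + \E_{\x\sim\DQ}[(\coptcommon(\x) - \hat p(\x))^2 g(\x)]$ for the $\optcommon$-achieving concept $\coptcommon$ with sandwiching pair $\pdown \le \coptcommon \le \pup$, bound $(\coptcommon - \hat p)^2 \le 2(\pup - \pdown)^2 + 2(\pdown - \hat p)^2$, apply condition \ref{condition:outlier-removal-bounding} (which, with $\alpha$ constant, costs only an $O(1)$ factor rather than $1/\rejrate$), and close with $\E_{\DP}[(\pup - \pdown)^2] \le \eps/C'$ and the same least-squares-optimality and reasonable-pair generalization argument used there, giving $O(\optcommon) + O(\eps/C')$. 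Since conditions \ref{condition:outlier-removal-bounding} and \ref{condition:outlier-removal-reasonable} hold for an arbitrary $\DQ$, this is valid even against an adversarial $\Dtest$. The first term is at most $2\toler + \Theta(\eps)$ on any accepting run, by converting the empirical estimate back to the truth via Hoeffding. Choosing $C'$ (and hence $C$) large enough consolidates every $\eps/C'$ and sampling slack into a single $\eps$, yielding $\error(h;\Dtest) \le O(\optcommon) + 2\toler + \eps$.

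For \textbf{completeness}, suppose $\tv(\DP, \Dtestmarginal) \le \toler$. By the remark following \Cref{theorem:outlier-removal-main}, with probability $1 - \delta/C'$ we have $\pr_{\x\sim\Dtestmarginal}[g(\x) = 0] \le (1+\alpha)\tv(\DP, \Dtestmarginal) + \eps/(2C') = 2\toler + \eps/(2C')$, so the empirical estimate stays below the rejection threshold $2\toler + \Theta(\eps)$ with probability $1 - \delta/C'$ and the algorithm accepts. A union bound over "outlier removal succeeds," "the least-squares hypothesis generalizes (condition (3) of \Cref{definition:reasonable-pair})," and "all empirical estimates are accurate" keeps total failure probability at most $\delta$. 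The sample and time complexity are $m + \poly(\frac1\eps (kd)^k \log\frac1\delta) + O(\eps^{-2}\log\frac1\delta) = \poly(\frac{m}{\eps}(kd)^k \log\frac1\delta)$, dominated by the training-sample requirement and \Cref{algorithm:outlier-removal-main}.

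I expect no individual step to be difficult; the real work is the \emph{constant bookkeeping} — the final error $\eps$ must absorb the outlier-removal accuracy parameter (which enters once directly through the abstention term and once scaled by the $O(1)$ factor of condition \ref{condition:outlier-removal-bounding}), the $\L_2$-sandwiching error, the reasonable-pair generalization error, and the Hoeffding slack, all while the single threshold $2\toler + \Theta(\eps)$ is simultaneously loose enough for completeness (using the $(1+\alpha)\tv + \eps/2$ bound) and tight enough for soundness. This is exactly what forces the $(\eps/C,\delta/C,k,m)$-reasonable requirement and is routine once the dependencies are tracked. The one substantive design choice is taking $\alpha = \Theta(1)$: this replaces PQ's $1/\rejrate$ amplification of the sandwiching error by an absolute constant — which is what upgrades the achievable error from $O(\optcommon/\rejrate)$ to the TDS-optimal $O(\optcommon)$ — at the mild cost that the completeness-side factor $(1+\alpha)$ contributes $2\toler$, rather than $\toler$, to the error bound.
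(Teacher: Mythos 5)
Your proposal is correct and follows essentially the same route as the paper's proof: the paper likewise runs the outlier-removal procedure with $\alpha \leftarrow 1$, rejects when the empirical estimate of $\pr_{\x\sim\Dtestmarginal}[g(\x)=0]$ exceeds $2\toler + \Theta(\eps)$, otherwise runs the same box-constrained least squares on training data, and proves soundness via the identical split $\error(h;\Dtest) \le \pr[g(\x)=0] + \pr[y\neq h(\x),\, g(\x)=1]$ with the second term handled exactly as in \Cref{theorem:pq-via-sandwiching} and completeness via $\pr_{\Dtestmarginal}[g(\x)=0] \le \pr_{\DP}[g(\x)=0] + \tv(\DP,\Dtestmarginal) \le 2\toler + \eps/C$. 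Your identification of $\alpha=\Theta(1)$ as the one substantive design choice (trading the $1/\rejrate$ loss for the factor-$2$ in the $2\toler$ term) matches the paper's reasoning precisely.
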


Furthermore, (via Remark \ref{remark: strengtened completeness}) we show that our tolerant TDS learning algorithm will with high probability be guaranteed to accept a distribution $\DQ$ that is $1/2$-smooth with respect to $\D$. 

For the full proof, see \Cref{appendix:tolerant-tds}.
As a corollary of \Cref{theorem:tolerant-tds-via-sandwiching}, we obtain the results of \Cref{tbl:results-main} for tolerant TDS learning.

\noindent\textbf{Limitations, Broader Impacts, and Future Work.} Our current results hold only for a limited class of training marginal distributions (e.g., standard Gaussian or uniform over the hypercube). We leave it as an interesting open question to relax these distributional assumptions, as well as expand the completeness criterion for our tolerant TDS learning algorithms to accept when the test marginal is close to any distribution among the members of some wide class of well-behaved distributions (i.e., satisfy the universality condition as defined by \cite{gollakota2023tester}). Additionally, we would like to point out that rejecting to predict on certain distributions may lead to unfair or biased predictions if the distributions overlap significantly with the minority groups.

\bibliographystyle{alpha}
\bibliography{main}


\appendix

\section{Extended Preliminaries}\label{appendix:preliminaries}

\subsection{Notation}

We use $\x,\w,\vv$ to denote vectors in the $d$-dimensional Euclidean space $\R^d$. For some distribution $\DP$ and a function $f$, we denote with $\E_{\x\sim \DP}[f(\x)]$ the expectation of the random variable $f(\x)$ when $\x$ is drawn from $\DP$. For a set of points $\Sunlabelled$, we use a similar notation for the empirical expectations over $\Sunlabelled$, i.e., $\E_{\x\sim\Sunlabelled}[f(\x)] = \frac{1}{|\Sunlabelled|}\sum_{\x\in\Sunlabelled}f(\x)$. In our paper, $\X\subseteq\R^d$ will either be $\R^d$ or the hypercube $\cube{d}$. We denote with $\nats$ the set of natural numbers $\nats = \{0,1,2,\dots\}$. The expression $\x\cdot \w$ or $\x^\top\w$ denotes the inner product between two vectors, i.e., $\x\cdot\w = \sum_{i=1}^d x_iw_i$ (where $x_i$ is the value of the $i$-th coordinate of $\x$).

\subsection{Polynomials}
Throughout this work, we will refer to polynomials whose degree is at most $k$ as ``degree-$k$ polynomials'' for brevity.
We will identify every degree-$k$ polynomial $p$ with the vector of its coefficients. Furthermore, for a vector $\x$ in $\bR^d$ we will denote $\x^{\otimes k}$ the vector whose entries correspond to the values of all monomials of degree at most $k$ evaluated on $\x$. Both the vector corresponding to a degree-$k$ polynomial and the vector $\x^{\otimes k}$ have dimension $m$, where $m$ is the number of distinct monomials on $\bR^d$ of degree at most $k$. Note that $m\leq d^k$ and that with this notation in hand we have $p(\x)=p \cdot (\x^{\otimes k}) = p^\top \x^{\otimes k}$.

\subsection{Learning theory}
We will usually consider function classes over $\bR^d$ taking values in $\{\pm 1\}$ or in $\{0,1\}$. 
Consider the following definitions:
\begin{definition}
	A halfspace over $\bR^d$ is a function mapping $\x$ in $\bR^d$ to $\sign(\w \cdot \x+\theta)$ for some $\w$ in $\S^{d-1}$ and $\theta$ in $\bR$.
\end{definition}
\begin{definition}
	A degree-$k$ polynomial threshold function (PTF) over $\bR^d$ is a function mapping $\x$ in $\bR^d$ to $\sign(p(\x))$ for some degree-$k$ polynomial $p$.
\end{definition}
\begin{definition}
	The OR of a collection of function classes $\mathcal{F}_1 \lor \cdots \lor \mathcal{F}_m$, is defined as the collection of functions $f$ defined as $f(\x)=f_1(\x) \lor \cdots \lor f_m(\x)$ for each $f_i$ belonging to $\mathcal{F}_i$ respectively. Analogously, the AND of the collection of function classes $\mathcal{F}_1 \land \cdots \land \mathcal{F}_m$, is defined as the collection of functions $f$ defined as $f(\x)=f_1(\x) \land \cdots \land f_m(\x)$ for each $f_i$ belonging to $\mathcal{F}_i$ respectively.
\end{definition}

The following facts about VC dimensions of various classes are standard:
\begin{fact}
	\label{fact: VC dimension of PTFs}
	The VC dimension of halfspaces over $\bR^d$ is at most $d+1$, and the VC dimensions of degree-$k$ polynomial threshold functions is at most $d^k+1$.
\end{fact}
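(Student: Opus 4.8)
The plan is to reduce both VC bounds to the classical statement that the homogeneous linear threshold functions $\{\x\mapsto\sign(\vv\cdot\x):\vv\in\R^m\}$ over $\R^m$ have VC dimension at most $m$, feeding in two elementary feature maps: the affine lift $\x\mapsto(\x,1)$ for halfspaces and the monomial map $\x\mapsto\x^{\otimes k}$ for polynomial threshold functions.

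For halfspaces I would argue directly that no $(d+2)$-point set is shattered. Given $\z_1,\dots,\z_{d+2}\in\R^d$, Radon's theorem splits them into nonempty $A,B$ with a common point $\p\in\conv(A)\cap\conv(B)$. If a halfspace $\sign(\w\cdot\x+\theta)$ labeled all of $A$ one way and all of $B$ the other, then $A$ would sit inside one of the convex halfspaces $\{\x:\w\cdot\x+\theta\ge 0\}$ or $\{\x:\w\cdot\x+\theta>0\}$ and $B$ inside the complementary (also convex) one; since the convex hull of a subset of a convex set stays inside it, $\p$ would belong to a halfspace and to its disjoint complement — impossible. So this dichotomy is unrealizable, hence $\vc(\text{halfspaces over }\R^d)\le d+1$. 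The $\sign(0)$ convention plays no role, since the argument only needs that of the two complementary pieces one is open and one is closed.

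For degree-$k$ PTFs, write a degree-$k$ polynomial as $p(\x)=\p^\top\x^{\otimes k}$ with $\x^{\otimes k}\in\R^{d^k}$ the monomial vector from the Preliminaries; then $\sign(p(\x))=\sign(\p\cdot\x^{\otimes k})$, so every PTF is a homogeneous halfspace over $\R^{d^k}$ precomposed with $\x\mapsto\x^{\otimes k}$. Therefore a set $\{\x_1,\dots,\x_n\}$ shattered by degree-$k$ PTFs maps to a set $\{\x_1^{\otimes k},\dots,\x_n^{\otimes k}\}$ shattered by homogeneous halfspaces over $\R^{d^k}$, and it remains to check that homogeneous halfspaces over $\R^m$ have VC dimension at most $m$. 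For this, any $m+1$ vectors $\z_i$ obey a nontrivial dependence $\sum_i a_i\z_i=0$, and the labeling assigning $+1$ to $\{i:a_i>0\}$ and $-1$ to $\{i:a_i<0\}$ (or its flip, chosen so that the side forced into a strict inequality is nonempty) is unrealizable: a realizing $\vv$ would give $0=\vv\cdot\sum_i a_i\z_i=\sum_i a_i(\vv\cdot\z_i)>0$. Taking $m=d^k$ then yields $\vc(\text{degree-}k\text{ PTFs})\le d^k+1$ (in fact $\le d^k$).

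I do not anticipate a genuine obstacle: this is a textbook fact, and the only points needing a moment's care are the $\sign(0)$ boundary (sidestepped by keeping one complementary halfspace open and one closed, or by flipping the witness labeling) and the routine bookkeeping of the monomial feature dimension; everything else is immediate from the reduction to homogeneous halfspaces.
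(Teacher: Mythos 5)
Your proof is correct. The paper states this as a standard fact and supplies no proof of its own, so there is nothing to compare against; your two arguments are the canonical ones and both check out. The Radon-partition argument for halfspaces is airtight (the only delicate point, the $\sign(0)$ convention, is handled correctly by observing that one of the two complementary halfspaces is closed and the other open, both convex and disjoint). The PTF half correctly reduces to homogeneous linear threshold functions over $\R^m$ via the monomial feature map $\x\mapsto\x^{\otimes k}$ (which, per the paper's convention, includes the constant monomial and has length at most $d^k$), and the linear-dependence argument, with the flip ensuring the strictly-negative side is nonempty, gives the bound $m\le d^k$, which is even slightly stronger than the stated $d^k+1$.
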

\begin{fact}[e.g. 
	\cite{van2009note} and references therein.]
	\label{fact: VC dimension of ands and ors}
	Let $\{\mathcal{F}_1,\cdots, \mathcal{F}_m\}$  be a collection of function classes each of which has a VC dimension of $V$. Then, the VC dimension of $\mathcal{F}_1 \lor \cdots \lor \mathcal{F}_{m}$ and $\mathcal{F}_1 \land \cdots \land \mathcal{F}_m$ are at most $O(V m \log m)$.
\end{fact}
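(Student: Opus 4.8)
The statement to prove is Fact~\ref{fact: VC dimension of ands and ors}: if $\mathcal{F}_1,\dots,\mathcal{F}_m$ each have VC dimension at most $V$, then $\mathcal{F}_1 \lor \cdots \lor \mathcal{F}_m$ and $\mathcal{F}_1 \land \cdots \land \mathcal{F}_m$ have VC dimension $O(Vm\log m)$. The plan is to use the Sauer--Shelah lemma together with a counting argument. First I would recall that by Sauer--Shelah, for any set of $n$ points, each class $\mathcal{F}_i$ induces at most $\sum_{j=0}^{V}\binom{n}{j} \le (en/V)^V$ distinct labelings (dichotomies) on those points, provided $n \ge V$. Since every function in $\mathcal{F}_1 \lor \cdots \lor \mathcal{F}_m$ is determined, as a labeling on the $n$ points, by the tuple of labelings of its constituent $f_i$'s (the OR of the bit-vectors is a function of the individual bit-vectors), the number of dichotomies induced by the OR-class on $n$ points is at most $\big((en/V)^V\big)^m = (en/V)^{Vm}$. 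The same bound holds verbatim for the AND-class.

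Next I would invoke the definition of VC dimension: if the OR-class shatters a set of size $n$, then it must induce all $2^n$ dichotomies on that set, so we need $2^n \le (en/V)^{Vm}$, i.e. $n \le Vm \log_2(en/V)$. The final step is to solve this inequality for $n$: it is a standard fact that $n \le a \log_2(bn)$ (for appropriate constants $a = Vm$, $b = e/V$) forces $n = O(a \log(ab)) = O(Vm \log m)$ — one can see this by checking that for $n = C\,Vm\log m$ with $C$ a large enough absolute constant, the right-hand side is strictly smaller than $n$, hence no such $n$ can be shattered. (If $V \le 1$ the bound is immediate; otherwise $\log(e/V) \le 1$ and the $\log m$ term dominates once $m \ge 2$, with the $m=1$ case being trivial since then the class is just $\mathcal{F}_1$.)

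The only mildly delicate point — and the place I would be most careful — is the transcendental-inequality manipulation at the end: one must verify that the $\log$ of the bound $(en/V)^{Vm}$ grows slowly enough in $n$ that the inequality $2^n \le (en/V)^{Vm}$ genuinely caps $n$ at $O(Vm\log m)$, rather than at something like $O(Vm\log(Vm))$. In fact $O(Vm\log(Vm))$ is also a correct and standard way to state the bound, and the cited references (e.g.~\cite{van2009note}) typically phrase it that way or absorb $\log V$ into constants; since $\log(Vm) = \log V + \log m$ and the $\log V$ factor only helps shrink Sauer--Shelah's $\binom{n}{j}$ count, the $O(Vm\log m)$ form follows. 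I would present the argument in the form: bound the number of dichotomies, equate with $2^n$ for a shattered set, take logarithms, and solve — citing the standard lemma that $x \le a + b\log x \implies x = O(a + b\log b)$ to close it out cleanly.
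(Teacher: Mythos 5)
The paper states this as a cited Fact and gives no proof of its own, so there is nothing to compare against; your argument is the standard one (bound the growth function of the combined class by the product $\left((en/V)^V\right)^m$ of the individual Sauer--Shelah bounds, then solve $2^n \le (en/V)^{Vm}$ for a shattered set), and it is correct, including the observation that the $V$ in $en/V$ cancels against the $V$ in $n \approx Vm\log m$ so that the bound is genuinely $O(Vm\log m)$ rather than $O(Vm\log(Vm))$. The only cosmetic caveat is the usual convention that $\log m$ should be read as $\max(1,\log m)$ so the statement is not vacuous at $m=1$, which you already noted.
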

We will also need the standard uniform convergence bound for function classes of bounded VC dimension.
\begin{fact}
	\label{fact: VC dimension implies uniform convergence}
	Let $\mathcal{F}$ be a function class over $\bR^d$ taking values in $\{\pm 1\}$ with VC dimension at most $V$, let $D$ be a distribution over $\bR^d \times \{\pm 1\}$, and let $S\subset \bR^d \times \{\pm 1\}$ be composed of $N$ i.i.d. examples from $D$. Then, with probability at least $1-\delta$ we have 
	\[
	\sup_{f \in \mathcal{F}} 
	\left[ 
	\left\lvert
	\frac{1}{N}
	\sum_{(\x,y)\in S}
	[|f(\x)-y|]
	-\bE_{(\x,y)\sim D}
	[|f(\x)-y|]
	\right\rvert
	\right]
	\leq \sqrt{\frac{V \log N}{N} \log \frac{1}{\delta}}.
	\]
	The same statement also true if $\mathcal{F}$ is taking values in $\{0,1\}$ and   $D$ be a distribution over $\bR^d \times \{0,1\}$.
\end{fact}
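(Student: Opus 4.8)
The statement is the classical Vapnik--Chervonenkis uniform convergence theorem for the absolute (equivalently $0/1$) loss, so the plan is to run the standard symmetrization argument. First I would reduce to binary classification: since $f(\x),y\in\{\pm1\}$ we have $|f(\x)-y| = 2\,\ind\{f(\x)\neq y\}$, so writing $\hat L_S(f) = \frac1N\sum_{(\x,y)\in S}|f(\x)-y|$, $L(f)=\E_{(\x,y)\sim D}[|f(\x)-y|]$ and $Z(S) := \sup_{f\in\mathcal{F}}|\hat L_S(f)-L(f)|$, each summand lies in $\{0,2\}$ (and in $\{0,1\}$ in the $\{0,1\}$-valued case, handled identically with the bound $1$ replacing $2$ throughout). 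Note that, viewed as a function of the $N$ i.i.d.\ examples, $Z(S)$ changes by at most $2/N$ when one example is altered.

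Next I would bound $\E_S[Z(S)]$ by symmetrization: introducing a ghost sample of $N$ fresh i.i.d.\ examples and i.i.d.\ Rademacher signs $\sigma_1,\dots,\sigma_N$ yields the standard inequality $\E_S[Z(S)] \le 2\,\E_{S,\sigma}\big[\sup_{f\in\mathcal{F}} \frac1N\sum_{i=1}^N \sigma_i\,|f(\x_i)-y_i|\big]$. For a fixed sample, the vectors $\big(|f(\x_1)-y_1|,\dots,|f(\x_N)-y_N|\big)_{f\in\mathcal{F}}$ take at most $\Pi_\mathcal{F}(N)$ distinct values, where $\Pi_\mathcal{F}(N)$ is the growth function of $\mathcal{F}$ (each $|f(\x_i)-y_i|$ is determined by the sign pattern $(f(\x_1),\dots,f(\x_N))$ together with the fixed labels). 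By the Sauer--Shelah lemma $\Pi_\mathcal{F}(N)\le (eN/V)^V$, and every coordinate is bounded by $2$, so Massart's finite-class lemma gives $\E_{S,\sigma}\big[\sup_{f}\frac1N\sum_i\sigma_i|f(\x_i)-y_i|\big] \le \frac{2\sqrt{2\log \Pi_\mathcal{F}(N)}}{\sqrt N} = O\!\big(\sqrt{V\log N/N}\big)$, hence $\E_S[Z(S)] = O(\sqrt{V\log N/N})$.

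Finally I would upgrade this to a high-probability bound via McDiarmid's bounded-differences inequality, using the $2/N$ Lipschitz property: with probability at least $1-\delta$, $Z(S)\le \E_S[Z(S)] + \sqrt{2\log(1/\delta)/N} = O\!\big(\sqrt{V\log N/N}\big) + O\!\big(\sqrt{\log(1/\delta)/N}\big)$. To collapse these two terms into the product form $\sqrt{(V\log N/N)\log(1/\delta)}$ written in the statement, I would set $u=\sqrt{V\log N}$, $v=\sqrt{\log(1/\delta)}$ and observe that $c_1 u + c_2 v \le uv$ once $u\ge 2c_2$ and $v\ge 2c_1$, i.e.\ for $N$ above a fixed constant and $\delta$ below a fixed constant — the only regime in which the claimed bound is not vacuous (it already fails for $N=1$). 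The $\{0,1\}$-valued variant is the same computation verbatim.

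The step requiring the most care will not be any single one — all three are textbook — but rather the bookkeeping: tracking the symmetrization factor and the Sauer--Shelah exponent through the Massart bound, and reconciling the additive two-term tail that the argument naturally produces with the multiplicative form $\sqrt{(V\log N/N)\log(1/\delta)}$ appearing in the statement, which only holds up to absorbing universal constants and in the nontrivial parameter regime.
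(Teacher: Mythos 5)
The paper states this as a standard Fact and gives no proof, so there is nothing to compare against; your argument is the canonical VC uniform-convergence proof (symmetrization, Sauer--Shelah with Massart's lemma, then McDiarmid) and is correct. Your closing observation is also right: the additive two-term tail the argument produces only implies the multiplicative form $\sqrt{(V\log N/N)\log(1/\delta)}$ after absorbing universal constants and restricting to the regime where $V\log N$ and $\log(1/\delta)$ exceed fixed constants, which is the implicit reading of the Fact as the paper uses it.
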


\subsection{Properties of Distributions}\label{appendix:tame-distributions}

We denote with $\DP,\DQ$ distributions over $\X\subseteq\R^d$. For two distributions $\DP,\DQ$, the total variation distance between them is defined as follows.

\begin{definition}\label{definition:total-variation-distance}
    Let $\DP,\DQ$ be distributions over $\X\subseteq\R^d$ (for some $\sigma$-algebra $\borel\subseteq \powset(\R^d)$). Then, the total variation distance between $\DP$ and $\DQ$ is
    \[
        \tv(\DP,\DQ) = \sup_{A\in\borel} \Bigr|\pr_{\x\sim\DP}[\x\in A]-\pr_{\x\sim\DQ}[\x\in A]\Bigr|
    \]
\end{definition}

We define the family of tame distributions as follows.

\begin{definition}\label{definition:tame-distributions}
	A distribution $\DP$ over $\X$ is $k$-tame if for every degree-$k$ polynomial $p$ over $\X$ with $\ex_{\x \sim \DP}[(p(\x))^2]\leq 1$ and every $B$ greater than $e^{2k}$ we have $\pr_{\x \sim \DP}[(p(\x))^2>B]\leq e^{-\Omega(B^{\frac{1}{2k}})}$.
\end{definition}
It is known that the standard Gaussian distribution, any log-concave distribution, as well as the uniform distribution over the hypercube $\cube{d}$ are tame (see \cite{diakonikolas2018learning} and references therein).
\begin{fact}
	Let $\DP$ be either the standard Gaussian distribution $\N$ or the uniform distribution over $\{\pm 1\}^d$, then, then $\DP$ is $k$-tame for any $k\in\nats$. 
\end{fact}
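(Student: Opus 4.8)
The statement to prove is that the standard Gaussian $\N$ and the uniform distribution over $\{\pm 1\}^d$ are $k$-tame for every $k \in \nats$, i.e., for every degree-$k$ polynomial $p$ with $\ex_{\x\sim\DP}[p(\x)^2] \le 1$ and every $B > e^{2k}$, we have $\pr_{\x\sim\DP}[p(\x)^2 > B] \le e^{-\Omega(B^{1/(2k)})}$.

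The plan is to reduce this to standard hypercontractivity estimates for low-degree polynomials under product measures. First I would recall the key moment bound: if $\DP$ is either $\N$ or $\unif(\cube{d})$, then for any degree-$k$ polynomial $p$ and any even integer $q \ge 2$, the hypercontractive inequality gives $\ex_{\x\sim\DP}[|p(\x)|^q]^{1/q} \le (q-1)^{k/2} \ex_{\x\sim\DP}[p(\x)^2]^{1/2} \le (q-1)^{k/2}$ (this is the $(2,q)$-hypercontractivity / Gaussian hypercontractivity statement, and its hypercube analogue follows from the Bonami--Beckner inequality). Equivalently, letting $Z = p(\x)^2$, this says $\ex[Z^{q/2}]^{2/q} \le (q-1)^{k}$, so all moments of the nonnegative random variable $Z$ grow at the rate $\ex[Z^m] \le (2m-1)^{km} \le (2m)^{km}$.

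Next I would convert these moment bounds into a tail bound via Markov's inequality applied to the $m$-th moment of $Z$: for any positive integer $m$,
\[
\pr_{\x\sim\DP}[p(\x)^2 > B] = \pr[Z > B] \le \frac{\ex[Z^m]}{B^m} \le \left(\frac{(2m)^{k}}{B}\right)^{m}.
\]
Now I would optimize over $m$: choose $m = \lfloor \tfrac{1}{2}\,(B/e^{k})^{1/k} \rfloor$ or something of that flavor (roughly $m \asymp B^{1/k}$), so that $(2m)^{k} \le B / e^{\Theta(1)}$, making the ratio at most $e^{-\Omega(1)}$ per factor. Since $B > e^{2k}$, this choice gives $m \ge 1$, so the bound is non-trivial, and we get $\pr[Z > B] \le e^{-\Omega(m)} = e^{-\Omega(B^{1/k})} \le e^{-\Omega(B^{1/(2k)})}$ (the weaker $1/(2k)$ exponent in the definition gives slack, so I do not need to be tight). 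The arithmetic constants need a little care to ensure $m$ is a valid positive integer in the stated regime $B > e^{2k}$, but this is routine.

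The main obstacle — really the only non-routine ingredient — is invoking the correct hypercontractivity statement with clean constants for both $\N$ and $\unif(\cube{d})$ simultaneously; once the uniform moment bound $\ex[p(\x)^{2m}] \le (Ckm)^{km} \ex[p(\x)^2]^m$ is in hand (with a universal $C$, say $C = 2$), the rest is a one-line Markov-plus-optimization argument. I would cite the Gaussian hypercontractive inequality and the Bonami--Beckner inequality (e.g., via O'Donnell's book) rather than reprove them, and note that isotropic log-concave distributions satisfy an analogous moment bound $\ex[|p(\x)|^q]^{1/q} \le (Cq)^{k/2}\ex[p(\x)^2]^{1/2}$ by the results referenced in \cite{diakonikolas2018learning}, which handles the broader claim mentioned in the preliminaries. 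An alternative, if one wants to avoid the hypercontractivity black box entirely, is to use the Carbery--Wright anti-concentration/concentration machinery, but the hypercontractivity route is shorter and gives exactly the stretched-exponential form required.
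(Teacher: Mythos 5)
Your proof is correct. The paper does not actually prove this Fact—it states it as known and defers to \cite{diakonikolas2018learning} "and references therein"—so there is no in-paper argument to compare against; your route (the $(2,q)$-hypercontractive moment bound $\ex[|p(\x)|^q]^{1/q}\le (q-1)^{k/2}\ex[p(\x)^2]^{1/2}$ for degree-$k$ polynomials over $\Gauss$ and $\unif(\cube{d})$, followed by Markov on the $m$-th moment of $p(\x)^2$ and optimizing $m\asymp B^{1/k}$) is precisely the standard argument underlying that citation, and the condition $B>e^{2k}$ gives you the room you need to take $m\ge 1$; the resulting $e^{-\Omega(B^{1/k})}$ tail is even stronger than the required $e^{-\Omega(B^{1/(2k)})}$.
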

\begin{fact}
	Let $\DP$ be a log-concave distribution over $\bR^d$, then $\DP$ is $k$-tame for any $k\in\nats$.
\end{fact}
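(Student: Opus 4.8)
The plan is to deduce $k$-tameness from a single moment-comparison (hypercontractivity-type) inequality for log-concave measures and then read off the tail bound by Markov's inequality with an optimized moment order. The key input I would use is the classical fact that for every log-concave distribution $\DP$ over $\bR^d$, every polynomial $p$ of degree at most $k$, and every real $q\ge 2$,
\[
  \bigl(\ex_{\x\sim\DP}\,|p(\x)|^{q}\bigr)^{1/q}\;\le\;(Cq)^{k}\,\bigl(\ex_{\x\sim\DP}\,(p(\x))^{2}\bigr)^{1/2}
\]
for a universal constant $C>0$. This is the sharp rate: the exponent $k$ on $q$ (not $k/2$, as Gaussian hypercontractivity would give) is forced already at $k=1$, since marginals of log-concave measures are only sub-exponential (e.g.\ the exponential distribution). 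It is the content of results of Bourgain, Carbery--Wright, Bobkov, and Nazarov--Sodin--Volberg, and it is precisely the property of log-concavity used for this purpose in \cite{diakonikolas2018learning}. To apply it I would first observe that the inequality is affine-invariant: if $\x=A\z+b$ then $p(A\z+b)$ is again a degree-$k$ polynomial and both sides transform in the same way, while log-concavity is preserved under affine maps; hence, after restricting to the affine hull of the support, it suffices to know the inequality for \emph{isotropic} log-concave $\DP$, which is the standard form in the literature.

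Granting the moment inequality, the rest is routine. Fix $p$ with $\ex_{\x\sim\DP}[(p(\x))^{2}]\le 1$ and $B>e^{2k}$, and set $t=\sqrt{B}$, so $t>e^{k}$. For any real $q\ge 2$, Markov's inequality applied to $|p|^{q}$ gives
\[
  \pr_{\x\sim\DP}\!\left[(p(\x))^{2}>B\right]=\pr_{\x\sim\DP}\!\left[|p(\x)|>t\right]\le t^{-q}\,\ex_{\x\sim\DP}\,|p(\x)|^{q}\le\left(\frac{(Cq)^{k}}{t}\right)^{q}.
\]
I would then choose $q$ so that $(Cq)^{k}=t/e^{k}$, i.e.\ $q=t^{1/k}/(eC)$, which makes the bracket at most $e^{-k}\le e^{-1}$ and yields $\pr[(p(\x))^{2}>B]\le e^{-q}=e^{-B^{1/(2k)}/(eC)}=e^{-\Omega(B^{1/(2k)})}$, valid whenever this choice obeys $q\ge 2$, i.e.\ whenever $B\ge(2eC)^{2k}$. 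For the leftover range $e^{2k}<B<(2eC)^{2k}$, the bare bound $\pr[(p(\x))^{2}>B]\le 1/B<e^{-2k}\le e^{-2}$ already suffices: there $B^{1/(2k)}$ lies in a bounded interval, so after shrinking the hidden constant in $\Omega(\cdot)$ to the minimum of the two regimes, $e^{-\Omega(B^{1/(2k)})}$ can be taken to be a constant at least $e^{-2}$. Combining the two cases gives $\pr_{\x\sim\DP}[(p(\x))^{2}>B]\le e^{-\Omega(B^{1/(2k)})}$ for all $B>e^{2k}$, which is exactly $k$-tameness.

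The only genuinely substantive ingredient is the log-concave moment-comparison inequality itself; everything downstream is elementary bookkeeping. If one wanted a self-contained argument, the obstacle would be proving that inequality from scratch — the standard routes are the Lov\'asz--Simonovits localization lemma (reducing the multivariate polynomial estimate to one-dimensional log-concave moment growth, which is linear in the moment order) or the dimension-free sublevel-set volume bounds of Bourgain and of Nazarov--Sodin--Volberg. For the purposes of this paper it is cleanest simply to invoke it, since it is precisely the feature of log-concavity already isolated in \cite{diakonikolas2018learning}.
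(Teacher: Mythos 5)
The paper does not actually prove this fact: it is stated as a \emph{Fact} and justified only by a pointer to \cite{diakonikolas2018learning} ``and references therein,'' so there is no in-paper argument to compare against. Your derivation is a correct and standard way to supply the missing proof. The moment-comparison inequality $(\ex_{\x\sim\DP}|p(\x)|^q)^{1/q}\le (Cq)^k(\ex_{\x\sim\DP}(p(\x))^2)^{1/2}$ for log-concave measures is indeed a classical result (Bobkov, Carbery--Wright, Nazarov--Sodin--Volberg), your remark that the exponent $k$ rather than $k/2$ is the right rate for log-concave (as opposed to Gaussian) measures is accurate, and the affine-invariance reduction to the isotropic case is the right way to handle possibly degenerate supports. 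The downstream bookkeeping also checks out: with $t=\sqrt B$ and $q=t^{1/k}/(eC)$ the Markov bound becomes $e^{-kq}\le e^{-B^{1/(2k)}/(eC)}$, the constraint $q\ge 2$ only fails on the bounded range $e^{2k}<B<(2eC)^{2k}$, where the crude bound $1/B<e^{-2k}$ suffices, and the resulting constant in the $\Omega(\cdot)$ is universal (independent of $k$), which matters for how the paper later integrates the tail in Lemma~\ref{lem: estimating moments of a tame distribution}. The only caveat is the one you already flag: the substantive content lives entirely in the invoked moment inequality, so as written this is a reduction to a cited external theorem rather than a self-contained proof --- which is exactly the level of rigor the paper itself adopts.
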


\subsection{Sandwiching Polynomials}\label{appendix:sandwiching-polynomials}

We provide a formal definition of the $\L_2$ sandwiching property which is key in order to be able to apply our outlier removal process many of for our main learning applications.

\begin{definition}[$\L_2$ Sandwiching]\label{definition:l2-sandwiching}
    For a concept class $\C$, a distribution $\DP$ over $\X$, $\eps\in(0,1)$, we say that $\C$ has $\eps$-$\L_2$ \textbf{sandwiching degree} $k$ with respect to $\DP$ if for any $\concept\in \C$, there exist polynomials $\pup,\pdown$ over $\X$ with degree at most $k$ such that (1) $\pdown(\x)\le \concept(\x)\le \pup(\x)$ for all $\x\in\X$ and (2) $\E_{\x\sim \DP}[(\pup(\x)-\pdown(\x))^2] \le \eps$. If the coefficients of $\pup,\pdown$ are all absolutely bounded by $B$, we say that $\C$ has $\eps$-$\L_2$ sandwiching coefficient bound $B$.
\end{definition}

In order to obtain our learning results, in addition to $\L_2$ sandwiching, we use some further properties of the marginal distribution (see \Cref{definition:reasonable-pair}). The following proposition from \cite{klivans2023testable} shows that these properties are true for the Gaussian distribution as well as the uniform distribution over the hypercube $\cube{d}$.

\begin{proposition}[Appendix D in \cite{klivans2023testable}]\label{proposition:L2-gaussian-uniform-properties}
    Let $\DP$ be either $\Gauss_d$ or $\unif(\cube{d})$ and let $\C$ be some concept class with $\eps$-$\L_2$ sandwiching degree $k$ with respect to $\DP$. Then, $\C$ $\eps$-$\L_2$ sandwiching coefficient bound $B = d^{O(k)}$ and $(\DP,\C)$ is $(\eps,\delta,k,m)$-reasonable, where $m = O(\frac{1}{\delta})(dk)^{O(k)}$.
\end{proposition}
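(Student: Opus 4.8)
The statement bundles the three conditions of \Cref{definition:reasonable-pair}, and I would establish them one by one. Condition~(2), that $\Gauss_d$ and $\unif(\cube{d})$ are $k$-tame, is exactly the fact recorded in \Cref{appendix:tame-distributions}, so there is nothing to do. Condition~(1) is the coefficient bound, and condition~(3) is a uniform-convergence guarantee for the squared loss of degree-$\le k$ polynomials with bounded coefficients; the latter is where essentially all the work is.

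For the coefficient bound, the plan is to show that \emph{any} valid sandwiching pair already has small coefficients, so nothing about the pair needs to be changed. Given $\pdown\le\concept\le\pup$ of degree $\le k$ with $\E_\DP[(\pup-\pdown)^2]\le\eps$, the pointwise bounds $|\concept-\pdown|\le|\pup-\pdown|$ and $|\pup-\concept|\le|\pup-\pdown|$ (valid because $\concept$ lies between $\pdown$ and $\pup$) give $\|\pdown-\concept\|_{L_2(\DP)},\|\pup-\concept\|_{L_2(\DP)}\le\sqrt\eps\le1$, and since $\|\concept\|_{L_2(\DP)}=1$ (as $\concept$ is $\{\pm1\}$-valued) the triangle inequality yields $\|\pdown\|_{L_2(\DP)},\|\pup\|_{L_2(\DP)}\le2$. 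It then remains to convert an $L_2(\DP)$-norm bound into a monomial-coefficient bound: for $\unif(\cube{d})$ this is immediate since the multilinear monomials are orthonormal (each coefficient is at most the $L_2$-norm), and for $\Gauss_d$ I would expand in the orthonormal Hermite basis, bound each Hermite coefficient by $2$ the same way, and change back to the monomial basis, using that each of the $\le d^k$ relevant Hermite basis polynomials has monomial coefficients of size $k^{O(k)}$; this gives monomial coefficients of $\pdown,\pup$ bounded by $2d^kk^{O(k)}=(dk)^{O(k)}$, which is the claimed $d^{O(k)}$ once the $k^{O(k)}$ factor is absorbed.

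For condition~(3), the key move I would make is to avoid reasoning about the (unbounded) functions $\x\mapsto(y-p(\x))^2$ and instead reduce to concentration of a fixed finite list of low-degree moments. Writing $p(\x)=\langle p,\x^{\otimes k}\rangle$ and using $y^2=1$,
\[
\E_{\Dgeneric}[(y-p(\x))^2]=1-2\langle p,\E_{\Dgeneric}[y\,\x^{\otimes k}]\rangle+p^\top\E_{\DP}[(\x^{\otimes k})(\x^{\otimes k})^\top]p,
\]
(the quadratic term using only that the $\X$-marginal of $\Dgeneric$ is $\DP$), and the empirical expectation over $\Slabelled$ has the same shape. So the quantity to control is $2\langle p,v\rangle+p^\top Ap$, where $v$ and $A$ record the deviations between the relevant degree-$\le2k$ moments of $(\x,y)$ and their empirical averages over $\Slabelled$. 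Each coordinate of $v$ and each entry of $A$ is an empirical mean of $m'$ i.i.d.\ copies of a degree-$\le2k$ monomial (times a $\pm1$ label, for $v$), whose variance is at most $k^{O(k)}$ by direct computation of the Gaussian moments (resp.\ at most $1$ on $\cube{d}$); Chebyshev together with a union bound over the $\le2d^{2k}$ entries then gives, with probability $\ge1-\delta$, that every entry has magnitude $O((k^{O(k)}d^{2k}/(\delta m'))^{1/2})$. Plugging this into $|2\langle p,v\rangle+p^\top Ap|\le2\|p\|_1\|v\|_\infty+\|p\|_1^2\|A\|_{\max}$ with $\|p\|_1\le d^kB$ and $B=(dk)^{O(k)}$ shows that $m'=(dk)^{O(k)}/(\eps^2\delta)$ samples suffice; and since in every regime the sandwiching degree $k$ is large enough that $1/\eps\le2^{O(k)}\le d^{O(k)}$, the $1/\eps^2$ is absorbed, giving $m=O(1/\delta)(dk)^{O(k)}$.

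I expect condition~(3) to be the main obstacle. The naive route via VC/Rademacher bounds on $(y-p(\x))^2$ founders on the unboundedness of polynomials over $\R^d$ and would need an additional truncation argument driven by tameness; and a covering-number argument over the coefficient vector of $p$ combined with an exponential concentration bound per net point would inflate the sample complexity to $\exp(d^k)$ instead of $(dk)^{O(k)}$. The moment-reduction above is precisely what makes the stated bound — in particular the $1/\delta$ (Chebyshev) dependence — come out cleanly. By contrast the coefficient-bound step is routine once one observes that valid sandwiching pairs are automatically $L_2$-small; the only mild care needed there is the Hermite-to-monomial change of basis in the Gaussian case.
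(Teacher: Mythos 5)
The paper does not actually prove \Cref{proposition:L2-gaussian-uniform-properties}; it imports it wholesale from Appendix~D of \cite{klivans2023testable}, so there is no in-paper argument to compare yours against. Judged on its own, your reconstruction is sound and is the natural way to prove the statement. The observation that \emph{any} valid sandwiching pair is automatically $L_2$-small (via $|\concept-\pdown|\le|\pup-\pdown|$ and the triangle inequality) and hence has bounded coefficients is correct, and your reduction of condition~(3) to entrywise Chebyshev concentration of the fixed moment vector $\E[y\,\x^{\otimes k}]$ and matrix $\E[(\x^{\otimes k})(\x^{\otimes k})^\top]$ is exactly the right move: it sidesteps the unboundedness of $(y-p(\x))^2$ over $\R^d$, and it is the only route consistent with the $O(1/\delta)$ (rather than $\log(1/\delta)$) dependence in the stated $m$.

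Two caveats, both of which are really imprecisions in the proposition's statement rather than gaps in your argument, but which you should state as explicit hypotheses rather than as "absorptions." First, $k^{O(k)}\le d^{O(k)}$ requires $\log k=O(\log d)$; for $k\gg\poly(d)$ the Hermite-to-monomial conversion gives only $(dk)^{O(k)}$, which is in fact the quantity appearing in the sample bound anyway. Second, and more substantively, your sample complexity is $m'=(dk)^{O(k)}/(\eps^2\delta)$, and the claim that $1/\eps^2$ is absorbed because "in every regime $1/\eps\le 2^{O(k)}$" does not follow from the proposition's hypotheses: a class can have $\eps$-$\L_2$ sandwiching degree $0$ for every $\eps$ (e.g.\ a constant function), and estimating $\E[(y-c)^2]$ to accuracy $\eps$ genuinely needs $\Omega(1/\eps^2)$ samples, so the stated $m=O(1/\delta)(dk)^{O(k)}$ cannot hold unconditionally. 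The bound is correct for every concept class used in the paper (all of which have $k=\Omega(\log(1/\eps))$), but the honest statement of what you prove is $m=O(1/\delta)\cdot(dk)^{O(k)}/\eps^{2}$, with the $\eps$-dependence disappearing only under that extra assumption.
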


Finally, we list a number of fundamental concept classes that are known to admit low-degree $\L_2$ sandwiching approximators.

\begin{lemma}[Decision Trees, Lemma 34 in \cite{klivans2023testable}]\label{lemma:l2-sandwiching-dts}
    Let $\DP$ be the uniform distribution over the hypercube $\X = {\cube{}}^d$. For ${\dtsize}\in\nats$, let $\C$ be the class of Decision Trees of size $\dtsize$. Then, for any $\eps>0$ the $\eps$-$\L_2$ sandwiching degree of $\C$ is at most $\degbound = O(\log(\dtsize/\eps))$.
\end{lemma}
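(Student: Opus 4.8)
The goal is to show that size-$s$ decision trees over the uniform distribution on $\cube{d}$ admit $\eps$-$\L_2$ sandwiching polynomials of degree $O(\log(s/\eps))$. My plan is to reduce the sandwiching problem for a whole decision tree to sandwiching each of its leaves, and then to approximate a single leaf --- which is a conjunction of literals --- by a low-degree polynomial.

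First I would observe that a size-$s$ decision tree $\concept$ partitions $\cube{d}$ into at most $s$ leaves; writing $L_1,\dots,L_s$ for the corresponding conjunctions (indicator polynomials on the cube), we have $\concept = \sum_j b_j L_j$ where $b_j \in \cube{}$ is the label of leaf $j$, and $\sum_j L_j \equiv 1$ pointwise. So it suffices to construct, for each leaf conjunction $L_j$, a pair of polynomials $q^{\mathrm{low}}_j \le L_j \le q^{\mathrm{up}}_j$ of degree $O(\log(s/\eps))$ with $\E_{\unif}[(q^{\mathrm{up}}_j - q^{\mathrm{low}}_j)^2] \le \eps'$ for a suitably rescaled $\eps' = \poly(\eps/s)$; then I would set $\pup = \sum_{j: b_j = 1} q^{\mathrm{up}}_j - \sum_{j: b_j = -1} q^{\mathrm{low}}_j$ and symmetrically $\pdown$, check the pointwise sandwiching (using $0 \le L_j$ and the telescoping $\sum_j L_j = 1$ to control the cross terms), and bound $\E[(\pup - \pdown)^2]$ by a union/Cauchy-Schwarz argument over the $s$ leaves, paying a factor of $s$ or $s^2$ that is absorbed by choosing $\eps' = (\eps/s)^{O(1)}$; this only changes the degree by a constant factor since the degree dependence on the accuracy is logarithmic.

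The core step is the single-conjunction sandwicher. A conjunction $L$ on $r$ relevant variables equals $\prod_{i=1}^r \frac{1 \pm x_i}{2}$, which is a nonnegative function supported on a $2^{-r}$ fraction of the cube. The standard trick (going back to low-degree $\L_1$/$\L_2$ approximation of AND and to the inclusion-exclusion truncation used in \cite{klivans2023testable}) is to truncate the inclusion-exclusion expansion of the AND at degree $k = O(\log(1/\eps'))$: this gives a degree-$k$ polynomial $\tilde L$ with $\|L - \tilde L\|_{\L_2(\unif)}$ small, and by taking $\tilde L$ one-sided (adding/subtracting a small multiple of a degree-$k$ ``error dominating'' polynomial, or using the even/odd truncations which are known to bracket AND from above and below) one gets the genuine pointwise bounds $q^{\mathrm{low}} \le L \le q^{\mathrm{up}}$. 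Concretely, since $L$ is a product of $r$ terms each bounded in $[0,1]$, the truncated expansions at consecutive degrees sandwich $L$, and the $\L_2$ gap between the degree-$k$ upper and lower truncations decays geometrically in $k$, so $k = O(\log(s/\eps))$ suffices. When $r \le k$ there is nothing to do since $L$ is already a degree-$r \le k$ polynomial.

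The main obstacle I anticipate is getting the pointwise inequality $\pdown \le \concept \le \pup$ globally after summing over leaves: the per-leaf upper bounds $q^{\mathrm{up}}_j$ can be positive on other leaves' regions, so naively summing them need not stay between $-1$ and the correct value. The clean fix is to exploit $\sum_j L_j = 1$ exactly: write $\concept(\x) = \mathbf{1}[\concept(\x)=1] - \mathbf{1}[\concept(\x)=-1]$ with $\mathbf{1}[\concept=1] = \sum_{j: b_j=1} L_j$, and sandwich the $\{0,1\}$-valued function $\mathbf{1}[\concept=1]$ from above by $\sum_{j:b_j=1} q^{\mathrm{up}}_j$ and from below by $\sum_{j:b_j=1} q^{\mathrm{low}}_j$ --- the upper bound holds because each $q^{\mathrm{up}}_j \ge L_j \ge 0$ and the lower bound because each $q^{\mathrm{low}}_j \le L_j$ and we can further clamp by subtracting the error terms for the complementary leaves; then combine the two one-sided brackets for $\mathbf{1}[\concept=1]$ and $\mathbf{1}[\concept=-1]$. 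The $\L_2$ error is then $\sum_j \E[(q^{\mathrm{up}}_j - q^{\mathrm{low}}_j)^2]$ up to constants, which is $\le s \cdot \eps' \le \eps$ with $\eps' = \eps/s$, still requiring only degree $O(\log(s/\eps))$. I expect the bookkeeping of which polynomial brackets which leaf from which side to be the only genuinely delicate part; everything else is the standard inclusion-exclusion truncation estimate.
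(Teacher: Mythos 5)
Your high-level plan (decompose the tree into its leaf conjunctions $L_j$ with $\sum_j L_j \equiv 1$, sandwich each leaf, and recombine via $\concept = 2\sum_{j:b_j=1}L_j - 1$, absorbing the $\poly(s)$ loss into the logarithm) is sound, and the recombination bookkeeping you worry about does go through. The genuine gap is in your core step. The truncated inclusion--exclusion (Bonferroni) expansions of a conjunction on $r$ relevant variables do bracket it pointwise at consecutive degrees, but their gap does \emph{not} decay geometrically in $k$ when $r \gg k$: at a point where $m$ of the $r$ literals are violated, the gap between the degree-$k$ and degree-$(k+1)$ truncations is $\binom{m}{k+1}$, and under $\unif(\cube{d})$ its expectation is $\binom{r}{k+1}2^{-(k+1)}$, which for a deep leaf ($r$ as large as $d$) and $k = O(\log(s/\eps))$ is enormous rather than small. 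So the single-conjunction sandwicher as you describe it fails exactly in the only nontrivial case, $r > k$ (as you note, $r \le k$ needs no approximation).

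The fix is simpler than what you propose and requires no analytic approximation of AND at all: for a deep leaf, take $q_j^{\mathrm{up}}$ to be the conjunction of only the \emph{first} $k$ literals on its root-to-leaf path and $q_j^{\mathrm{low}} = 0$. Then $0 \le L_j \le q_j^{\mathrm{up}}$ pointwise, both are exact degree-$k$ polynomials, and $\E_{\unif}[(q_j^{\mathrm{up}} - q_j^{\mathrm{low}})^2] = \E[q_j^{\mathrm{up}}] = 2^{-k}$, which is the geometric decay you wanted. Plugging this into your aggregation gives $\E[(\pup-\pdown)^2] \le 4s^2 2^{-k}$, so $k = O(\log(s/\eps))$ suffices. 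Equivalently (and this is the standard proof of the cited Lemma 34 of \cite{klivans2023testable}, which this paper does not reprove): truncate the tree at depth $k$, note that at most an $s2^{-k}$ fraction of inputs reach depth $k$, and define $\pup$ (resp.\ $\pdown$) by replacing each truncated subtree with the constant $+1$ (resp.\ $-1$); each branch of the truncated tree is a conjunction of at most $k$ literals, hence an exact degree-$k$ polynomial, and $\E[(\pup-\pdown)^2] = 4\pr[\text{depth} \ge k] \le 4s2^{-k}$.
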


\begin{lemma}[Boolean Formulas, Theorem 6 in \cite{o2003newdegreebounds} and Lemma 35 in \cite{klivans2023testable}]\label{lemma:l2-sandwiching-ac0}
    Let $\DP$ be the uniform distribution over the hypercube $\X = {\cube{}}^d$. For ${\aczsize,\aczdepth}\in\nats$, let $\C$ be the class of Boolean formulas of size at most $\aczsize$, depth at most $\aczdepth$. Then, for any $\eps>0$ the $\eps$-$\L_2$ sandwiching degree of $\C$ is at most $\degbound = (C\log(\aczsize/\eps))^{5\aczdepth/2}\sqrt{{\aczsize}}$, for some sufficiently large universal constant $C>0$.
\end{lemma}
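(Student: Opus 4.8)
The plan is to extract the $\eps$-$\L_2$ sandwiching polynomials from a \emph{pointwise} low-degree approximation of the formula on the Boolean cube, which is exactly what Theorem~6 of \cite{o2003newdegreebounds} provides. Fix a Boolean formula $\concept\colon\cube{d}\to\cube{}$ (over the standard $\mathrm{AND}/\mathrm{OR}/\mathrm{NOT}$ basis) of size at most $\aczsize$ and depth at most $\aczdepth$, and set $\gamma=\sqrt{\eps}/2$. First I would invoke Theorem~6 of \cite{o2003newdegreebounds}: composing low-degree bounded approximators of the individual fan-in-$r$ $\mathrm{AND}/\mathrm{OR}$ gates (each of degree $O(\sqrt{r\log(1/\gamma)})$ via Chebyshev polynomials) up the formula tree, with the standard robustness amplification, yields a real polynomial $q$ over the at most $\aczsize$ relevant coordinates with $\deg(q)\le\sqrt{\aczsize}\,(C'\log(\aczsize/\gamma))^{5\aczdepth/2}$ for an absolute constant $C'$, such that $|q(\x)-\concept(\x)|\le\gamma$ for every $\x\in\cube{d}$.

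Given $q$, the second step is immediate: take $\pdown=q-\gamma$ and $\pup=q+\gamma$. These have degree $\deg(q)$; the pointwise bound $|q-\concept|\le\gamma$ on $\cube{d}$ gives $\pdown(\x)\le\concept(\x)\le\pup(\x)$ for all $\x\in\cube{d}$; and $\pup-\pdown\equiv 2\gamma$ is constant, so $\E_{\x\sim\DP}[(\pup(\x)-\pdown(\x))^2]=4\gamma^2=\eps$. Thus $\concept$ admits degree-$\deg(q)$ $\eps$-$\L_2$ sandwiching polynomials. The third (bookkeeping) step is to rewrite the degree: since $\log(\aczsize/\gamma)=\log(2\aczsize/\sqrt{\eps})\le 2\log(\aczsize/\eps)$ in the relevant parameter range, we get $\deg(q)\le(C\log(\aczsize/\eps))^{5\aczdepth/2}\sqrt{\aczsize}$ for a suitable absolute constant $C$ (e.g.\ $C=2C'$), which is the claimed bound on $\degbound$. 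Since this holds for every $\concept\in\C$, the $\eps$-$\L_2$ sandwiching degree of $\C$ is at most this quantity. (The coefficient bound, not asserted in this statement but used downstream, then follows from \Cref{proposition:L2-gaussian-uniform-properties}.)

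The only real obstacle lies inside the cited approximation theorem; everything else above is routine. Two points are worth flagging. First, it is essential that the approximation be \emph{uniform over the cube} ($L_\infty$) rather than merely on average under $\DP$: an $L_2(\DP)$-close polynomial cannot in general be shifted into a pointwise sandwich without inflating the $\L_2$ gap, so if one only had an average-case guarantee one would instead have to build $\pdown\le\concept$ from a one-sided low-degree approximator of $\concept$ and $\pup\ge\concept$ from one of $1-\concept$ --- a strictly more delicate, Bazzi/Braverman-style argument. Second, the precise exponent $5\aczdepth/2$ is the packaging (carried out in Lemma~35 of \cite{klivans2023testable}) of the degree produced by Theorem~6 of \cite{o2003newdegreebounds} once one accounts for the overhead of robustly composing the single-gate approximators across all $\aczdepth$ alternations so that the composed polynomial is close to $\concept$ at \emph{every} cube point; controlling this overhead, not the shift to sandwiching, is where the work is.
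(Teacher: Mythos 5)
Your proposal is correct and is essentially the argument the paper relies on: the lemma is cited from Theorem 6 of \cite{o2003newdegreebounds} (uniform $L_\infty$ approximation of formulas by degree $\sqrt{\aczsize}(O(\log(\aczsize/\gamma)))^{5\aczdepth/2}$ polynomials) together with Lemma 35 of \cite{klivans2023testable}, which performs exactly your shift $\pdown = q-\gamma$, $\pup = q+\gamma$ with $\gamma = \sqrt{\eps}/2$ to convert the pointwise approximator into $\eps$-$\L_2$ sandwiching polynomials. Your remarks about why the $L_\infty$ guarantee is essential and where the $5\aczdepth/2$ exponent comes from are accurate.
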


    \begin{lemma}[Intersections and Decision Trees of Halfspaces, Lemma 37 in \cite{klivans2023testable}]\label{lemma:l2-sandwiching-intersections}
    Let $\DP$ be either the uniform distribution over the hypercube $\X = {\cube{}}^d$ or the Gaussian $\Gauss_d$ over $\X = \R^d$. For ${\nhalfspaces}\in\nats$, let also $\C$ be the class of intersections of ${\nhalfspaces}$ halfspaces on $\X$. Then, for any $\eps>0$ the $\eps$-$\L_2$ sandwiching degree of $\C$ is at most $k = \widetilde{O}(\frac{{\nhalfspaces}^6}{\eps^2})$. For Decision Trees of halfspaces of size $\dtsize$ and depth $\aczdepth$, the bound is $\degbound = \widetilde{O}(\frac{\dtsize^2 \aczdepth^6}{\eps^2})$.
\end{lemma}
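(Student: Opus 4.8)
The plan is to reduce to the single-halfspace case, combine the per-halfspace sandwiches with a product (for the AND side) and a union bound (for its complement), and absorb the loss of the combination into the target accuracy; decision trees of halfspaces are then handled leaf-by-leaf, using that the leaf regions are disjoint intersections of few halfspaces.

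\textbf{Base case.} First I would record the single-halfspace bound: a halfspace $h$ has $\eta$-$\L_2$ sandwiching degree $\wt O(1/\eta^2)$ under $\Gauss_d$ and under $\unif(\cube{d})$ (this is the $\nhalfspaces=1$ case of the claim). For the Gaussian this reduces to a one-dimensional statement, since $\w\cdot\x$ is a standard Gaussian: one approximates the univariate step $\ind[t\ge\theta]$ from above and from below by low-degree polynomials that agree with it outside a narrow transition window around $\theta$ and are $O(1)$ on that window — shifting the window endpoints by a tiny amount makes one polynomial dominate $\ind[h]$ and the other be dominated by it — so the $\L_2$ gap is the window's contribution plus a tail term that tameness of $\Gauss$ keeps small. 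The hypercube case follows by a regularity / invariance-principle reduction of a halfspace on $\cube d$ to either a near-constant function or a halfspace applied to an approximately Gaussian projection. Only the stated degree dependence is needed downstream, so I would import the precise construction and $\wt O$-factors (and the $d^{\wt O(1/\eta^2)}$ coefficient bound) from the reference.

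\textbf{Intersections.} For $\concept=\ind[\bigwedge_{i=1}^{\nhalfspaces} h_i]$ (worked in $\{0,1\}$ and mapped to $\cube{}$ by $x\mapsto 2x-1$ at the end), take per-halfspace sandwiches $p_i^{\mathrm{low}}\le\ind[h_i]\le p_i^{\mathrm{up}}$ of accuracy $\eta$. Since every $p_i^{\mathrm{up}}\ge\ind[h_i]\ge 0$, the product $\pup:=\prod_i p_i^{\mathrm{up}}$ satisfies $\pup\ge\prod_i\ind[h_i]=\concept$ pointwise and has degree $\nhalfspaces\cdot\wt O(1/\eta^2)$; for the lower side, $\ind[\neg\concept]\le\sum_i\ind[\neg h_i]$ gives $\concept\ge 1-\sum_i(1-p_i^{\mathrm{low}})=:\pdown$, of degree $\wt O(1/\eta^2)$. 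It then remains to bound $\E_{\DP}[(\pup-\pdown)^2]$: writing $p_i^{\mathrm{up}}=\ind[h_i]+e_i$ with $e_i\ge 0$ and $\E[e_i^2]\le\eta$, one expands the product, groups terms according to which halfspaces are violated, and bounds the cross-terms $\prod_{i\in J}e_i$ using hypercontractivity of $\DP$ together with the pointwise control on the $p_i^{\mathrm{up}}$ from the base case, weighting each failure pattern $J$ by its (small) mass. Making the total $\le\eps$ forces $\eta=\eps/\wt\Theta(\nhalfspaces^{5/2})$, giving sandwiching degree $\nhalfspaces\cdot\wt O(1/\eta^2)=\wt O(\nhalfspaces^6/\eps^2)$.

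\textbf{Decision trees of halfspaces, and the main obstacle.} A size-$\dtsize$, depth-$\aczdepth$ decision tree of halfspaces is $\concept=\sum_j\sigma_j\chi_j$, where the leaf indicators $\chi_j$ are intersections of at most $\aczdepth$ halfspaces with pairwise disjoint, exhaustive supports and $\sigma_j\in\cube{}$. Taking accuracy-$\eta'$ intersection sandwiches $q_j^{\mathrm{low}}\le\chi_j\le q_j^{\mathrm{up}}$ from the previous step and setting $F^{\mathrm{up}}=\sum_{\sigma_j=1}q_j^{\mathrm{up}}-\sum_{\sigma_j=-1}q_j^{\mathrm{low}}$ (and symmetrically $F^{\mathrm{low}}$) yields a valid sandwich of $\concept$ — one checks this pointwise on each leaf, using the disjointness of the $\chi_j$ and the signs of $q_j^{\mathrm{up}},q_j^{\mathrm{low}}$ — of degree $\wt O(\aczdepth^6/\eta'^2)$, with gap $F^{\mathrm{up}}-F^{\mathrm{low}}=\sum_j(q_j^{\mathrm{up}}-q_j^{\mathrm{low}})$. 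Choosing $\eta'=\eps/\wt\Theta(\dtsize)$ — which requires the leaf sandwiches to be concentrated near their own leaf regions so the $\L_2$ error accumulates only linearly in $\dtsize$ rather than quadratically — gives degree $\wt O(\dtsize^2\aczdepth^6/\eps^2)$. The main obstacle throughout is exactly this $\L_2$ bookkeeping: unlike $\L_1$ approximation, the approximators here are unbounded and can be large precisely on the regions where the predicates they approximate fail, so one must quantify how the single-halfspace sandwiches behave there (via their pointwise structure and tameness/hypercontractivity of $\DP$) and tune the per-component accuracies so that the accumulated mass-times-squared-value over all failure patterns stays below $\eps$. For the exact accuracy choices and the resulting $\wt O$-factors I would follow the computation in \cite{klivans2023testable}.
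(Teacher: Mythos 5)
First, note that the paper does not actually prove this statement: it is imported verbatim as Lemma 37 of \cite{klivans2023testable}, so there is no in-paper argument to match your proposal against. Evaluating your construction on its own merits, the upper sandwich $\pup=\prod_i p_i^{\mathrm{up}}$ is sound (and is indeed the standard device), but your lower sandwich is genuinely broken, and the break occurs before any polynomial-approximation error enters.

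Concretely, take $\pdown = 1-\sum_i(1-p_i^{\mathrm{low}})$ and substitute the \emph{ideal} approximators $p_i^{\mathrm{low}}=\ind[h_i]$; then $\concept-\pdown=\sum_i\ind[\neg h_i]-\ind[\bigvee_i\neg h_i]$, which is the intrinsic slack of the union bound. For $\nhalfspaces$ coordinate halfspaces $h_i=\{x_i\ge 0\}$ on $\cube{d}$ (or $\nhalfspaces$ independent directions under $\Gauss_d$), $\sum_i\ind[\neg h_i]$ concentrates around $\nhalfspaces/2$ while the subtracted indicator is at most $1$, so $\E_{\DP}[(\concept-\pdown)^2]=\Omega(\nhalfspaces^2)$, and since $\pup\ge\concept$ pointwise this forces $\E_{\DP}[(\pup-\pdown)^2]=\Omega(\nhalfspaces^2)$ no matter how accurate the single-halfspace sandwiches are. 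Even for $\nhalfspaces=2$ identical halfspaces of measure $1/2$ the gap is already $\Omega(1)$. The union/inclusion--exclusion lower bound only works when the violation events $\{\neg h_i\}$ have small total mass, which is not guaranteed here; your hypercontractivity bookkeeping for the cross-terms $\prod_{i\in J}e_i$ addresses the upper polynomial's error but cannot rescue a lower polynomial whose error is $\Omega(\nhalfspaces)$ pointwise on a constant-mass region. The decision-tree step inherits the same defect through the leaf sandwiches. The standard fix is to build the lower polynomial from the same products via the telescoping identity: set $\pdown=\prod_i p_i^{\mathrm{up}}-\sum_i(p_i^{\mathrm{up}}-p_i^{\mathrm{low}})\prod_{j\neq i}p_j^{\mathrm{up}}$, which satisfies $\pdown\le\concept$ because $\prod_i p_i^{\mathrm{up}}-\prod_i\ind[h_i]\le\sum_i(p_i^{\mathrm{up}}-p_i^{\mathrm{low}})\prod_{j\neq i}p_j^{\mathrm{up}}$ term by term (using $p_j^{\mathrm{up}}\ge\ind[h_j]\ge 0$), and whose gap $\pup-\pdown=\sum_i(p_i^{\mathrm{up}}-p_i^{\mathrm{low}})\prod_{j\neq i}p_j^{\mathrm{up}}$ is then controlled in $\L_2$ by Cauchy--Schwarz together with higher-moment bounds on the individual $p_j^{\mathrm{up}}$ (which must be established as part of the single-halfspace construction); this is the calculation that produces the $\eta=\eps/\widetilde{\Theta}(\nhalfspaces^{5/2})$ tuning and hence the $\widetilde{O}(\nhalfspaces^6/\eps^2)$ degree.
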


\subsection{Other Learning Settings}



Our approach provides new results even for standard learning scenarios (without distribution shift). In particular, we provide the first tolerant testable learning algorithms. In the testable learning setting, there is no distribution shift, but rather, the learner receives labeled examples from some distribution $\Dgenericjoint$ over $\X\times\cube{}$ and is asked to either reject, or accept and output a hypothesis with low error on $\Dgenericjoint$. The target error is $\opt+\eps$, where $\opt = \min_{\concept \in \C}\error(\concept; \Dgenericjoint)$. In order to obtain efficient learners, we allow for the algorithm to reject if it detects that the marginal distribution $\DQ$ of $\Dgenericjoint$ on $\X$ is not equal to some given target distribution $\DP$ which is known to be well-behaved. The tester is, once more, allowed to reject even when $\DQ$ and $\DP$ differ by a tiny amount. We are interested in tolerant testable learning, where the tester-learner is required to accept when $\DQ$ is moderately close to $\DP$ and provide the first upper bounds for the problem.

\begin{definition}[Tolerant Testable Learning, extension of \cite{rubinfeld2022testing}]
\label{definition:tolerant-testable-learning}
    Let $\C$ be a concept class over $\X\subseteq \R^d$ and $\DP$ a distribution over $\X$. The algorithm $\A$ is a tester-learner for $\C$ with respect to $\DP$ up to error $\accuracy$, tolerance $\toler$ and probability of failure $\delta$ if, upon receiving $m$ labeled samples from a distribution $\Dgenericjoint$ with $\X$-marginal $\DQ$, algorithm $\A$ either rejects or accepts and outputs a hypothesis $h:\X\to \cube{}$ such that, w.p. at least $1-\delta$, the following hold:
    \begin{enumerate}[label=\textnormal{(}\alph*\textnormal{)}]
    \item\label{item:soundness-testable} (\textit{soundness}) Upon acceptance, the error is bounded as $\pr_{(\x,y)\sim \Dgenericjoint}[y \neq h(\x)] \le \opt+\accuracy$.
    \item\label{item:completeness-testable} (\textit{completeness}) 
    If $\tv(\DP,\DQ)\le \toler$, then the algorithm accepts.
    \end{enumerate}
    The optimum error $\opt$ is defined as $\opt = \min_{\concept\in\C}\pr_{(\x,y)\sim \Dgenericjoint}[y\neq \concept(\x)]$.
\end{definition}

Finally, we give a definition for the model of learning with nasty noise (proposed by \cite{bshouty2002pac}).

\begin{definition}[Learning with Nasty Noise \cite{bshouty2002pac}]
\label{definition:nasty-noise-learning}
    Let $\C$ be a concept class over $\X\subseteq \R^d$ and $\DP$ a distribution over $\X$. The algorithm $\A$ is a learner for $\C$ with respect to $\DP$, robust under nasty noise with rate $\noiserate\in(0,1)$, up to error $\accuracy$ and probability of failure $\delta$ if the following hold. If the algorithm $\A$ receives a set of $N$ labeled samples $S$ that are formed by some adversary who first draws $N$ i.i.d. labeled samples $S_\iid$ from $\DP$, labeled by some concept $\coptcommon\in\C$ and then corrupts at most $\noiserate N$ (arbitrarily chosen) elements of $S_\iid$ and substitutes them by $\noiserate N$ arbitrary points of $\X\times\cube{}$, then $\A$ outputs w.p. at least $1-\delta$ some hypothesis $h:\X\to \cube{}$ such that $\pr_{\x\sim\DP}[\coptcommon(\x)\neq h(\x)] \le \accuracy$.

    The error $\accuracy$ is a function of the noise rate $\noiserate$.
\end{definition}

\section{Additional Tools}\label{appendix:tools}

\subsection{Miscellaneous lemmas}
\label{sec: technical inequalities about polynomials}
In this section we present two technical lemmas used for the design and analysis of our filtering algorithm. The following lemma allows one to efficiently estimate the moments of a $k$-time distribution and is used for the algorithms of Theorem \ref{thm: outlier removal} and Theorem \ref{thm: outlier removal selector version}. 

\begin{algorithm}
	\caption{Monomial Correlations Matrix Estimation}\label{algorithm:moment-matrix-estimator}
	\KwIn{Set $S$ of size $N$, containing points in $\X\subseteq\R^d$, parameter $\delta$ and parameter ${k}\in\nats$}
	\KwOut{A matrix $\hat M$}
	Partition $S_\DP$ into $\sqrt{N}$ parts $(S_\DP^{(i)})_{i\in[\sqrt{N}]}$, each of size $\sqrt{N}$. \\
	Compute the matrix $\hat M_i = \E_{\x\sim S_\DP^{(i)}}[(\x^{\otimes {k}})(\x^{\otimes {k}})^\top]$ for each $i\in [\sqrt{N}]$.
	\\ 
	Let $\hat M = \hat M_i$ for some $i$ such that the number of indices $j\in [\sqrt{N}]$ with $0.01 \hat M_j\preceq \hat M_i \preceq 0.99 \hat M_j$ is at least ${0.8\sqrt{N}}$, if such an index $i\in [\sqrt{N}]$ exists. \\
	Otherwise, let $\hat M = \hat M_1$.
\end{algorithm}

\begin{lemma}
	\label{lem: estimating moments of a tame distribution}
	For some sufficiently large absolute constant $C$, the following holds.
	There is an algorithm (Algorithm \ref{algorithm:moment-matrix-estimator}) that takes a parameter $\delta$ in $(0,1)$, a positive integer $k$ and $N\geq C\left((kd)^k \log 1/\delta\right)^C$ i.i.d. examples from a $k$-tame distribution $\DP$ over $\bR^d$. The algorithm runs in time $\poly(N)$ and with probability at least $1-\delta$ the outputs an $m \times m$ symmetric positive-semidefinite matrix $\hat{M}$ that for every degree-$k$ polynomial $p$ satisfies
	\[
	\frac{9}{10}
	\ex_{\x \sim \DP} [(p(\x))^2]
	\leq
	p^T
	\hat{M} p
	\leq
	\frac{11}{10}
	\ex_{\x \sim \DP} [(p(\x))^2].
	\]
\end{lemma}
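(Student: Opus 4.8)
The plan is to prove \Cref{lem: estimating moments of a tame distribution} via a median-of-means / robust covariance estimation argument applied to the $d^k$-dimensional ``feature'' vectors $\x^{\otimes k}$. Writing $M = \ex_{\x\sim\DP}[(\x^{\otimes k})(\x^{\otimes k})^\top]$, the desired conclusion $\frac{9}{10}p^\top M p \le p^\top \hat M p \le \frac{11}{10} p^\top M p$ for all degree-$k$ polynomials $p$ is exactly a multiplicative (relative) spectral approximation of $M$ restricted to the subspace spanned by monomials of degree $\le k$. The difficulty is that $M$ can be poorly conditioned and the entries of $\x^{\otimes k}$ are heavy-tailed, so a single empirical average over $\sqrt N$ points need not concentrate in the relative-spectral sense; this is precisely why \Cref{algorithm:moment-matrix-estimator} takes $\sqrt N$ independent estimators and outputs one that is ``$0.01$–$0.99$ sandwiched'' by at least $0.8\sqrt N$ of the others.

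First I would reduce to a single-batch statement: show that for a batch $S^{(i)}$ of $\sqrt N$ i.i.d. samples, with probability at least, say, $0.99$, the empirical matrix $\hat M_i$ satisfies $0.95\, M \preceq_{\mathrm{rel}} \hat M_i \preceq_{\mathrm{rel}} 1.05\, M$ in the relative (multiplicative) Loewner order on the monomial subspace — equivalently $|p^\top \hat M_i p - p^\top M p| \le 0.05\, p^\top M p$ for all degree-$k$ $p$. To get this, I would first whiten: after a change of basis we may assume $M = I$ on the relevant $m$-dimensional subspace (handling the kernel of $M$ separately, since on it every $\hat M_i$ also vanishes a.s. when $\DP$ is supported appropriately — for $\Gauss_d$ or $\unif(\cube d)$ the monomials are linearly independent so $M\succ 0$). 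Then I need $\|\hat\Sigma - I\|_{\mathrm{op}} \le 0.05$ where $\hat\Sigma = \frac1{\sqrt N}\sum_{j} y_j y_j^\top$ and $y_j = M^{-1/2}\x_j^{\otimes k}$ are isotropic. The $k$-tameness of $\DP$ (\Cref{definition:tame-distributions}) gives, for every fixed unit vector $u$ (which corresponds to a degree-$k$ polynomial $p_u$ with $\ex[(p_u(\x))^2]=1$), the tail bound $\pr[\langle u, y\rangle^2 > B] \le e^{-\Omega(B^{1/2k})}$ for $B > e^{2k}$, i.e. $\langle u,y\rangle$ has sub-Weibull tails with parameter $1/(2k)$; this controls Orlicz norms and hence, via a standard matrix-concentration / $\eps$-net over the $m$-sphere argument (e.g.\ a truncation-plus-Bernstein or matrix-Chernoff bound combined with a union bound over an $\eps$-net of size $e^{O(m)}$), yields $\|\hat\Sigma - I\|_{\mathrm{op}}\le 0.05$ once $\sqrt N \gtrsim \mathrm{poly}((kd)^k)$, which holds by the hypothesis $N \ge C((kd)^k\log(1/\delta))^C$.

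Second, I would run the median-of-means amplification. Call a batch ``good'' if its $\hat M_i$ is $1.05$-relative-spectrally close to $M$; by the above, each batch is good independently with probability $\ge 0.99$, so by a Chernoff bound over the $\sqrt N$ batches, with probability $\ge 1-\delta$ (using $\sqrt N \ge C\log(1/\delta)$) at least $0.9\sqrt N$ of the batches are good. Now observe that if $\hat M_i$ and $\hat M_j$ are both good then $0.01\hat M_j \preceq \hat M_i \preceq 0.99 \hat M_j$ holds (since both are within a $1.05$ multiplicative factor of $M$, and $1.05^{-2} > 0.01$, $1.05^2 < 0.99$ fails — wait, $1.05^2 = 1.1025 > 0.99$, so one must be careful: I would instead take the single-batch closeness factor small enough, say $1.005$, so that $1.005^{-2} > 0.01$ and $1.005^{2} < 0.99$ both hold; this only changes constants in the $\mathrm{poly}$). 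Hence every good $i$ is sandwiched by every good $j$, so every good $i$ has at least $0.9\sqrt N - 1 \ge 0.8\sqrt N$ sandwiching partners, and the algorithm's selection rule succeeds in picking some index $i^\star$ meeting the $0.8\sqrt N$ threshold. Finally I must argue the \emph{selected} $i^\star$ is itself close to $M$, even if it is bad: since $i^\star$ is sandwiched by $0.8\sqrt N$ indices and at least $0.9\sqrt N$ are good, at least one good index $j$ sandwiches $i^\star$, giving $0.01\hat M_j \preceq \hat M_{i^\star}\preceq 0.99\hat M_j$ with $\hat M_j$ within $1.005$ of $M$; combining, $\hat M_{i^\star}$ is within a constant multiplicative factor of $M$ — and here I realize the stated bounds $[0.01,0.99]$ are too loose to directly yield $[9/10,11/10]$, so the honest fix is that \Cref{algorithm:moment-matrix-estimator} as used in \Cref{algorithm:outlier-removal-main} only needs \emph{some} constant-factor spectral approximation, and \Cref{lem: estimating moments of a tame distribution}'s $[9/10,11/10]$ conclusion requires tightening the sandwiching constants in the algorithm to, say, $[1-c, 1+c]$ for small $c$ and the single-batch factor to $1+c/3$; with that adjustment the chain of inequalities gives $(1-c/3)^2(1-c) \le p^\top\hat M_{i^\star}p / p^\top M p \le (1+c/3)^2(1+c)$, which is $\le [9/10,11/10]$ for $c$ a small absolute constant. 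The main obstacle is the single-batch relative-spectral concentration under only $k$-tameness (heavy tails): this is where the $\mathrm{poly}((kd)^k)$ sample requirement comes from, and it requires a careful truncation argument — truncate $\langle u, y\rangle^2$ at a threshold $\sim (k\log m)^{2k}$, bound the contribution of the truncated-away part in expectation using the tame tail bound, and apply a matrix-Bernstein inequality to the truncated vectors over an $\eps$-net, all while keeping the failure probability $e^{-\Omega(m)}$ to absorb the net's cardinality.
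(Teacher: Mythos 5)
Your overall architecture is the same as the paper's and is forced by \Cref{algorithm:moment-matrix-estimator}: prove a single-batch spectral-approximation guarantee in the whitened (relative Loewner) sense, boost over the $\sqrt N$ batches, and then argue that any index passing the sandwiching test is itself close to $M$ because at least one of its sandwiching partners is good. Where you genuinely diverge is in the single-batch concentration. The paper does \emph{not} use truncation, matrix Bernstein, or an $\eps$-net: it Gram--Schmidt-orthonormalizes the degree-$k$ polynomials under $\DP$ into $\{r_j\}$, observes that $\|I-M^{-1/2}M_iM^{-1/2}\|_F^2$ is a sum of terms $\bigl(\ex_{\DP}[r_jr_{j'}]-\ex_{S_i}[r_jr_{j'}]\bigr)^2$ whose expectations are $\var_{\DP}(r_jr_{j'})/\sqrt N$, bounds each such variance by $k^{O(k)}$ by integrating the $k$-tame tail bound, and applies Chebyshev to get a \emph{constant} per-batch success probability. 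This is much lighter than your route precisely because constant per-batch success is all that is needed --- the batching-plus-selection step is what converts it into a $1-\delta$ guarantee, so your insistence on per-batch failure probability $e^{-\Omega(m)}$ (to absorb a net of size $e^{O(m)}$) is unnecessary and is the main source of extra work in your sketch; it is feasible within the stated sample budget but the truncation-plus-Bernstein details you defer are the hardest part of your version and the easiest to get wrong (in particular, per-direction truncation of $\langle u,y\rangle^2$ does not directly define a truncated random matrix to which matrix Bernstein applies). Two smaller points: your treatment of the kernel of $M$ should not appeal to $\Gauss_d$ or $\unif(\cube d)$, since the lemma is for arbitrary $k$-tame $\DP$; the correct (and paper's) argument is that $p^\top Mp=0$ forces $p(\x)=0$ a.s.\ under $\DP$, hence $p^\top\hat M_ip=0$ a.s.\ for a basis of the kernel. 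And you are right that the constants $0.01$ and $0.99$ in the algorithm's sandwiching test cannot yield the $[9/10,11/10]$ conclusion as written --- the paper's own proof in fact uses the condition $\frac{99}{100}\hat M_j\preceq \hat M_i\preceq\frac{101}{100}\hat M_j$, so this is a typo in the algorithm statement rather than a flaw in your reasoning, and your proposed fix (tightening the sandwiching constants) is exactly what the intended proof does.
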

\begin{proof}
	The run-time bound of $\poly(N)$ is immediate. We now argue that the algorithm succeeds with probability at least $1-\delta$ when the absolute constant $C$ is large enough. The matrix $\hat{M}$ is symmetric positive-semidefinite because each matrix $M_i=\ex_{\x \sim S_i} \left[(\x^{\otimes k}) (\x^{\otimes k})^T\right]$ is symmetric positive-semidefinite by construction. 
	
	Let $M$ denote the matrix $\ex_{\x \sim \DP} \left[(\x^{\otimes k}) (\x^{\otimes k})^T\right]$.
	Clearly $\ex_{S_i}[M_i]=M$, and we will use the second moment method to bound the deviation, but first we observe that with probability $1$ for every polynomial $p$ in the nullspace of $M$ we also have $p^T \ex_{S_i}[M_i] p=0$. Indeed, this is the case because  $p^T M p =\ex_{\x \sim \DP} (p(\x))^2$ means that $p(\x)^2=0$ almost surely for $\x \sim \DP$. Thus, with probability $1$ this holds for a collection of basis elements for the nullspace of $M$, and consequently for the entire nullspace of $M$.
	
	Now, we bound the deviation between $M$ and $M_i$ for polynomials $p$ for which $p^T M p>0$. Let $m'$ be such that $m-m'$ is the dimension of the nullspace of $M$. Then there is a collection $\{r_1, \cdots, r_{m'}\}$ of degree-$k$ polynomials that satisfy \[
	\ex_{\x \sim \DP}
	[
	r_{j}(\x) r_{j'}(\x)
	]
	=
	\begin{cases}
		1&\text{ if }j=j'\\
		0 &\text{ otherwise}.
	\end{cases}
	\]
	(Such collection necessarily exists via the Gram-Schmidt process.) Overall, we have for any polynomial $p$
	\[
	p^T M p
	=
	\ex_{\x \sim \DP}[(p(\x))^2]
	=
	\sum_j 
	\left(\ex_{\x \sim \DP}[p(\x) r_j(\x)]\right)^2
	=
	p^T
	\sum_j 
	\left(\ex_{\x \sim \DP}[p(\x) r_j(\x)]\right)^2
	p
	\]
	Therefore, denoting $\{e_1,\cdots, e_m\}$ the $m$ basis vectors in $\bR^m$ we have
	\begin{align}
		M^{1/2} p
		=
		\begin{bmatrix}
			\ex_{\x \sim \DP}[p(\x) r_1(\x)] \\
			\ex_{\x \sim \DP}[p(\x) r_2(\x)] \\
			\vdots \\
			\ex_{\x \sim \DP}[p(\x) r_{m'}(\x)]\\
			0\\
			\cdots\\
			0
		\end{bmatrix}
		&&
		M^{-1/2} \left(\sum_{j=1}^{m'} c_i e_i\right)
		=
		\sum_{i} c_i r_i
	\end{align}
	(Where $M^{-1/2} p$ is defined to be the Moore-Penrose pseudo-inverse of $M^{1/2}$ if $M$ is singular).
	We now bound the expected Frobenius norm:
	\begin{multline}
		\label{eq: bound on Frobenius norm of matrix deviation}
		\ex_{S_i}
		\left[\norm{I-M^{-1/2}M_iM^{-1/2}}_F\right]
		=
		\ex_{S_i}
		\left[
		\sum_{j, j' \in \{1,\cdots, m\}}
		\left(
		e_{j}^T \left(
		I-M^{-1/2}M_iM^{-1/2}\right) e_{j'}
		\right)^2
		\right]
		=\\
		\sum_{j, j'}
		\left(
		\ex_{\x \sim \DP}
		[r_{j}(\x) r_{j'}(\x)]
		-
		\ex_{\x \sim S_i}
		[r_{j}(\x) r_{j'}(\x)]
		\right)^2
		=
		\frac{1}{\sqrt{N}}
		\sum_{j_1, j_2 \in \{1,\cdots,m'\}}
		\mathrm{Var}_{\x \sim \DP}(
		r_{j}(\x) r_{j'}(\x))
	\end{multline}
	From the $k$-tameness of distribution $\DP$, and the fact that $\ex_{\x \sim \DP}[(r_j(\x))^2]=1$ we see that 
	\begin{multline}
		\label{eq: bound on variance of product of basis polys}
		\mathrm{Var}_{\x \sim \DP}(
		r_{j}(\x) r_{j'}(\x))
		\leq
		\ex_{\x \sim \DP}\left[(
		r_{j}(\x) r_{j'}(\x))^2\right]
		=
		\int_{0}^{\infty}
		\pr\left[r_{j}(\x) r_{j'}(\x))^2>B^2\right] 2B \ dB
		\leq\\
		2e^{4k}+
		\int_{e^{2k}}^{\infty}
		e^{-\Omega(B^{1/(2k)})} 2B \ dB
		=O(k^{O(k)}).
	\end{multline}
	Thus, combining Equation \ref{eq: bound on Frobenius norm of matrix deviation} and Equation \ref{eq: bound on variance of product of basis polys} and recalling that $M$ is an $m \times m$ matrix with $m\leq d^k$ we get a bound for the expected spectral norm of $M-M_i$:
	\begin{multline*}
		\ex_{S_i}
		\left[\norm{I-M^{-1/2}M_iM^{-1/2}}_2\right]
		\leq
		\sqrt{m}
		\ex_{S_i}
		\left[\norm{I-M^{-1/2}M_iM^{-1/2}}_F\right]
		\leq \\
		\frac{\sqrt{m}}{\sqrt{N}} m^2 O(k^{O(k)})\leq
		\frac{1}{\sqrt{N}} (dk)^{O(k)}
	\end{multline*}
	Recalling that $N\geq C\left((kd)^k \log 1/\delta\right)^C$, we see that for a sufficiently large absolute constant $C$ we can use the Chebyshev's inequality to conclude that with probability at least $0.9$ we have $\norm{M-M_i}_2\leq 10^{-3}$, which implies that 
	\[
	\left(1-10^{-3}\right) M_i
	\preceq
	M
	\preceq
	\left(1+10^{-3}\right)
	M_i.
	\]
	Recalling that $i$ is in $\{1,\cdots, \sqrt{N}\}$ and
	using the standard Hoeffding's inequality, we see that when $C$ is sufficiently large, with probability at least $1-\delta$, the above holds for at least $0.95$ fraction of indices $i$. Call such indices good. For any pair of values $i_1$ and $i_2$ of good indices we will have 
	\[
	M_{i_1}
	\preceq
	\frac{M}{\left(1-10^{-3}\right)}
	\preceq
	M_{i_2}
	\frac{1+10^{-3}}{1-10^{-3}}
	\]
	Thus, the algorithm will will successfully terminate and output a value $M_{i_0}$. Since $\frac{99}{100}
	M_{i_0}
	\preceq
	M_i
	\preceq
	\frac{101}{100}
	M_{i_0}$ for at least $0.8$ fraction of values of $i$, this has to hold for at least one good index $i$ (because at least $0.95$ fraction of all indices are good). Thus,
	\[
	(1-10^{-3})
	\frac{99}{100} M_\DP 
	\preceq
	\frac{99}{100} M_i 
	\leq
	M_{i_0}
	\preceq
	\frac{101}{100} M_i
	\preceq
	\frac{101}{100} \frac{M}{\left(1-10^{-3}\right)},
	\]
	which implies the correctness of the algorithm.
\end{proof}

The following lemma allows one to show that as long as a distribution $\DQ$ is filtered using a low-VC-dimension function $f$, the moments of the resulting filtered dataset approximate well the moments of the distribution one obtains by filtering the distribution $\DQ$ using $f$.
\begin{lemma}
	\label{lemma: spectrac concentration for VC functions}
	Let $\DP$ be a probability distribution over $\bR^d$ and
	let  $\mathcal{F}$ be a function class over $\bR^d$ taking values in $\{0,1\}$, such that for every $f$  in $\mathcal{F}$ we have $f(\x)=0$ for all $\x$ such that $\underset{p: \ 
		\ex_{\x \sim \DP} p(\x) \leq 1 }{\max}
	(p(\x))^2>B$. Let $\DQ$ be a probability distribution over $\bR^d$ and $S$ be collection of $N$ i.i.d. samples from $\DQ$, then with probability at least $1-\delta$ we have 
	\begin{multline*}
	\sup_{
		\substack{
			f \in \mathcal{F},\ 
			p \text{ of degree $k$ s.t:}\\ \:\bE_{\x\sim \DP}(p(\x))^2\leq 1
		}
	}
	\left\lvert
	\frac{1}{N}
	\sum_{\x \sim S}
	\left[f(\x) (p(\x))^2\right]
	-
	\bE_{\x \sim \DQ}\left[
	f(\x) (p(\x))^2 \right]
	\right\rvert
	\leq \\
	O  \left( \sqrt{B}
	\left(
	V d^{2k}  \frac{\log N}{N} \log \frac{1}{\delta} \right)^{1/4}
	\right)
	\end{multline*}
\end{lemma}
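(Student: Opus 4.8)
The plan is to reduce this uniform control of the random quantities $\frac1N\sum_{\x\in S}f(\x)(p(\x))^2$ to a single uniform convergence statement over a class of $\{0,1\}$-valued threshold functions, to which the standard VC bound (\Cref{fact: VC dimension implies uniform convergence}) applies, after first using the hypothesis on $\mathcal F$ to make $f(\x)(p(\x))^2$ bounded. For the boundedness: if $f(\x)=1$ then, by the defining property of $\mathcal F$, $\x$ lies in the region on which every degree-$k$ polynomial $q$ with $\bE_{\x\sim\DP}[(q(\x))^2]\le 1$ satisfies $(q(\x))^2\le B$; applied to our $p$ this gives $0\le f(\x)(p(\x))^2\le B$ for every $\x$. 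Hence the layer-cake identity $f(\x)(p(\x))^2=\int_0^B f(\x)\,\mathbbm 1[(p(\x))^2>t]\,dt$ holds (using $f(\x)\mathbbm 1[(p(\x))^2>t]=\mathbbm 1[f(\x)(p(\x))^2>t]$ for $t>0$), so that, writing $g_t(\x):=f(\x)\mathbbm 1[(p(\x))^2>t]$ and pulling absolute values inside the integral,
\[
\Bigl|\tfrac1N\textstyle\sum_{\x\in S}f(\x)(p(\x))^2-\bE_{\x\sim\DQ}[f(\x)(p(\x))^2]\Bigr|\ \le\ \int_0^B\Bigl|\tfrac1N\textstyle\sum_{\x\in S}g_t(\x)-\bE_{\x\sim\DQ}[g_t(\x)]\Bigr|\,dt .
\]

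Next I would bound the integrand uniformly over $(f,p,t)$ by one VC bound. Let $\mathcal G=\{\x\mapsto f(\x)\mathbbm 1[(p(\x))^2>t]:f\in\mathcal F,\ \deg p\le k,\ t\ge 0\}$, whose members are indicators of the intersections $\{f=1\}\cap\{(p(\x))^2-t>0\}$. Since $(p(\x))^2-t$ has degree at most $2k$, the family $\{(p(\x))^2>t\}$ is contained in the class of degree-$2k$ polynomial threshold functions, of VC dimension at most $d^{2k}+1$ by \Cref{fact: VC dimension of PTFs}. Taking $V':=\max\{V,d^{2k}+1\}$ and applying \Cref{fact: VC dimension of ands and ors} with $m=2$ to the AND of $\mathcal F$ and this class gives $\mathrm{VC}(\mathcal G)=O(V')=O(Vd^{2k})$ up to logarithmic factors (the extra constraint $\bE_\DP[(p)^2]\le 1$ only shrinks $\mathcal G$, so it is irrelevant here). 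Applying \Cref{fact: VC dimension implies uniform convergence} to $\mathcal G$ with the distribution $\DQ$ (taking the label identically $0$, so $|g(\x)-y|=g(\x)$), with probability at least $1-\delta$,
\[
\sup_{g\in\mathcal G}\Bigl|\tfrac1N\textstyle\sum_{\x\in S}g(\x)-\bE_{\x\sim\DQ}[g(\x)]\Bigr|\ \le\ O\!\left(\sqrt{\,Vd^{2k}\,\tfrac{\log N}{N}\,\log\tfrac1\delta\,}\right).
\]

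Integrating this bound over $t\in[0,B]$ controls the left-hand side of the first display by $B$ times the displayed rate; combining with the deterministic bound $|\cdots|\le B$ and a short interpolation yields the claimed $O\!\big(\sqrt B\,(Vd^{2k}\,\tfrac{\log N}{N}\,\log\tfrac1\delta)^{1/4}\big)$. The cleanest route to the precise $\sqrt B$/fourth-root form is to apply the uniform convergence above not to $f(\x)(p(\x))^2$ itself but to the normalized function $f(\x)(p(\x))^2/B\in[0,1]$, and then convert the resulting $L_2$-level deviation into a deviation of $\tfrac1N\sum_{\x\in S} f(\x)(p(\x))^2$ using $|\sqrt a-\sqrt b|\le\sqrt{|a-b|}$, so that the envelope $\sqrt B$ is peeled off only once.

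The substantive content — everything else being the routine layer-cake / VC-uniform-convergence pipeline — is the VC bookkeeping in the middle step: one must (i) recognize $(p(\x))^2>t$ as a \emph{degree-$2k$} (not higher) polynomial threshold, so its VC dimension stays $d^{O(k)}$, and (ii) control the VC dimension of its intersection with the otherwise arbitrary bounded-VC class $\mathcal F$. I expect the one genuinely fiddly point is arranging the final interpolation (and the normalization trick) so that the power of $B$ and the $1/4$ exponent come out exactly as stated rather than as the cruder $B\cdot(\cdots)^{1/2}$.
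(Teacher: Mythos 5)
Your reduction is essentially the paper's: use the hypothesis on $\mathcal F$ to get the envelope $f(\x)(p(\x))^2\in[0,B]$, slice by level sets of $(p(\x))^2$, observe that each slice is the indicator of an AND of a set from $\mathcal F$ with degree-$2k$ polynomial-threshold sets (so \cref{fact: VC dimension of PTFs} and \cref{fact: VC dimension of ands and ors} give VC dimension $O(Vd^{2k})$ up to logs), and invoke \cref{fact: VC dimension implies uniform convergence} once, uniformly over the whole class. The only real difference is cosmetic: you use the continuous layer-cake integral $\int_0^B$, whereas the paper discretizes $[0,B]$ into $B/\Delta$ buckets of width $\Delta$ (each bucket indicator being an AND of $\mathcal F$ with \emph{two} degree-$2k$ PTFs) and then optimizes $\Delta$ — that optimization is exactly where its $(\cdot)^{1/4}$ exponent comes from. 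The one weak spot in your write-up is the endgame. Your direct integration cleanly yields $B\cdot\bigl(Vd^{2k}\tfrac{\log N}{N}\log\tfrac1\delta\bigr)^{1/2}$, and the subsequent ``interpolation''/``normalize by $B$ and use $|\sqrt a-\sqrt b|\le\sqrt{|a-b|}$'' sketch does not actually produce the stated $\sqrt B(\cdots)^{1/4}$ form: writing $\eps_{\vc}$ for the square-root rate, $\min(B\eps_{\vc},\,B)\le\sqrt{B\eps_{\vc}}$ holds iff $B\eps_{\vc}\le1$ or $\eps_{\vc}\ge B$, and fails in the intermediate regime. But this is harmless: whenever the stated bound is below the trivial envelope $B$ — i.e.\ whenever the lemma says anything, and in particular for the sample sizes at which the paper invokes it — one has $B\eps_{\vc}\le 1$, and there your bound $B\eps_{\vc}\le\sqrt{B\eps_{\vc}}$ is actually \emph{stronger} than the stated one. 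So you should simply drop the interpolation/normalization paragraph and conclude from $B\eps_{\vc}$ directly; as written, that last step is the only part of the argument that does not go through.
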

\begin{proof}
	Let $p$ be a polynomial of degree $k$ s.t: $\bE_{\x\sim \DP}(p(\x))^2\leq 1$ and let $f$ be a function in $\mathcal{F}$.
	We recall that whenever $f(\x)\neq 0$ we have $\underset{p: \ 
		\ex_{\x \sim \DP} p(\x) \leq 1 }{\max}
	(p(\x))^2\leq B$ and we let $\Delta$ be a positive real number, to be chosen later. We then have via the triangle inequality
	\begin{multline}
		\left \lvert
		\ex_{x \sim S}
		\left[f(x) (p(x))^2\right]
		-
		\bE_{x \sim \DQ}\left[
		f(x) (p(x))^2 \right]
		\right \rvert
		\leq \\
		\Delta+
		\sum_{j=0}^{B/\Delta}
		\left \lvert
		\ex_{x \sim S}
		\left[f(x) (p(x))^2 \mathbf{1}_{j\Delta \leq (p(x))^2 < (j+1)\Delta } \right]
		-
		\bE_{x \sim \DQ}\left[
		f(x) (p(x))^2 \mathbf{1}_{j\Delta \leq (p(x))^2 < (j+1)\Delta }  \right]
		\right \rvert \leq\\ 
		\Delta 
		+ B
		\sum_{j=0}^{B/\Delta}
		\bigg \lvert
		\ex_{x \sim S}
		\left[(f(x)=1) \land \left( p(x)^2\in \bigg(j\Delta, (j+1)\Delta \bigg]\right)\right]
		-\\-
		\bE_{x \sim \DQ}\left[
		(f(x)=1) \land \left( p(x)^2\in \bigg(j\Delta, (j+1)\Delta \bigg]\right)  \right]
		\bigg \rvert
	\end{multline}
	The function that maps $\x$ to $1$ if and only if $f(\x)=1$ and $p(\x)^2\in (j\Delta, (j+1)\Delta ]$ is a logical AND of a function in $\mathcal{F}$ and two polynomial threshold functions of degree at most $2k$. Thus, by Fact \ref{fact: VC dimension of ands and ors} the VC dimension of these functions is at most $O(d^{2k}+V)\leq O(d^{2k}\cdot V)$. Therefore, we can use Fact \ref{fact: VC dimension implies uniform convergence} together with the inequality above to conclude that with probability at least $1-\delta$
	\[
	\sup_{
		\substack{
			f \in \mathcal{F},\ 
			p \text{ of degree $k$ s.t:}\\ \:\bE_{\x\sim \DP}(p(\x))^2\leq 1
	}}
	\left \lvert
	\ex_{\x \sim S}
	\left[f(\x) (p(\x))^2\right]
	-
	\bE_{\x \sim \DQ}\left[
	f(\x) (p(\x))^2 \right]
	\right \rvert
	\leq \Delta +
	O\Bigr(
	\frac{B}{\Delta}
	\sqrt{\frac{V d^{2k} \log N}{N} \log \frac{1}{\delta}}
	\Bigr).
	\]
	Finally, taking $\Delta$ to minimize the expression above, we recover our proposition.
\end{proof}

\section{Certified Learning with Distribution Shift, Omitted Details}\label{appendix:pq}

\subsection{PQ Setting}\label{appendix:pq-pq}

\subsubsection{General Halfspaces}\label{appendix:pq-pq-halfspaces}

We now prove \Cref{theorem:pq-halfspaces}, which is restated here for convenience.

\begin{theorem}[PQ Learning of Halfspaces]\label{theorem:appendix-pq-halfspaces}
    For any $\eps,\delta\in(0,1)$, there is an algorithm that PQ learns the class of general halfspaces with respect to $\Gauss_d$ in the realizable setting, up to error and rejection rate $\eps$ and probability of failure $\delta$ that runs in time $\poly(d^{\log(\frac{1}{\eps})}, \log(1/\delta))$.
\end{theorem}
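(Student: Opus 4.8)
\noindent The plan is a dichotomy on the magnitude of the unknown bias $\tau^*$, which the algorithm can detect from the training-label imbalance since $\pr_{(\x,y)\sim\Dtrain}[y=+1]=\Phi(\tau^*)$. First I would draw $\poly(1/\eps,\log(1/\delta))$ labeled examples and let $\hat p$ be the empirical fraction of $+1$ labels, an estimate of $\Phi(\tau^*)$ to additive error at most $\rho$, where $\rho=\poly(\eps)$ is a small polynomial to be fixed. If $2\rho\le\hat p\le 1-2\rho$ (``moderate bias''), then $\Phi(\tau^*)\in[\rho,1-\rho]$, so $|\tau^*|=O(\sqrt{\log(1/\rho)})=O(\sqrt{\log(1/\eps)})$ and $e^{O((\tau^*)^2)}=\poly(1/\eps)$; otherwise, say $\hat p>1-2\rho$, we have $\Phi(\tau^*)\ge 1-3\rho$, so $\tau^*$ exceeds any prescribed threshold $T=\Theta(\sqrt{\log(1/\eps)})$ once $\rho$ is a small enough polynomial in $\eps$ (``extreme bias''); the case $\hat p<2\rho$ is symmetric.

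\noindent\textbf{Moderate bias.} Here I would apply \Cref{proposition:parameter-recovery-halfspaces} with accuracy $\eps'=\poly(\eps/d)$, which, from $\poly(d,1/\eps)\,e^{O((\tau^*)^2)}\log(1/\delta)=\poly(d,1/\eps)\log(1/\delta)$ labeled examples, returns $\hat\w,\hat\tau$ with $\|\hat\w-\wopt\|_2\le\eps'$ and $|\hat\tau-\tau^*|\le\eps'$. Exactly as in the homogeneous warm-up, output $h(\x)=\sign(\hat\w\cdot\x+\hat\tau)$ together with the selector $g$ that rejects the disagreement region $\disagreementregion=\{\x:\exists(\vv_1,\theta_1),(\vv_2,\theta_2)\ \eps'\text{-close to }(\hat\w,\hat\tau)\text{ with }\sign(\vv_1\cdot\x+\theta_1)\ne\sign(\vv_2\cdot\x+\theta_2)\}$, whose membership test is a pair of linear programs over a ball intersected with a box; thus $g$ is succinct, efficiently computable, and needs no test data. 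Outside $\disagreementregion$ all such halfspaces agree, in particular $\coptcommon$ and $h$ agree, so the accuracy error $\pr_{(\x,y)\sim\Dtest}[y\ne h(\x),g(\x)=1]$ is $0$; and the Gaussian measure of $\disagreementregion$ is $\poly(\eps')\cdot\poly(\sqrt d)$ by the standard disagreement-coefficient bound for halfspaces under $\Gauss_d$ (e.g.\ \cite{hanneke2014theory}; the bounded bias does not affect this), which is at most $\eps$ for a suitable $\eps'=\poly(\eps/d)$, so $\pr_{\x\sim\Gauss_d}[g(\x)=0]\le\eps$.

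\noindent\textbf{Extreme bias.} Here I would output the constant hypothesis $h\equiv+1$ (the empirical majority of the training labels equals $+1$ with probability $1-\delta/4$, since $\pr_{\x\sim\Gauss_d}[\coptcommon(\x)\ne+1]=\Phi(-\tau^*)\le 3\rho<1/3$) and the selector $g$ produced by \Cref{theorem:outlier-removal-main} applied to $\Dtestmarginal$ and $\Gauss_d$ with $k=O(\log(1/\eps))$, $\alpha=\eps/2$, accuracy parameter $\eps$, and failure probability $\delta/4$ (recall $\Gauss_d$ is $k$-tame). Condition \ref{condition:outlier-removal-reasonable} gives $\pr_{\x\sim\Gauss_d}[g(\x)=0]\le\tfrac{\eps}{2}\tv(\Gauss_d,\Dtestmarginal)+\tfrac{\eps}{2}\le\eps$, the required rejection rate even for arbitrary $\Dtestmarginal$. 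For accuracy, $h$ errs on a test point $\x$ only when $\wopt\cdot\x\le-\tau^*$, hence only when $(\wopt\cdot\x)^{2k}\ge(\tau^*)^{2k}$; applying condition \ref{condition:outlier-removal-bounding} to the degree-$k$ polynomial $p(\x)=(\wopt\cdot\x)^k$ (which the algorithm never needs to identify) and then Markov's inequality,
\[
\pr_{\x\sim\Dtestmarginal}[\wopt\cdot\x\le-\tau^*,\,g(\x)=1]\ \le\ \frac{\bE_{\x\sim\Dtestmarginal}[(\wopt\cdot\x)^{2k}g(\x)]}{(\tau^*)^{2k}}\ \le\ \frac{200}{\alpha}\cdot\frac{(2k-1)!!}{(\tau^*)^{2k}},
\]
using $\bE_{\x\sim\Gauss_d}[(\wopt\cdot\x)^{2k}]=(2k-1)!!\le(2k)^k$. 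Taking the threshold $T\ge 2\sqrt k$ (so $\tau^*\ge T$ forces the last fraction to be at most $2^{-k}$) and then $k=\Theta(\log(1/\eps))$ makes this bound at most $\eps$, as needed.

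\noindent\textbf{Main difficulty.} The crux is the extreme-bias case: with essentially constant training labels the learner has no information about $\wopt$, yet must reject precisely the halfspace (in the unknown direction $\wopt$) on which the truth departs from the constant. This is exactly what \Cref{theorem:outlier-removal-main} buys us — condition \ref{condition:outlier-removal-bounding} holds \emph{simultaneously for every} degree-$k$ polynomial, so it controls $\bE[(\wopt\cdot\x)^{2k}g(\x)]$ for the unknown $\wopt$, and a logarithmic degree suffices to convert that moment bound into an $\eps$-bound on the offending mass. The rest is routine: a union bound over the $O(1)$ failure events, and checking that the running time is dominated by the $(kd)^k$ factor of \Cref{theorem:outlier-removal-main} with $k=O(\log(1/\eps))$, i.e.\ $\poly(d^{\log(1/\eps)},\log(1/\delta))$; the realizable adversarial variant goes through identically via the adversarial form of the outlier-removal guarantee.
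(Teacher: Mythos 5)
Your proposal is correct and follows essentially the same route as the paper: detect the bias regime from the training-label imbalance, use the parameter-recovery plus disagreement-region selector of \cite{klivans2023testable} in the moderate-bias case, and in the extreme-bias case output the constant majority hypothesis together with the selector from \Cref{theorem:outlier-removal-main} with $k=\Theta(\log(1/\eps))$ and $\alpha=\eps/2$, converting the degree-$k$ spectral bound on $(\wopt\cdot\x)^{k}$ into an error bound via Markov. Your quantitative bookkeeping (Gaussian moment $(2k-1)!!\le(2k)^k$ versus the paper's $(2C_3\log(1/\eps))^k$, and the threshold $\tau^*\gtrsim\sqrt{k}$ versus the paper's constant $C_2$ sufficiently larger than $C_3$) matches the paper's argument.
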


\begin{proof}
    The algorithm does the following for sufficiently large universal constants $C_1, C_2, C_3 \ge 1$.
    \begin{enumerate}
        \item Compute the values $\pr_{(\x,y)\sim\Strain}[y = 1]$ and $\pr_{(\x,y)\sim\Strain}[y = -1]$.
        \item If either of these values is at most $\eps^{C_2}/C_1$, then let $g$ be the selector of \Cref{theorem:outlier-removal-main} with inputs $\eps$, $k = C_3\log(1/\eps)$, $\alpha = \eps/2$ and access to samples from $\DP$ and $\DQ$ and $h$ the constant hypothesis for the value in $\{-1,1\}$ with which the labels are most frequently consistent.
        \item Otherwise, let $\hat\w$ and $\hat\tau$ be as in \Cref{proposition:parameter-recovery-halfspaces} from some sufficiently large labeled sample from the training distribution. Let $h(\x) = \sign(\hat\w\cdot\x + \hat\tau)$ and for $\mathcal{W} = \{(\w,\tau): \|\w-\hat\w\|_2 \le (\eps/d)^{C_2}/C_1, |\tau-\hat\tau| \le (\eps/d)^{C_2}/C_1\}$, let $g(\x)$ (where $g:\R^d\to \{0,1\}$) be $0$ if and only if there are $(\w_1,\tau_1),(\w_2,\tau_2)\in\mathcal{W}$ such that $\sign(\w_1\cdot \x+\tau_1)\neq \sign(\w_2\cdot \x+\tau_2)$ (which can be implemented via a linear program with quadratic constraints). 
        \item Return $(g,h)$.
    \end{enumerate}

    Note that when step 3 is activated, then, with high probability, we have that the bias $\tau^*$ of the ground truth is $\tau^* = O(\sqrt{\log(1/\eps)})$ and, therefore, the samples required to apply \Cref{proposition:parameter-recovery-halfspaces} is polynomial in $1/\eps$. From Lemma 5.7 in \cite{klivans2023testable}, we then have that the selector $g$ has Gaussian rejection rate $\eps$, as desired. The accuracy guarantee for the case of step 3 is given by the guarantee of \Cref{proposition:parameter-recovery-halfspaces} combined with the fact that, with high probability, $h$ agrees with the ground truth anywhere outside the disagreement region (i.e., for all $\x$ such that $g(\x)=1$).

    When step 2 is activated, then the rejection rate is bounded by \Cref{theorem:outlier-removal-main} and the accuracy guarantee is implied by the fact that $\tau^* \ge  \sqrt{C_2 \log(1/\eps)/C}$ (for some universal constant $C\ge 1$) and the following reasoning, where we suppose, without loss of generality that $h(\x) = 1$.
    \begin{align*}
        \pr_{\x\sim\Dtestmarginal}[h(\x)\neq \sign(\wopt\cdot\x+\tau^*), g(\x) = 1] &\le \pr_{\x\sim\Dtestmarginal}[1\neq \sign(\wopt\cdot\x+\tau^*), g(\x) = 1] \\
        &\le \pr_{\x\sim\Dtestmarginal}[|\wopt\cdot \x| > |\tau^*|, g(\x) = 1] \\
        &\le \frac{\E_{\x\sim\Dtestmarginal}[(\wopt\cdot\x)^{2k}g(\x)]}{(\tau^*)^{2k}} \\
        &\le \frac{400}{\eps} \frac{(2C_3\log(1/\eps))^k}{(C_2\log(1/\eps)/C)^k}\,,
    \end{align*}
    where we used Markov's inequality and the guarantee from \Cref{theorem:outlier-removal-main}. Suppose that $C_2$ is sufficiently larger than $C_3$ and $C_3$ is sufficiently large. Then we have that $\frac{(2C_3\log(1/\eps))^k}{(C_2\log(1/\eps)/C)^k} \le (1/2)^k \le \eps^{C_3}$, which gives that $\pr_{\x\sim\Dtestmarginal}[h(\x)\neq \sign(\wopt\cdot\x+\tau^*), g(\x) = 1] \le 400 \eps^{C_3} / \eps \le \eps$ for sufficiently large $C_3$.
\end{proof}

In the proof of \Cref{theorem:pq-halfspaces}, we use \Cref{proposition:parameter-recovery-halfspaces}. However, the original version of \Cref{proposition:parameter-recovery-halfspaces} worked for constant probability of failure. We show the following general lemma which can be used to amplify the probability of success in logarithmic number of rounds.

\begin{lemma}[Parameter Recovery Success Probability Amplification]\label{amplify-probability}
    Let $\X$ be a vector space and $\|\cdot\|$ some norm. For some $\wopt\in \X$ and $\eps\in(0,1)$, suppose that an algorithm $\A$ outputs with probability at least $0.9$ some $\w\in\X$ with $\|\wopt - \w\| \le \eps$. Then, for any $\delta\in(0,1)$, if we run $\A$ for $T = O(\log(1/\delta))$ independent rounds receiving outputs $W = \{\w^1,\w^2,\dots, \w^T\}$, and take $\hat\w = \arg\min_{\w\in W} \sum_{\w'\in W} \|\w-\w'\|$, then we have $\|\hat\w-\wopt\|\le 5\eps$, with probability at least $1-\delta$.
\end{lemma}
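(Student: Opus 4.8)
The plan is to run a geometric-median-style boosting argument on the $T$ independent outputs. Call round $i$ \emph{good} if $\|\w^i-\wopt\|\le\eps$; by hypothesis each round is good independently with probability at least $0.9$. First I would apply a Hoeffding bound to the count of good rounds: taking $T=C\log(1/\delta)$ for a large enough universal constant $C$, with probability at least $1-\delta$ at least $0.8T$ of the rounds are good. I condition on this event and write $G$ for the set of good rounds and $B=[T]\setminus G$, so $|G|\ge 0.8T$ and $|B|\le 0.2T$; in particular $G$ is nonempty, and I fix a good output $\w^g\in W$.

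The heart of the proof is a deterministic statement: if $\hat\w$ minimizes $\sum_{\w'\in W}\|\w-\w'\|$ over $\w\in W$ and a strict majority of $W$ lies within $\eps$ of $\wopt$, then $\|\hat\w-\wopt\|\le 5\eps$. Set $D=\|\hat\w-\wopt\|$. For every good $\w'$ the reverse triangle inequality gives $\|\hat\w-\w'\|\ge D-\eps$, hence $\sum_{\w'\in G}\|\hat\w-\w'\|\ge |G|(D-\eps)$. On the other side, optimality of $\hat\w$ together with $\w^g\in W$ gives $\sum_{\w'\in W}\|\hat\w-\w'\|\le\sum_{\w'\in W}\|\w^g-\w'\|$, and I bound the right-hand side by $\|\w^g-\w'\|\le 2\eps$ for $\w'\in G$ and $\|\w^g-\w'\|\le(\eps+D)+\|\hat\w-\w'\|$ for $\w'\in B$ (using $\|\w^g-\hat\w\|\le\eps+D$). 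Subtracting the common term $\sum_{\w'\in B}\|\hat\w-\w'\|$ from both sides leaves $|G|(D-\eps)\le 2\eps|G|+|B|(\eps+D)$, i.e.\ $D(|G|-|B|)\le\eps(3|G|+|B|)\le 3\eps T$, while $|G|-|B|=2|G|-T\ge 0.6T$. Therefore $D\le 5\eps$, which is the claimed bound.

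I expect the only subtle point to be the triangle-inequality bookkeeping in the second paragraph: the distances $\|\hat\w-\w'\|$ for bad rounds can be arbitrarily large, so the argument must be arranged so that these terms cancel between the two sides rather than being bounded directly, and the constants ($0.8$, hence $0.6T$, hence $3/0.6=5$) must be tracked so as to land exactly on the stated factor $5$. The probabilistic step, and the observation that any minimizer works (ties being harmless), are routine.
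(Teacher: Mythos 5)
Your proof is correct and follows essentially the same route as the paper's: a Hoeffding bound on the number of good rounds, then a comparison of $\sum_{\w'}\|\hat\w-\w'\|$ against $\sum_{\w'}\|\w^g-\w'\|$ for a good anchor point, arranging the triangle inequalities so that the uncontrolled bad-round distances cancel. The only differences are cosmetic (your $0.8T$ threshold versus the paper's $3T/4$), and both land on the constant $5$.
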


\begin{proof}
    Let $W_G$ be the subset of $W$ corresponding to $\w$ such that $\|\w-\wopt\| \le \eps$, let $W_B = W\setminus W_G$ and note that due to a standard Hoeffding bound, $|W_G| \ge \frac{3T}{4}$, with probability at least $1-\delta$, for $T = C\log(1/\delta)$, where $C$ is a sufficiently large universal constant. 

    Say that $\|\hat\w - \wopt\| = \alpha$. Then, by the triangle inequality, $\|\hat\w - \w'\| \ge \|\hat\w - \wopt\| - \|\wopt - \w'\| \ge \alpha - \eps$ for any $\w'\in W_G$ and we have the following
    \begin{align*}
        \hat A := \sum_{\w'\in W}\|\hat\w - \w'\| &= \sum_{\w'\in W_G}\|\hat\w - \w'\| + \sum_{\w'\in W_B}\|\hat\w - \w'\| \\
        &\ge |W_G| (\alpha - \eps) + \sum_{\w'\in W_B}\|\hat\w - \w'\|
    \end{align*}

    Let $\w\in W_G$. We have $\|\w-\w'\| \le \|\w-\hat\w\|+\|\hat\w-\w'\| \le \|\w-\wopt\|+\|\wopt-\hat\w\|+\|\hat\w-\w'\| \le \eps+\alpha+\|\hat\w-\w'\|$, for any $\w\in W$. Therefore, in total, we have
    \begin{align*}
        A:= \sum_{\w'\in W}\|\w - \w'\| &= \sum_{\w'\in W_G}\|\w - \w'\| + \sum_{\w'\in W_B}\|\w - \w'\| \\
        &\le 2\eps|W_G| + (\eps+\alpha) |W_B| + \sum_{\w'\in W_B}\|\hat\w - \w'\|
    \end{align*}

    By the definition of $\hat\w$, we have that $\hat A- A \le 0$. With probability at least $1-\delta$, we have
    \begin{align*}
    0 \ge \hat A- A &\ge |W_G|(\alpha - 3\eps) - (\alpha+\eps) (T-|W_G|) \\
    &\ge \frac{3T}{4}(\alpha - 3\eps) - (\alpha + \eps) \frac{T}{4} \\
    &\ge T(\frac{\alpha}{2} - \frac{5\eps}{2})
    \end{align*}
    Therefore, $\alpha \le 5\eps$, which concludes the proof.
\end{proof}

\subsection{Adversarial Setting}\label{appendix:pq-adversarial} 

Our results on realizable PQ learning extend to the following related model, where the evaluation set is formed by some adversary. In the following definition, we consider each element of the considered sets to be a separate object (even if the corresponding value is the same with some other element of the set).

\begin{definition}[Tranductive Learning \cite{goldwasser2020beyond}]
\label{definition:transductive-learning}
    Let $\C$ be a concept class over $\X\subseteq \R^d$ and $\DP$ a distribution over $\X$. The algorithm $\A$ is a transductive learner for $\C$ with respect to $\DP$, up to error $\accuracy$, rejection rate $\rejrate$ and probability of failure $\delta$ if the following hold. If the algorithm $\A$ has access to labeled examples from the distribution $\DP$ labeled by some concept $\coptcommon\in\C$ and receives $N$ unlabeled samples $\Sinp$ that are formed by some adversary who first draws $N$ i.i.d. unlabeled samples $S_\iid$ from $\DP$ and then corrupts any number of elements of $S_\iid$ and substitutes them by the same number of arbitrary points of $\X$, then $\A$ outputs w.p. at least $1-\delta$ some set $\Sfiltered$ and $h:\X\to \cube{}$ such that:
    \begin{enumerate}[label=\textnormal{(}\alph*\textnormal{)}]
    \item\label{item:accuracy-transductive} (\textit{accuracy}) The error after filtering is bounded as $\sum_{(\x,y)\in \Sfiltered}\ind\{y \neq h(\x)\} \le \accuracy N$.
    \item\label{item:rejrate-transductive} (\textit{rejection rate}) The rejection rate of i.i.d. examples is $\#\{\x: \x\in\Siid \cap(\Sinp\setminus\Sfiltered)\} \le \rejrate N$.
    \end{enumerate}
    In particular, we have $\#\{\x: \x\in\Sinp\setminus\Sfiltered\} \le 2\rejrate N$.
\end{definition}

Our proofs of \Cref{theorem:pq-halfspaces,theorem:pq-via-sandwiching} generalize in this setting exactly analogously, with the only difference being the use of \Cref{thm: outlier removal} in place of \Cref{theorem:outlier-removal-main} for the outlier removal process. The learning phase is not different, since the learner has sample access to clean examples from the labeled training distribution. 

For the proof of \Cref{theorem:pq-halfspaces}, we either run the outlier removal process to filter the evaluation dataset in order to ensure that it is concentrated in every direction (in the case when almost all the training examples have the same label) or, if the training examples are indeed informative, we reject only the examples that fall inside the disagreement region. The arguments hold analogously.

For the proof of \Cref{theorem:pq-via-sandwiching}, we filter the evaluation dataset by using degree $k$ outlier removal (see \Cref{thm: outlier removal}) and run polynomial regression on the training distribution to find a hypothesis that has low error on the remaining points of the evaluation dataset. Once more, the analysis is analogous to the one for PQ learning.

\subsection{Tolerant TDS Learning}\label{appendix:tolerant-tds}

We now prove \Cref{theorem:tolerant-tds-via-sandwiching}, which we restate for convenience.

\begin{theorem}[Tolerant TDS Learning via Sandwiching]\label{theorem:tolerant-tds-via-sandwiching-appendix}
    For $\eps,\toler,\delta\in(0,1)$, let $\X\subseteq\R^d$ and $(\DP,\C)$ be an $(\frac{\eps}{C},\frac{\delta}{C},k,m)$-reasonable pair (\Cref{definition:reasonable-pair}) for some sufficiently large universal constant $C>0$. Then, there is an TDS learner $\C$ with respect to $\DP$ up to error $O({\optcommon})+2\toler+\eps$, tolerance $\toler$ and probability of failure $\delta$ with sample complexity $m+\poly(\frac{1}{\eps}(kd)^k \log(1/\delta))$ and time complexity $\poly(\frac{m}{\eps}(kd)^k \log(1/\delta))$.
\end{theorem}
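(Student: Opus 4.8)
The approach is to reuse the architecture of the PQ learner of \Cref{theorem:pq-via-sandwiching} essentially verbatim, with one change: the outlier-removal procedure of \Cref{theorem:outlier-removal-main} is now invoked with the moment-blowup parameter $\alpha$ set to an absolute constant (say $\alpha = 1$) rather than scaled with a rejection rate. In the PQ analysis, the factor $\tfrac{200}{\alpha}$ in condition~\ref{condition:outlier-removal-bounding} is exactly what inflated $\optcommon$ to $\tfrac{\optcommon}{\rejrate}$; keeping $\alpha = \Theta(1)$ costs only a constant there, which is what a TDS learner can afford since it targets error $O(\optcommon)$. Since a TDS learner outputs no selector, we additionally append an explicit acceptance test on the rejection mass of the selector under the test marginal. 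Concretely: (i) run \Cref{algorithm:outlier-removal-main} on fresh samples from $\DP = \Dtrainmarginal$ and from $\Dtestmarginal$ with parameters $\alpha \leftarrow 1$, $\eps \leftarrow \eps/C$, $\delta \leftarrow \delta/C$, and the given $k$, obtaining the succinct description of $g : \X \to \{0,1\}$; (ii) using $m$ labeled training samples $\Strain$, solve the box-constrained least-squares program $\min_p \E_{(\x,y)\sim\Strain}[(y-p(\x))^2]$ over degree-$k$ polynomials with coefficient bound $B$ (the bound supplied by $(\tfrac\eps C, \tfrac\delta C, k, m)$-reasonableness), and set $h = \sign(\hat p)$ for the solution $\hat p$; (iii) draw $\poly(\tfrac1\eps)\log\tfrac1\delta$ fresh unlabeled test points, form the empirical estimate $\hat\eta$ of $\pr_{\x\sim\Dtestmarginal}[g(\x)=0]$, reject if $\hat\eta > 2\toler + \Omega(\eps)$, else accept and output $h$. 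Once $g$ is fixed, conditions~\ref{condition:outlier-removal-bounding} and \ref{condition:outlier-removal-reasonable} are deterministic statements about $\DP$ and $\Dtestmarginal$, so there is no circular dependence between the sample used to build $g$ and the samples used for $\hat\eta$ or for the regression.

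\textbf{Completeness and soundness.} For completeness, suppose $\tv(\DP,\Dtestmarginal) \le \toler$. By the remark following \Cref{theorem:outlier-removal-main} (with $\alpha = 1$), $\pr_{\x\sim\Dtestmarginal}[g(\x)=0] \le (1+\alpha)\tv(\DP,\Dtestmarginal) + \tfrac{\eps}{2C} \le 2\toler + \tfrac{\eps}{2C}$, so a Hoeffding bound places $\hat\eta$ below the acceptance threshold with high probability and the algorithm accepts. (Using the strengthened form in \Cref{remark: strengtened completeness} one also gets acceptance whenever $\Dtestmarginal$ is $O(1)$-smooth with respect to $\DP$.) For soundness, on acceptance we have $\hat\eta \le 2\toler + \Omega(\eps)$, hence $\pr_{\x\sim\Dtestmarginal}[g(\x)=0] \le 2\toler + \eps$ with high probability; combined with $\error(h;\Dtest) \le \pr_{\x\sim\Dtestmarginal}[g(\x)=0] + \pr_{(\x,y)\sim\Dtest}[y\neq h(\x),\, g(\x)=1]$, it remains to bound the last term by $O(\optcommon) + \eps$. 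This is done by repeating the inequality chain in the proof of \Cref{theorem:pq-via-sandwiching}: let $\coptcommon \in \C$ be optimal for $\optcommon$ and $\pup,\pdown$ its $\L_2$-sandwiching pair; bound the term by $\optcommon + \E_{\x\sim\Dtestmarginal}[(\coptcommon(\x)-\hat p(\x))^2 g(\x)]$, split $\coptcommon - \hat p$ through $\pdown$, invoke condition~\ref{condition:outlier-removal-bounding} (now costing only the constant $\tfrac{200}{\alpha}$) to replace $\Dtestmarginal$-expectations of squared degree-$k$ polynomials by $\DP$-expectations, use $\E_\DP[(\pup-\pdown)^2] \le \tfrac\eps C$ from sandwiching, and control $\E_\DP[(\pdown-\hat p)^2] = O(\optcommon) + O(\tfrac\eps C)$ using optimality of $\hat p$ against the feasible point $\pdown$ together with the uniform-convergence guarantee of reasonableness condition~(3) for degree-$k$ polynomials with coefficient bound $B$. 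Summing gives $\error(h;\Dtest) \le O(\optcommon) + 2\toler + \eps$ provided $C$ is a large enough absolute constant, and the sample and time bounds follow by adding the costs of the three phases.

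\textbf{Main obstacle.} The only genuine subtlety is the constant bookkeeping forced by the switch to $\alpha = \Theta(1)$ (versus $\alpha = \Theta(\rejrate)$ in PQ): one has to choose the acceptance threshold so that it sits strictly between the completeness bound $2\toler + \tfrac\eps{2C}$ and the value of $\pr_{\x\sim\Dtestmarginal}[g(\x)=0]$ forced under acceptance, while simultaneously ensuring that all the $O(\eps)$ slack coming from the outlier-removal step, from the polynomial uniform-convergence step, and from the empirical estimation of the rejection rate collapses into a single $\eps$ --- which is exactly why we run the subroutines with the tightened parameter $\eps/C$ and take a sufficiently large (still $\poly(\tfrac1\eps)\log\tfrac1\delta$) test sample for $\hat\eta$. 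Everything else --- feasibility and boundedness of $\hat p$, $k$-tameness of $\DP$, and the low-degree $\L_2$-sandwiching of $\C$ --- is handed to us by the definition of a reasonable pair.
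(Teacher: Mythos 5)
Your proposal is correct and follows essentially the same route as the paper's proof: run the outlier removal of \Cref{theorem:outlier-removal-main} with $\alpha=1$ so that condition \ref{condition:outlier-removal-bounding} costs only a constant, reject if the empirical rejection mass of $g$ under $\Dtestmarginal$ exceeds roughly $2\toler+O(\eps)$, and otherwise output the sign of the box-constrained least-squares polynomial, with soundness and completeness argued exactly as you describe. The only (immaterial) difference is that you invoke the remark following \Cref{theorem:outlier-removal-main} for completeness where the paper re-derives the same bound from condition \ref{condition:outlier-removal-reasonable} plus the TV inequality.
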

Note that 
the notion of tolerance in property testing was introduced in \cite{parnas2006tolerant} and has been the focus of many works including \cite{fischer2005tolerant,valiant2011estimating, blais2019tolerant, rubinfeld2020monotone, canonne2022price, chakraborty2022exploring, blum2018active, chen2023new}. However, over $\R^d$  all existing tolerant distribution testing algorithms (such as \cite{valiant2011estimating}) have run-times and sample complexities of $2^{\Omega(d)}$, which greatly exceeds our run-times.

\begin{proof}[Proof of \Cref{theorem:tolerant-tds-via-sandwiching}]
    The algorithm first runs the outlier removal process of \Cref{theorem:outlier-removal-main} with parameters $\alpha \leftarrow 1$, $\eps\leftarrow \eps/C$, $\delta\leftarrow \delta/C$ and $k\leftarrow k$, to receive the selector $g:\X\to\{0,1\}$. Using a large enough sample from the test marginal $\Dtestmarginal$, the algorithm estimates the quantity $\pr_{\x\sim\Dtestmarginal}[g(\x) = 0]$ and rejects if the estimated value is larger than $2\toler + \frac{2\eps}{C}$. Otherwise, it runs the following box-constrained least squares problem, using at least $m$ labeled examples $\Strain$ from $\Dtrain$, where $t = d^k$ and $B$ is the value specified in \Cref{definition:reasonable-pair}.
    \begin{align*}
        &\min_{p} \E_{(\x,y)\sim\Strain}[(y-p(\x))^2] \\
        &\text{s.t. } p \text{ has degree at most } k \text{ and coefficient bound }B
    \end{align*}
    Let $\hat p$ be the minimizer of the above program. The algorithm accepts and returns classifier $h(\x) = \sign(\hat p(\x))$.

    \textbf{Soundness} follows from the observation that $\pr_{(\x,y)\sim\Dtest}[y\neq h(\x)] \le \pr_{\x\sim\Dtestmarginal}[g(\x) = 0]+\pr_{(\x,y)\sim\Dtest}[y\neq h(\x), g(\x) = 1]$ and the properties of $g$ according to \Cref{theorem:outlier-removal-main}, via an analysis which is analogous to the one used for \Cref{theorem:pq-via-sandwiching}, but with the difference that, since the parameter $\alpha$ of the outlier removal process was chosen to be $1$, the value  $\pr_{(\x,y)\sim\Dtest}[y\neq h(\x), g(\x) = 1]$ is bounded by $O(\optcommon+\frac{\eps}{C})$ (instead of $O(\frac{\optcommon}{\rejrate})$). The term $\pr_{\x\sim\Dtestmarginal}[g(\x) = 0]$, when the test has accepted, is bounded by $2\toler + \eps/2$.

    For \textbf{completeness}, we assume that $\tv(\DP,\Dtestmarginal)\le \toler$ and observe that due to condition \ref{condition:outlier-removal-reasonable}, $\pr_{\x\sim\Dtestmarginal}[g(\x) = 0]\le \pr_{\x\sim\DP}[g(\x) = 0]+\tv(\DP,\Dtestmarginal) \le 2\tv(\DP,\Dtestmarginal) + \frac{\eps}{C} \le 2\toler +\frac{\eps}{C}$. Therefore, the tester will accept with high probability. Furthermore, via Remark \ref{remark: strengtened completeness} our tolerant TDS learning algorithm will also with high probability accept any distribution $\DQ$ that is $1/2$-smooth with respect to $\D$. 
\end{proof}

\section{Applications to Other Learning Settings}\label{section:other-learning-settings}

Our techniques provide new results in other classical settings as well. In particular, we discuss applications on tolerant testable learning as well as robust learning. In total, our results for polynomial regression can be summarized in \Cref{tbl:results-main}.

\begin{table*}[!htbp]\begin{center}
\begin{tabular}{c l c l} 
 \toprule
 & \textbf{Concept class $\C$} &  \textbf{Training Marginal $\DP$} & \textbf{Run-time }      \\ \midrule
1 & Intersections of $\nhalfspaces$ halfspaces & 
\begin{tabular}{c}
     Standard Gaussian \\
    Uniform on $\{\pm 1\}^d$ 
\end{tabular}  & $d^{\widetilde{O}({\nhalfspaces^6}/{\sigma^2})}$   \\ \midrule
2 & Functions of $\nhalfspaces$ halfspaces & 
\begin{tabular}{c}
     Standard Gaussian \\
    Uniform on $\{\pm 1\}^d$ 
\end{tabular}  & $d^{\widetilde{O}(4^{\nhalfspaces}{\nhalfspaces^6}/{\sigma^2})}$   \\ \midrule
3 & Decision trees of size $s$ & Uniform on $\{\pm 1\}^d$  & $d^{O(\log({\aczsize}/{\sigma}))}$  \\  \midrule
4 & Formulas 
of size $s$, depth $\ell$ 
 & Uniform on $\{\pm 1\}^d$   & $d^{\sqrt{{\aczsize}} \cdot O(\log({\aczsize}/{\sigma}))^{\frac{5\aczdepth}{2}}}$ \\ 
 \bottomrule
\end{tabular}
\end{center}
\caption{Our learning results parameterized by $\sigma$, which captures the required precision of the $\L_2$-sandwiching approximators in each of the settings: (1) agnostic PQ learning with error $O(\frac{\optcommon}{\rejrate})+\eps$ and rejection rate $\rejrate$, where $\sigma = \eps\eta$, (2) agnostic $\toler$-tolerant TDS learning with error $O(\optcommon)+2\toler+\eps$, where $\sigma = \eps$, (3) $\toler$-tolerant testable learning with excess error $2\toler + \eps$, where $\sigma = \eps^2$ and (4) robust learning with nasty noise of rate $\noiserate$ up to error $4\noiserate+\eps$, where $\sigma = \eps^2$. The probability of failure in each of the cases is some considered constant and $\rejrate,\toler,\eps\in(0,1)$.}
\label{tbl:results-main}
\end{table*}

\subsection{Classical Testable learning}\label{section:tolerant-testable}

The following theorem gives the first dimension-efficient algorithms for tolerant testable learning (see \Cref{definition:tolerant-testable-learning}) for various important concept classes.

\begin{theorem}[Tolerant Testable Learning via Sandwiching]\label{theorem:tolerant-testable-via-sandwiching}
    For $\eps,\toler,\delta\in(0,1)$, let $\X\subseteq\R^d$ and $(\DP,\C)$ be an $(\frac{\eps^2}{C},\frac{\delta}{C},k,m)$-reasonable pair (\Cref{definition:reasonable-pair}) for some sufficiently large universal constant $C>0$. Then, there is a tester-learner for $\C$ with respect to $\DP$ up to error $\opt+2\toler+\eps$, tolerance $\toler$ and probability of failure $\delta$ with sample complexity $m+\poly(\frac{1}{\eps}(kd)^k \log(1/\delta))$ and time complexity $\poly(\frac{m}{\eps}(kd)^k \log(1/\delta))$, where $\opt = \min_{\concept\in\C}\error(\concept)$.
\end{theorem}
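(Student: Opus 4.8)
The proof follows the template of \Cref{theorem:tolerant-tds-via-sandwiching}, with two differences dictated by the testable-learning setting: here there is no separate clean training distribution (all labeled data comes from $\Dgenericjoint$, whose $\X$-marginal $\DQ$ can be arbitrary), and we need excess error exactly $\eps$ over $\opt$ rather than $O(\optcommon)$, which is why we regress against the ramp loss $\Lramp$ and why the required $\L_2$-sandwiching precision is $\eps^2/C$ (by Cauchy--Schwarz, $\E_{\x\sim\DP}[(\pup(\x)-\pdown(\x))^2]\le\eps^2/C$ gives $\E_{\x\sim\DP}[\,|\pup(\x)-\pdown(\x)|\,]\le\eps/\sqrt{C}$). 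First I would run the outlier-removal procedure of \Cref{theorem:outlier-removal-main} with $\alpha\leftarrow 1$, $\eps\leftarrow\eps^2/C$, $\delta\leftarrow\delta/C$ and degree $k$, feeding it the unlabeled part of the sample from $\Dgenericjoint$ as the ``$\DQ$'' input and fresh samples from the explicitly samplable reference distribution $\DP$; let $g:\X\to\{0,1\}$ be the resulting selector. Using fresh unlabeled draws from $\DQ$, estimate $\pr_{\x\sim\DQ}[g(\x)=0]$ and \emph{reject} if the estimate exceeds $2\toler+\eps/2$. Otherwise, over all degree-$k$ polynomials $p$ with coefficient bound $B$ (the $\L_2$-sandwiching coefficient bound of \Cref{definition:reasonable-pair}), compute $\hat p\in\argmin_p\E_{(\x,y)\sim S}[\Lramp(p(\x),y)\,g(\x)]$ on the labeled sample $S$ from $\Dgenericjoint$, \emph{accept}, and output $h=\sign(\hat p)$.

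\noindent\textbf{Completeness.} If $\tv(\DP,\DQ)\le\toler$, condition \ref{condition:outlier-removal-reasonable} of \Cref{theorem:outlier-removal-main} with $\alpha=1$ gives $\pr_{\x\sim\DP}[g(\x)=0]\le\toler+\eps^2/(2C)$, hence $\pr_{\x\sim\DQ}[g(\x)=0]\le\pr_{\x\sim\DP}[g(\x)=0]+\tv(\DP,\DQ)\le 2\toler+\eps^2/(2C)$, so for $C$ large and enough unlabeled test samples the estimate stays below $2\toler+\eps/2$ and the algorithm accepts. Via \Cref{remark: strengtened completeness}, the algorithm moreover accepts whenever $\DQ$ is $2$-smooth with respect to $\DP$.

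\noindent\textbf{Soundness.} Suppose the algorithm accepts, so (with $\eps/4$ estimation slack) $\pr_{\x\sim\DQ}[g(\x)=0]\le 2\toler+\eps$. Let $\copt\in\C$ achieve $\opt$, with sandwiching polynomials $\pdown\le\copt\le\pup$. Since $\ind\{y\ne\sign(\hat p(\x))\}\le\Lramp(\hat p(\x),y)$ and $g\le 1$,
\[
\pr_{(\x,y)\sim\Dgenericjoint}[y\ne h(\x)]\le\pr_{\x\sim\DQ}[g(\x)=0]+\E_{(\x,y)\sim\Dgenericjoint}[\Lramp(\hat p(\x),y)\,g(\x)].
\]
For the second term: for each fixed $y\in\{\pm1\}$ the set $\{\x:\Lramp(p(\x),y)>t\}$ is a degree-$k$ (halfspace-type) sublevel set of $p$, and $g$ is an intersection of $\poly((kd)^k)$ polynomial threshold functions of degree $2k$ (the iteration count of \Cref{algorithm:outlier-removal-main} being bounded independently of the sample size by the trace potential of \Cref{claim: iteration bound independent of N}), so the relevant loss class has VC dimension $\poly((kd)^k)$ and range $[0,1]$; standard uniform convergence (cf.\ \Cref{lemma: spectrac concentration for VC functions}) then makes the second term within $O(\eps)$ of its empirical value, which by optimality of $\hat p$ and feasibility of $\pdown$ is $\le\E_{(\x,y)\sim\Dgenericjoint}[\Lramp(\pdown(\x),y)\,g(\x)]+O(\eps)$. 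Finally, $1$-Lipschitzness of $\Lramp$ together with $\pdown\le\copt\le\pup$ gives $\Lramp(\pdown(\x),y)\le\ind\{y\ne\copt(\x)\}+(\pup(\x)-\pdown(\x))$, so by condition \ref{condition:outlier-removal-bounding} of \Cref{theorem:outlier-removal-main} with $\alpha=1$ and Cauchy--Schwarz,
\[
\E_{(\x,y)\sim\Dgenericjoint}[\Lramp(\pdown(\x),y)g(\x)]\le\opt+\sqrt{\E_{\x\sim\DQ}[(\pup(\x)-\pdown(\x))^2g(\x)]}\le\opt+\sqrt{200\,\E_{\x\sim\DP}[(\pup(\x)-\pdown(\x))^2]}\le\opt+\tfrac{\eps}{2},
\]
using that the pair is $(\eps^2/C,\delta/C,k,m)$-reasonable and $C$ is large. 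Combining, $\pr_{\Dgenericjoint}[y\ne h(\x)]\le\opt+2\toler+O(\eps)$; rescaling $\eps$ by a constant yields the stated error, and the sample- and time-complexity bounds follow exactly as in \Cref{theorem:tolerant-tds-via-sandwiching}. Plugging in the $\L_2$-sandwiching bounds of \Cref{lemma:l2-sandwiching-dts,lemma:l2-sandwiching-ac0,lemma:l2-sandwiching-intersections} and \Cref{proposition:L2-gaussian-uniform-properties} then gives the concrete running times in \Cref{tbl:results-main}.

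\noindent\textbf{Main obstacle.} The only genuinely new point relative to \Cref{theorem:tolerant-tds-via-sandwiching} --- and the step I expect to require the most care --- is establishing generalization of the polynomial regression when the marginal $\DQ$ is \emph{arbitrary}: a degree-$k$ polynomial can be unbounded under $\DQ$, so $\E_{\Dgenericjoint}[\Lramp(p(\x),y)]$ need not concentrate uniformly over the candidate polynomials. The resolution is exactly the combinatorial simplicity of the selector output by \Cref{algorithm:outlier-removal-main} --- a bounded intersection of low-degree polynomial threshold functions, whose count is controlled by a trace potential rather than by how many points get removed --- which keeps the relevant loss class of bounded VC dimension and lets the VC-concentration tools of \Cref{appendix:tools} close the argument with only $\poly(\tfrac1\eps(kd)^k\log\tfrac1\delta)$ samples.
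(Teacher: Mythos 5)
Your proposal follows the paper's high-level template (same selector from \Cref{theorem:outlier-removal-main} with $\alpha=1$, same rejection test on $\pr_{\x\sim\DQ}[g(\x)=0]$, same completeness argument via condition \ref{condition:outlier-removal-reasonable}), but the learning step is genuinely different. The paper conditions $\Dgenericjoint$ on $\{g(\x)=1\}$ via rejection sampling and invokes the $\L_1$ polynomial regression guarantee of \Cref{proposition:kkms} as a black box, reducing soundness to the single claim $\E_{\x\sim\tilde\D}[|\coptcommon(\x)-\pdown(\x)|]\le O(\eps)$, which it gets from Jensen plus condition \ref{condition:outlier-removal-bounding}. You instead keep the full sample, weight the loss by $g$, and redo the regression analysis from scratch (Lipschitz surrogate comparison against $\pdown$, Cauchy--Schwarz, condition \ref{condition:outlier-removal-bounding}, and a VC argument for generalization under the arbitrary marginal $\DQ$). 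Your statistical reasoning is sound and arguably more self-contained --- in particular you correctly identify that condition (3) of \Cref{definition:reasonable-pair} is unusable here because the marginal is $\DQ$, not $\DP$, and you route around it via boundedness of the loss and the low VC dimension of $g$.

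The one concrete flaw is computational: you define $\hat p$ as the exact minimizer of the empirical \emph{ramp} loss $\E_{S}[\Lramp(p(\x),y)g(\x)]$ over a ball of degree-$k$ polynomials. The ramp loss is not convex in $p(\x)$, so this ERM is not known to be solvable in the claimed $\poly(\frac{m}{\eps}(kd)^k\log(1/\delta))$ time --- this is essentially the hardness that $\L_1$ regression plus threshold rounding was invented to avoid. The fix is exactly the paper's choice: replace the ramp-loss ERM with the convex $\L_1$ objective $\min_p\E_S[|y-p(\x)|\,g(\x)]$ (a linear program) and round via a threshold on $p$, absorbing the factor-of-two bookkeeping into the KKMS-style analysis; your surrogate comparison against $\pdown$ and the Cauchy--Schwarz step carry over verbatim since $|\coptcommon(\x)-\pdown(\x)|\le\pup(\x)-\pdown(\x)$. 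With that substitution your argument is a correct, slightly more hands-on alternative to the paper's proof.
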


Our plan is to once more make use of the outlier removal \Cref{theorem:outlier-removal-main}. In this case, is suffices to run tests on the marginal distribution that certify the existence of a low-degree polynomial approximator for the unknown ground truth concept (achieving optimum error), due to the following classical result from \cite{kalai2008agnostically} (which has been used for non-tolerant testable learning in \cite{gollakota2022moment}).

\begin{proposition}[$\L_1$ regression guarantee, \cite{kalai2008agnostically}]\label{proposition:kkms}
    Let $\C$ be a concept class over $\X$ where $\X\subseteq \R^d$ and $\Dgenericjoint$ be any distribution over $\X\times\cube{}$ where the $\X$-marginal of $\Dgenericjoint$ is $\DQ$. For $\eps\in(0,1)$ and $k\in\nats$, suppose that for any $\concept\in\C$ there is some polynomial $p$ over $\X$ of degree at most $k$ such that $\E_{\x\sim\DQ}[|\concept(\x) - p(\x)|] \le \eps$. Then, there is an algorithm ($\L_1$ polynomial regression) that, upon receiving a number of i.i.d. samples from $\Dgenericjoint$, outputs with probability at least $1-\delta$ some hypothesis $h:\X\to\cube{}$ with $\pr_{(\x,y)\sim\Dgenericjoint}[y\neq h(x)] \le \opt + O(\eps)$, where $\opt = \min_{\concept\in\C}\error(\concept;\Dgenericjoint)$. The algorithm uses $\poly(d^{k}, \frac{1}{\eps})\log(1/\delta)$ time and samples.
\end{proposition}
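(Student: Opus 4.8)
The plan is the classical $\L_1$ polynomial regression argument of \cite{kalai2008agnostically}. The algorithm draws $m = \poly(d^k,1/\eps)\log(1/\delta)$ i.i.d.\ labeled samples from $\Dgenericjoint$, solves the linear program minimizing the empirical $\L_1$ loss $\frac1m\sum_i|z_i-y_i|$ over degree-$k$ polynomials $p$ with auxiliary variables $z_i=p(\x_i)\in[-1,1]$ (this is a genuine LP in the $\le d^k$ coefficient variables plus $m$ slack variables), clips the minimizer $\hat p$ to $\hat q=\mathrm{clip}_{[-1,1]}(\hat p)$, and outputs $h(\x)=\sign(\hat q(\x)-t^*)$ where $t^*$ is the threshold, among an $O(1/\eps)$-point grid of $[-1,1]$, minimizing the empirical $0/1$ error of $\x\mapsto\sign(\hat q(\x)-t)$. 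The single computation driving the whole proof is the randomized-threshold identity: for $u\in[-1,1]$ and $y\in\cube{}$ we have $|u-y|=\int_{-1}^{1}\ind\{\sign(u-t)\ne y\}\,dt$ (verify $y=1$ and $y=-1$ separately), and since the integrand is a monotone step function of $t$, its average over an $\eps$-fine grid is within $O(\eps)$ of the integral.

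First I would produce a good competitor polynomial. Let $\copt\in\C$ attain $\opt=\error(\copt;\Dgenericjoint)$ and let $p^*$ be the degree-$k$ polynomial with $\E_{\x\sim\DQ}[|\copt(\x)-p^*(\x)|]\le\eps$ guaranteed by the hypothesis; clipping $p^*$ to $[-1,1]$ only decreases this quantity because $\copt$ is $\cube{}$-valued. Since $|\copt(\x)-y|=2\ind\{\copt(\x)\ne y\}$, the triangle inequality yields $\E_{(\x,y)\sim\Dgenericjoint}[|\mathrm{clip}(p^*(\x))-y|]\le\E_{\x\sim\DQ}[|\mathrm{clip}(p^*(\x))-\copt(\x)|]+2\,\error(\copt;\Dgenericjoint)\le 2\opt+\eps$. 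Combined with the identity above, this already explains where the final bound $\opt+O(\eps)$ (rather than $2\opt$) comes from: the expected $0/1$ error of the randomly-thresholded clipped $p^*$ equals $\tfrac12\E[|\mathrm{clip}(p^*(\x))-y|]\le\opt+\eps/2$.

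Next I would move between the empirical and population worlds by uniform convergence. For each fixed grid threshold $t_j$, the map $(\x,y)\mapsto\ind\{\sign(p(\x)-t_j)\ne y\}$, as $p$ ranges over degree-$k$ polynomials, lies in a class of VC dimension $O(d^k)$ --- it is a degree-$k$ polynomial threshold function in $\x$ (since $p-t_j$ still has degree $k$) intersected with a fixed label, so \Cref{fact: VC dimension of PTFs} and \Cref{fact: VC dimension of ands and ors} apply. A union bound over the $O(1/\eps)$ grid thresholds together with \Cref{fact: VC dimension implies uniform convergence} gives, for $m=\poly(d^k,1/\eps)\log(1/\delta)$ and with probability $\ge1-\delta$: simultaneous $\eps$-closeness of empirical and population $0/1$ errors for every thresholded degree-$k$ polynomial, hence (via the Riemann-sum bound) $O(\eps)$-closeness of empirical and population $\L_1$ losses of every clipped degree-$k$ polynomial, and $O(\eps)$-near-optimality at the population level of the selected threshold $t^*$. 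Chaining these: $\error(h;\Dgenericjoint)\le\mathrm{emp}(h)+O(\eps)\le\tfrac12\bigl(\tfrac1m\sum_i|\hat q(\x_i)-y_i|\bigr)+O(\eps)\le\tfrac12\bigl(\tfrac1m\sum_i|\mathrm{clip}(p^*(\x_i))-y_i|\bigr)+O(\eps)\le\tfrac12\E_{(\x,y)}[|\mathrm{clip}(p^*(\x))-y|]+O(\eps)\le\opt+O(\eps)$, where the second step is the threshold identity, the third is optimality of $\hat p$ in the LP, the fourth is Hoeffding (or uniform convergence) for the fixed polynomial $p^*$, and the last is the previous paragraph. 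The runtime is dominated by the LP in $\poly(d^k,m)$ variables and constraints plus the $O(m/\eps)$ grid search, i.e.\ $\poly(d^k,1/\eps)\log(1/\delta)$ overall.

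The step I expect to be most delicate is the uniform convergence, precisely because the LP ranges over \emph{all} degree-$k$ polynomials with no a priori bound on coefficient magnitudes, so a direct pseudo-dimension bound for ``clipped polynomials as $[-1,1]$-valued functions'' is awkward to set up. The randomized-threshold identity is exactly the device that avoids this: it reduces $\L_1$ uniform convergence to ordinary VC uniform convergence for polynomial threshold functions, whose VC dimension is $d^k+1$ regardless of coefficient sizes, at the sole cost of the $O(1/\eps)$-point threshold grid. Everything else --- the LP formulation, the triangle inequalities, the Riemann-sum error --- is routine bookkeeping.
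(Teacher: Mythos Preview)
The paper does not supply its own proof of this proposition; it is quoted as a classical result from \cite{kalai2008agnostically}. Your sketch is indeed the standard $\L_1$-regression argument, and the randomized-threshold identity together with VC uniform convergence for degree-$k$ PTFs is exactly the right device for generalization. There is, however, one genuine slip in your chain.

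The third inequality,
\[
\tfrac12\cdot\tfrac1m\sum_i|\hat q(\x_i)-y_i|\ \le\ \tfrac12\cdot\tfrac1m\sum_i|\mathrm{clip}(p^*)(\x_i)-y_i|,
\]
is attributed to ``optimality of $\hat p$ in the LP,'' but $\mathrm{clip}(p^*)$ is \emph{not} a degree-$k$ polynomial and hence is not a feasible point of the LP (this is true whether or not you impose the constraint $p(\x_i)\in[-1,1]$, which would also exclude the unclipped $p^*$ in general). The inequality that LP optimality actually gives is with the \emph{unclipped} competitor:
\[
\tfrac1m\sum_i|\hat q(\x_i)-y_i|\ \le\ \tfrac1m\sum_i|\hat p(\x_i)-y_i|\ \le\ \tfrac1m\sum_i|p^*(\x_i)-y_i|,
\]
and you then need the empirical $\L_1$ loss of the fixed $p^*$ to concentrate near its mean $\le 2\opt+\eps$. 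Since $|p^*(\x)-y|$ is not a priori bounded, this is not a Hoeffding step, and your threshold-based uniform convergence only controls the \emph{clipped} loss $|\mathrm{clip}(p)(\x)-y|$, which lower-bounds rather than upper-bounds $|p(\x)-y|$. In the concrete settings where the paper actually invokes this proposition (after outlier removal, with $k$-tame marginals and sandwiching polynomials of bounded coefficients, cf.\ \Cref{definition:reasonable-pair} and \Cref{proposition:L2-gaussian-uniform-properties}), $p^*$ is effectively bounded and the issue evaporates; in the abstract generality of the statement you either need to add such a boundedness hypothesis, or control the overshoot $\tfrac1m\sum_i\max(0,|p^*(\x_i)|-1)\le\tfrac1m\sum_i|p^*(\x_i)-\copt(\x_i)|$ separately via a tail/variance assumption on $p^*-\copt$. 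Everything else in your outline is sound.
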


The tester of \Cref{theorem:tolerant-testable-via-sandwiching} does the following for some sufficiently large universal constant $C\ge 1$.
\begin{enumerate}
    \item Runs the outlier removal of \Cref{theorem:outlier-removal-main} with parameters $\alpha \leftarrow 1$, $\eps \leftarrow \eps/C$, $\delta \leftarrow \delta/C$ to receive a selector $g$ with the guarantees specified in \Cref{theorem:outlier-removal-main}.
    \item Estimates, using unlabeled samples form $\Dgenericjoint$, the value of $\pr_{\x\sim \DQ}[g(\x) = 0]$ and rejects if the estimated value is greater than $2\toler+2\eps/C$.
    \item Otherwise, the tester accepts and runs the algorithm of \Cref{proposition:kkms} with fresh samples from the distribution $\tilde{\D}_{\X\Y}$ that corresponds to the conditioning of $\Dgenericjoint$ to $g(\x) = 1$, with parameters $\eps \leftarrow \eps/C$ and $k \leftarrow k$.
\end{enumerate}

Without loss of generality, we have that $2\toler+\eps \le 1/2$ (otherwise, we may output a random hypothesis). This implies that the runtime does not change asymptotically by conditioning on $g(\x)=1$ (which can be done through rejection sampling).

For \textbf{completeness}, we observe that, by condition \ref{condition:outlier-removal-reasonable}, $\pr_{\x\sim \DP}[g(\x) = 0] \le \tv(\DP,\DQ)+\frac{\eps}{C}$ and hence $\pr_{\x\sim \DQ}[g(\x) = 0] \le 2\tv(\DP,\DQ)+\frac{\eps}{C}$. By a standard Hoeffding bound, we have that the estimated value for $\pr_{\x\sim \DQ}[g(\x) = 0]$ (obtained using unlabeled samples from the marginal $\DQ$ of $\Dgenericjoint$) is at most $2\tv(\DP,\DQ)+\frac{2\eps}{C}$ and the tester will, with high probability accept if $\tv(\DP,\DQ)\le \toler$.

For \textbf{soundness}, we want to show that, upon acceptance, for any $\concept\in\C$, there is a polynomial $p$ of degree $k$ such that $\E_{\x\sim \tilde\D}[|\concept(\x)-p(\x)|] \le O(\frac{\eps}{C})$. Then, by \Cref{proposition:kkms}, we have that the output $h$ satisfies $\error(h;\tilde\D_{\X\Y}) \le \min_{\concept\in\C}\error(\concept;\tilde\D_{\X\Y}) + O(\eps/C)$. Moreover we would also have $\error(h;\D_{\X\Y}) \le \pr_{\x\sim \DQ}[g(\x) = 0] + \pr_{\x\sim \DQ}[g(\x) = 1]\error(h;\tilde\D_{\X\Y})$, by the law of total probability. The second term of the sum can be bounded as $\pr_{\x\sim \DQ}[g(\x) = 1]\error(h;\tilde\D_{\X\Y}) \le \min_{\concept\in\C}\pr_{\x\sim \DQ}[g(\x) = 1] \pr_{(\x,y)\sim \Dgenericjoint}[y\neq h(\x) | g(\x) = 1] + O(\eps/C) = \opt + O(\eps/C)$. Overall, the bound on $\error(h;\D_{\X\Y})$ would then be $\opt+2\toler+\eps$, because after acceptance, $\pr_{\x\sim \DQ}[g(\x) = 0]$ is bounded, with high probability, by $2\toler+O(\eps/C)$.

It remains to show the polynomial approximation bound. Let $\concept$ be some element of $\C$ and $\pup,\pdown$ the corresponding $\frac{\eps^2}{C}$-$\L_2$ sandwiching polynomials. If $\tilde{\D}$ is the $\X$-marginal of $\tilde\D_{\X\Y}$, we have the following, by applying the sandwiching property, Jensen's inequality and the definition of $\tilde\D$.
\begin{align*}
    (\E_{\x\sim\tilde\D}[|\concept(\x) - \pdown(\x)|])^2 &\le (\E_{\x\sim\tilde\D}[\pup(\x) - \pdown(\x)])^2 \\
    &\le \E_{\x\sim\tilde\D}[(\pup(\x) - \pdown(\x))^2] \\
    &\le \frac{\E_{\x\sim\D'}[(\pup(\x) - \pdown(\x))^2g(\x)]}{\pr_{\x\sim\D'}[g(\x) = 1]} 
\end{align*}
By applying condition \ref{condition:outlier-removal-bounding}, as well as the fact that $\pr_{\x\sim\DQ}[g(\x) = 1] \ge \Omega(1)$ (since $\pr_{\x\sim\DQ}[g(\x) = 0]$), we have
\[
    (\E_{\x\sim\tilde\D}[|\concept(\x) - \pdown(\x)|])^2 \le O(1) \cdot \E_{\x\sim\DP}[(\pup(\x)-\pdown(\x))^2] \le O(\eps^2/C)
\]
Therefore, indeed, $\E_{\x\sim\tilde\D}[|\concept(\x) - \pdown(\x)|] \le \eps$, which concludes the proof of \Cref{theorem:tolerant-testable-via-sandwiching}.

\subsection{Robust Learning}\label{section:nasty-noise}

We also provide the following result for learning with nasty noise (see \Cref{definition:nasty-noise-learning}). While there are algorithms for learning in the nasty noise model that are more efficient than the one we analyze here (see \cite{diakonikolas2018learning}), we achieve an error bound that is close to the optimal: we only incur a multiplicative factor of $2$ from the information theoretically optimal bound of $2\noiserate$, where $\noiserate$ is the noise rate (see \Cref{definition:nasty-noise-learning}). For intersections of halfspaces, for instance, to the best of our knowledge, all prior known dimension-efficient algorithms incurred an error of $\Omega(\sqrt{\noiserate})$ for learning under nasty noise of rate $\noiserate$ (see \cite{klivans2018efficient,diakonikolas2018learning}).

\begin{theorem}[Learning with Nasty Noise via Sandwiching]\label{theorem:nasty-noise-via-sandwiching}
    For $\eps,\noiserate,\delta\in(0,1)$, let $\X\subseteq\R^d$ and $(\DP,\C)$ be an $(\frac{\eps^2}{C},\frac{\delta}{C},k,m)$-reasonable pair (\Cref{definition:reasonable-pair}) for some sufficiently large universal constant $C>0$. Then, there is a robust learner for $\C$ under nasty noise of rate $\noiserate$ with respect to $\DP$ up to error $4\noiserate+\eps$ and probability of failure $\delta$ with sample complexity $m+\poly(\frac{1}{\eps}(kd)^k \log(1/\delta))$ and time complexity $\poly(\frac{m}{\eps}(kd)^k \log(1/\delta))$.
\end{theorem}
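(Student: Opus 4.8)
The plan is to mirror the proof of \Cref{theorem:tolerant-testable-via-sandwiching}, treating the nasty-noise sample as an agnostic-type dataset that is ``$\noiserate$-close'' to the clean distribution $\DP \times \coptcommon$. Write $\Sinp$ for the corrupted dataset of size $N$, $\Siid$ for the underlying clean i.i.d.\ sample (labeled by $\coptcommon \in \C$), and recall that at most $\noiserate N$ points of $\Siid$ were replaced. First I would run the \emph{adversarial} version of the outlier-removal procedure (\Cref{thm: outlier removal}) on the unlabeled part of $\Sinp$, with reference distribution $\DP$ and parameters $\alpha \leftarrow 1$, $\eps \leftarrow \eps/C$, $\delta \leftarrow \delta/C$, $k \leftarrow k$, obtaining a selector $g$ with (a) $\frac{1}{N}\sum_{\x \in \Sinp} (p(\x))^2 g(\x) \le 200\,\ex_{\x\sim\DP}[(p(\x))^2]$ for every degree-$k$ polynomial $p$, and (b) $\pr_{\x\sim\DP}[g(\x)=0] \le \noiserate + \eps/2$. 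The adversarial variant is essential here: the empirical distribution of $\Sinp$ is not close to $\DP$ in total variation (continuous versus discrete), so \Cref{theorem:outlier-removal-main} cannot be invoked directly, exactly as in the adversarial PQ results. I would then run the $\L_1$ polynomial regression algorithm of \Cref{proposition:kkms} on the filtered multiset $\Sfiltered = \{(\x,y)\in\Sinp : g(\x)=1\}$ and output $h(\x) = \sign(\hat p(\x))$ for the resulting degree-$k$ polynomial $\hat p$; unlike testable learning, no rejection step is needed, since \Cref{definition:nasty-noise-learning} does not permit rejection.

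To apply \Cref{proposition:kkms} I need two ingredients. (i) The $\L_1$-approximation precondition: for every $\concept\in\C$ with $\tfrac{\eps^2}{C}$-$\L_2$ sandwiching polynomials $\pup,\pdown$ (degree $\le k$, coefficients bounded by $B$), Jensen's inequality gives $(\ex_{\tilde\D}[|\concept-\pdown|])^2 \le \ex_{\tilde\D}[(\pup-\pdown)^2]$, where $\tilde\D$ is the $\X$-marginal of $\Sfiltered$; since a short count using (b) and uniform convergence over the bounded-VC class in which $g$ lives shows $\pr_{\x\sim\Sinp}[g(\x)=1]\ge 1/2$ (hence $|\Sfiltered|\ge N/2$), condition (a) applied to the degree-$k$ polynomial $\pup-\pdown$ yields $\ex_{\tilde\D}[(\pup-\pdown)^2] \le 400\,\ex_\DP[(\pup-\pdown)^2] \le 400\tfrac{\eps^2}{C} \le (\eps/C')^2$, so $\ex_{\tilde\D}[|\concept-\pdown|] \le \eps/C'$. (ii) A bound on $\opt_{\tilde\D} = \min_{\concept\in\C}\error(\concept;\tilde\D)$: every clean point surviving the filter carries the label $\coptcommon(\x)$, so $\coptcommon$ errs on $\Sfiltered$ only at the at most $\noiserate N$ corrupted points, giving $\opt_{\tilde\D} \le \error(\coptcommon;\tilde\D) \le \noiserate N/|\Sfiltered| \le 2\noiserate$ (here we assume $\noiserate$ below a small absolute constant; otherwise $4\noiserate+\eps$ is vacuous and any hypothesis suffices). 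Thus \Cref{proposition:kkms} outputs $h$ with $\error(h;\tilde\D) \le 2\noiserate + \eps/2$.

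The last step transfers this empirical bound on $\Sfiltered$ back to $\pr_{\x\sim\DP}[h(\x)\neq\coptcommon(\x)]$ through the clean surviving set $T = \{\x\in\Siid : \x\in\Sinp,\ g(\x)=1\}$, then $\Siid$, then $\DP$. Restricting $\error(h;\tilde\D)$ to $T$ (where $y=\coptcommon$) gives $\error(h;T)\le \tfrac{|\Sfiltered|}{|T|}\error(h;\tilde\D)$; since $\Siid\setminus T$ is the union of the $\le\noiserate N$ replaced points and the $\le(\noiserate+O(\eps))N$ clean points rejected by $g$ (the latter bounded by (b) plus uniform convergence for the selector class), $\tfrac{1}{N}\sum_{\x\in\Siid}\ind(h(\x)\neq\coptcommon(\x)) \le \tfrac{|\Sfiltered|}{N}\error(h;\tilde\D) + (2\noiserate+O(\eps)) \le 4\noiserate + O(\eps)$, and one further uniform-convergence step (over degree-$k$ PTFs against the fixed $\coptcommon$, using $N\ge\poly((dk)^k,1/\eps)\log(1/\delta)$) passes from $\Siid$ to $\DP$, yielding $\pr_{\x\sim\DP}[h(\x)\neq\coptcommon(\x)] \le 4\noiserate+\eps$ after rescaling $\eps$. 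Sample and time complexity are dominated by the outlier-removal call and the regression, giving $m+\poly(\tfrac{1}{\eps}(kd)^k\log(1/\delta))$ and $\poly(\tfrac{m}{\eps}(kd)^k\log(1/\delta))$ respectively. The main obstacle is not any single step but the careful bookkeeping of the many $O(\noiserate)$ and $O(\eps)$ slacks along the chain $\tilde\D\to T\to\Siid\to\DP$ so that the coefficient of $\noiserate$ comes out to exactly $4$, together with verifying that each uniform-convergence step — in particular the one controlling the measure of the rejection region of the \emph{data-dependent} selector $g$ — is valid at the claimed sample size.
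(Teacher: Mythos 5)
Your proposal is correct and follows essentially the same route as the paper: run the adversarial outlier-removal procedure (\Cref{thm: outlier removal}) with $\alpha=1$ on the corrupted sample, use the $\L_2$-sandwiching polynomials together with spectral boundedness and Jensen to certify the $\L_1$-approximation precondition of \Cref{proposition:kkms} on the filtered set, and then account for the three $O(\noiserate)$ contributions (adversarial insertions, clean points rejected by the filter, and the optimum on the filtered set) to reach $4\noiserate+\eps$. The paper organizes the bookkeeping in the reverse direction ($\DP \to \Siid \to \Sinp \to \Sfiltered$ rather than your $\Sfiltered \to T \to \Siid \to \DP$), but the decomposition and the resulting constant are the same.
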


\begin{proof}
    We follow a very similar approach as the one for \Cref{theorem:tolerant-testable-via-sandwiching}. The main differences are two. First, instead of the outlier removal of \Cref{theorem:outlier-removal-main}, we apply the outlier removal of \Cref{thm: outlier removal}, which works in the adversarial setting. Second, in the nasty noise setting, we assume that the noise rate is bounded and hence we do not need to run any tests in order to obtain the desired guarantees.

    Recall that in this setting, the learner receives a sample $\Sinp$ of size $N$ with $\Sinp = \Siid\cup \Sadv\setminus \Srem$, where $\Siid$ is an i.i.d. labeled sample drawn from $\DP$ and labeled by some $\coptcommon\in\C$, $|\Sadv| = |\Srem| \le \noiserate N$, where $\Srem$ is an arbitrary subset of $\Siid$ and $\Sadv$ is an arbitrary sample of size $\Srem$ (i.e., the adversary removes the samples in $\Srem$ and substitutes them with adversarial samples $\Sadv$).

    The algorithm runs the outlier removal of \Cref{thm: outlier removal} on $\Sinp$ with parameters $\alpha \leftarrow 1$, $\eps \leftarrow \eps/C$, $\delta\leftarrow \delta/C$ and $k\leftarrow k$ to receive a filtered set of samples $\Sfiltered$ such that $|\Siid\setminus\Sfiltered| \le |\Sadv|+\eps N/C \le \noiserate N + \eps N/C$ and also $\frac{1}{N}\sum_{\x\in \Sfiltered}(p(\x))^2 \le 200\E_{\x\sim\DP}[(p(\x))^2]$ for any polynomial $p$ of degree at most $k$. Then, it runs polynomial regression of degree $k$ with coefficient bound $B$ (given by \Cref{definition:reasonable-pair}) over the set $\Sfiltered$ and outputs $h(\x) = \sign(\hat p(\x))$ where $\hat p$ is the output of the polynomial regression routine (of \Cref{proposition:kkms}).

   We aim to bound $\pr_{\x\sim\DP}[\coptcommon(\x) \neq h(\x)]$. By uniform convergence, we have a bound of the form $\pr_{\x\sim\DP}[\coptcommon(\x) \neq h(\x)] \le \pr_{(\x,y)\sim\Siid}[y \neq h(\x)] + \eps/C \le \pr_{(\x,y)\sim\Sinp}[y \neq h(\x)]+\noiserate + \eps/C$. We further bound the quantity $\pr_{(\x,y)\sim\Sinp}[y \neq h(\x)] \le \pr_{(\x,y)\sim \Sinp}[y\neq h(\x), (\x,y)\in\Sfiltered] + \frac{|\Sinp\setminus \Sfiltered|}{N} \le \pr_{(\x,y)\sim \Sinp}[y\neq h(\x), (\x,y)\in\Sfiltered] + 2\noiserate$.

   We may apply \Cref{proposition:kkms} to show that $\pr_{(\x,y)\sim \Sinp}[y\neq h(\x), (\x,y)\in\Sfiltered] \le \noiserate + O(\eps/C)$, as long as the following is true for some polynomial of degree at most $k$.
   \[
        \frac{1}{N}\sum_{(\x,y)\in \Sfiltered}|\coptcommon(\x) - p(\x)| \le O(\eps/C)
   \]
   Due to the sandwiching property, this is true for the sandwiching polynomial $\pdown$ for $\coptcommon$ (which exists since $\coptcommon\in \C$ -- see \Cref{definition:reasonable-pair}). To show this, we may follow an analogous approach as the one for \Cref{theorem:tolerant-testable-via-sandwiching}.
\end{proof}

\section{Outlier Removal Procedure}\label{appendix:outlier-removal}

We now give the proofs of our outlier removal theorem in the adversarial, as well as the PQ setting.

\subsection{Outlier Removal in the Adversarial Setting}\label{appendix:outlier-removal-adversarial}

We present our outlier removal result in the adversarial setting:
\begin{theorem}
	\label{thm: outlier removal}
	There exists an algorithm that satisfies the following specifications for some sufficiently large absolute constant $C$. The algorithm $\mathcal{A}$ is given parameters $\epsilon, \alpha, \delta$ in $(0,1]$, a positive integer $k$, and a pair of size-$N$ sets $S_{\DP}$ and $S_{\DQ}$ of points in $\bR^d$, where $N \geq C\left(\frac{(kd)^k}{\epsilon} 
	\log \frac{1}{ \delta}
	\right)^C$. The algorithm $\mathcal{A}$ then accepts a subset $S_{\text{accept}}\subseteq S_{\DQ}$, rejects a subset $S_{\text{reject}}=S_{\DQ}\setminus S_{\text{accept}}$ and runs in time $\poly(N)$. 
	
	Let the set $S_{\DP}$ in $\bR^d$ of size $N$ be sampled i.i.d. from a $k$-tame probability distribution $\DP$, and let $S_{\DQ}$ be generated by:
	\begin{enumerate}
		\item Sampling a size-$N$ i.i.d. set $S_{\text{clean}}$ from $\DP$.
		\item Adversary corrupting an arbitrary subset of elements in $S_{\text{clean}}$. Formally, $S_{\DQ}=S_{\text{uncorrupted}} \cup S_{\text{adversarial}}$, where $S_{\text{uncorrupted}}$ is an adversarially chosen subset of $S_{\text{clean}}$ and $S_{\text{adversarial}}$ is a set of adversarially chosen points in $\bR^d$ of size $N-|S_{\text{uncorrupted}}|$.
	\end{enumerate}
	Then, with probability at least $1-\delta$, the algorithm $\mathcal{A}$ given the sets $S_{\DP}$ and $S_{\DQ}$ will accept a set $S_{\text{accept}}\subseteq S_{\DQ}$ satisfying the following two properties:
	\begin{itemize}
		\item \textbf{Degree-$k$ spectral $\frac{200}{\alpha}$-boundedness}: For every polynomial $p$ of degree at most $k$  satisfying \[\ex_{\x \sim \DP}[(p(\x))^2]\leq 1,\]
		it is the case that 
		\[
		\frac{1}{N}
		\sum_{\x \in S_{\text{accept}}} 
		p(\x)^2
		\leq \frac{200}{\alpha}.
		\]
		\item $(\alpha ,\epsilon/2)$-\textbf{validity}: The set $S_{\text{reject}} \cap S_{\text{uncorrupted}}$ has a size of at most $\alpha |S_{\text{adversarial}}|  +\frac{\epsilon}{2}N$.
	\end{itemize}
\end{theorem}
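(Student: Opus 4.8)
I analyze Algorithm~\ref{algorithm:outlier-removal-main} directly, conditioning throughout on three events that each hold with probability at least $1-\delta/3$ once $N\ge C(\frac{(kd)^k}{\epsilon}\log\frac1\delta)^C$: (i) the matrix $\hat M$ returned by Algorithm~\ref{algorithm:moment-matrix-estimator} satisfies $\frac{9}{10}\ex_{\x\sim\DP}[(p(\x))^2]\le p^\top\hat M p\le\frac{11}{10}\ex_{\x\sim\DP}[(p(\x))^2]$ for every degree-$k$ polynomial $p$ (\Cref{lem: estimating moments of a tame distribution}); in particular $\hat M\succ0$, since $\ker\hat M\subseteq\ker M$ (with $M:=\ex_{\x\sim\DP}[(\x^{\otimes k})(\x^{\otimes k})^\top]$) consists of polynomials vanishing $\DP$-almost surely, which we quotient out; (ii) uniform convergence of empirical to true probabilities over $S_{\DP}$, uniformly over the class of ANDs of at most two degree-$2k$ polynomial threshold functions---which has VC dimension $O(d^{2k})$ by \Cref{fact: VC dimension of PTFs,fact: VC dimension of ands and ors}---with error at most $\Delta/4$ (\Cref{fact: VC dimension implies uniform convergence}; this is where the constant $200$ in the definition of $\Delta$ is used); (iii) the same statement for the independent clean set $S_{\text{clean}}$. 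Unlike in the selector version \Cref{theorem:outlier-removal-main}, the output here need not generalize to a fresh draw from $\DQ$, so the $N$-independent iteration bound (\Cref{claim: iteration bound independent of N}) is not needed---every guarantee below is an empirical statement on $S_{\DQ}$ and its subsets. Termination is immediate: iteration $i$ removes the set $R_i:=\{\x\in S^{i-1}:(\p_i^\top\x^{\otimes k})^2>\tau_i\}$, which is nonempty because the defining inequality for $\tau_i$ forces $|R_i|/N\ge\frac{10}{\alpha}\Delta>0$; hence there are at most $N$ iterations, and the total runtime is $\poly(N)$ (each step is an eigenvalue computation for a $t\times t$ matrix with $t=\binom{d+k}{k}\le N$, plus a sort). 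The spectral-boundedness conclusion is read off the exit condition: on exit $\max_{q:\,q^\top\hat M q\le1}\frac1N\sum_{\x\in S_{\text{accept}}}(q^\top\x^{\otimes k})^2\le\frac{50}{\alpha}(1+\Delta)$, and for any degree-$k$ polynomial $p$ with $\ex_{\x\sim\DP}[(p(\x))^2]\le1$, event (i) gives $p^\top\hat M p\le\frac{11}{10}$, so $p/\sqrt{11/10}$ is feasible and $\frac1N\sum_{\x\in S_{\text{accept}}}(p(\x))^2\le\frac{11}{10}\cdot\frac{50}{\alpha}(1+\Delta)\le\frac{200}{\alpha}$ (as $\Delta\le\frac1{10}$ for $N$ large).

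\textbf{Validity via a charging argument.} The preprocessing step rejects exactly the degree-$2k$ PTF region $\{\x:(\x^{\otimes k})^\top\hat M^{-1}\x^{\otimes k}>B\}$ (using $\hat M\succ0$). Writing $(\x^{\otimes k})^\top M^{-1}\x^{\otimes k}=\sum_j r_j(\x)^2$ with the $r_j$ orthonormal under $\DP$ (Gram--Schmidt, as in the proof of \Cref{lem: estimating moments of a tame distribution}), on this region some $r_j(\x)^2$ exceeds $B/d^k=\frac4\epsilon d^{2k}$; $k$-tameness of $\DP$ bounds $\pr_{\x\sim\DP}[r_j(\x)^2>\frac4\epsilon d^{2k}]$ by $e^{-\Omega((d^{2k}/\epsilon)^{1/2k})}$, and a union bound over $j\le d^k$ together with the choice $B=\frac4\epsilon d^{3k}$ makes the $\DP$-mass of this region at most $\frac\epsilon{100}$; so by event (iii) at most $\frac\epsilon{50}N$ points of $S_{\text{clean}}$ (hence of $S_{\text{uncorrupted}}\subseteq S_{\text{clean}}$) are removed in preprocessing. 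For iteration $i$: since every surviving point satisfies $(\p_i^\top\x^{\otimes k})^2\le(\p_i^\top\hat M\p_i)\cdot(\x^{\otimes k})^\top\hat M^{-1}\x^{\otimes k}\le B$, the set $R_i$ lies in the AND-of-two-PTFs region $\{\x:B\ge(\p_i^\top\x^{\otimes k})^2>\tau_i\}$, so events (iii) then (ii) give $|S_{\text{clean}}\cap R_i|/N\le\pr_{\x\sim S_{\DP}}[B\ge(\p_i^\top\x^{\otimes k})^2>\tau_i]+\Delta$, which by the defining property of $\tau_i$ is at most $\frac{\alpha}{10}\cdot\frac{|R_i|}{N}$. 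Thus $|S_{\text{uncorrupted}}\cap R_i|\le\frac{\alpha}{10}|R_i|$, and since $R_i\subseteq S_{\DQ}=S_{\text{uncorrupted}}\sqcup S_{\text{adversarial}}$ we get $|R_i|\le\frac{1}{1-\alpha/10}|R_i\cap S_{\text{adversarial}}|\le\frac{10}{9}|R_i\cap S_{\text{adversarial}}|$---every removed point charges a constant multiple of an adversarial point. The $R_i$ are pairwise disjoint and contained in $S^0$, so summing, $\sum_i|S_{\text{uncorrupted}}\cap R_i|\le\frac{\alpha}{10}\sum_i|R_i|\le\frac{\alpha}{9}|S_{\text{adversarial}}|$; adding the preprocessing bound, $|S_{\text{reject}}\cap S_{\text{uncorrupted}}|\le\frac\epsilon{50}N+\frac{\alpha}{9}|S_{\text{adversarial}}|\le\alpha|S_{\text{adversarial}}|+\frac\epsilon2 N$.

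\textbf{Existence of $\tau_i$, and the main obstacle.} It remains to verify that whenever iteration $i$ does not trigger the exit condition (so $\mu_i>\frac{50}{\alpha}(1+\Delta)$), a valid $\tau_i$ exists. This is a layer-cake estimate: writing $n(\tau)=\frac1N|\{\x\in S^{i-1}:(\p_i^\top\x^{\otimes k})^2>\tau\}|$ and $\hat q(\tau)=\pr_{\x\sim S_{\DP}}[B\ge(\p_i^\top\x^{\otimes k})^2>\tau]$, and using $(\p_i^\top\x^{\otimes k})^2\le B$ on survivors, we have $\mu_i=\int_0^B n(\tau)\,d\tau$; were no valid $\tau_i$ to exist, $n(\tau)<\frac{10}{\alpha}(\hat q(\tau)+\Delta)$ for all $\tau\in[0,B]$, and integrating while using $\p_i^\top\hat M\p_i\le1$ with events (i) and (ii) to bound $\int_0^B\hat q(\tau)\,d\tau\le\ex_{\x\sim S_{\DP}}[\min((\p_i^\top\x^{\otimes k})^2,B)]\le\frac32$ (for $N$ polynomially large) yields $\mu_i<\frac{15}{\alpha}$, contradicting $\mu_i>\frac{50}{\alpha}$. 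I expect the main difficulty to be the bookkeeping that makes the single slack parameter $\Delta$ simultaneously (a) dominate all VC sampling errors from events (ii) and (iii), (b) dominate the gap between $\ex_{S_{\DP}}$ and $\ex_{\DP}$ for truncated squared degree-$k$ polynomials, and (c) remain small enough (for $N$ polynomially large) to not spoil the constants $\frac{200}{\alpha}$ and $\frac\epsilon2$ above---together with the $k$-tameness tail estimate for the preprocessing region, which is exactly what forces the degree $d^{3k}$ (rather than $d^k$) in the choice of $B$.
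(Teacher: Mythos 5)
Your proposal follows essentially the same route as the paper's proof: condition on the spectral approximation of $\hat M$ (\Cref{lem: estimating moments of a tame distribution}) and on VC-based uniform convergence over ANDs of degree-$2k$ polynomial threshold functions; bound the preprocessing removals by the $\DP$-mass of the region where the $\hat M$-unit-ball maximum exceeds $B$; charge each iteration's clean removals to adversarial points via the defining inequality for $\tau_i$; and read spectral boundedness off the exit condition. Your termination argument ($|R_i|\ge\tfrac{10\Delta}{\alpha}N>0$) and your explicit layer-cake verification that a valid $\tau_i$ exists whenever the loop has not terminated are, if anything, cleaner than the paper's, which establishes both implicitly inside \Cref{claim: after iteration polynomial pi is not a problem anymore}.

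Two points need repair or completion. First, for the preprocessing region you bound $\pr_{\x\sim\DP}[r_j(\x)^2>\tfrac{4}{\epsilon}d^{2k}]$ by the tameness tail $e^{-\Omega((d^{2k}/\epsilon)^{1/2k})}=e^{-\Omega(d\,\epsilon^{-1/2k})}$ and union-bound over $d^k$ indices to conclude mass at most $\epsilon/100$. The exponent is only $\Theta(d)$ for moderate $\epsilon$, so the product $d^k e^{-\Omega(d\epsilon^{-1/2k})}$ need not be at most $\epsilon/100$ when $k$ is large relative to $d$ (and the hidden constant in the tameness definition is not at your disposal). Since $\ex_{\x\sim\DP}[r_j(\x)^2]=1$, Markov's inequality already gives $\pr_{\x\sim\DP}[r_j(\x)^2>\tfrac{4}{\epsilon}d^{2k}]\le\tfrac{\epsilon}{4d^{2k}}$, hence mass at most $\tfrac{\epsilon}{4d^k}\le\epsilon/4$ after the union bound; this still fits, together with your $\tfrac{\alpha}{9}|S_{\text{adversarial}}|$ from the loop, inside the $\alpha|S_{\text{adversarial}}|+\tfrac{\epsilon}{2}N$ budget, and is exactly the route the paper takes (property (2) of \Cref{claim: bad events are unlikely}, which uses only Markov, not the tail). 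Second, your bound $\ex_{\x\sim S_{\DP}}[\min((\p_i^\top\x^{\otimes k})^2,B)]\le\tfrac32$ requires concentration of \emph{truncated second moments} of degree-$k$ polynomials, not merely of indicator probabilities of PTF regions, so it is not covered by your events (ii)--(iii); you correctly flag this in your closing paragraph, and the paper supplies it as property (3) of \Cref{claim: bad events are unlikely}, proved via the layer-cake-plus-VC argument of \Cref{lemma: spectrac concentration for VC functions} (this is also the one place where tameness genuinely enters, through the variance bound in \Cref{lem: estimating moments of a tame distribution}). With these two patches the argument closes.
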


We describe our algorithm for Theorem \ref{thm: outlier removal} (restating Algorithm \ref{algorithm:outlier-removal-main}):
\begin{enumerate}
	\item \textbf{Input} 
	Sets $S_{\DQ}$ and $S_{\DP}$ of size $N$ in $\bR^d$, parameters $\epsilon, \delta$ in $(0,1)$.
	\item $\hat{M}\leftarrow\textbf{ESTIMATE-MOMENTS}(S_{\DP}, k, \delta/10)$. (See Lemma \ref{lem: estimating moments of a tame distribution} for further info). 
	\item $B_0 \leftarrow \frac{4d^{3k}}{\epsilon}$	and $\Delta_0 \leftarrow \margin$
	\item $S_{\text{filtered}}^0
	\leftarrow S_{\DQ}\setminus \left\{\x: 
	\max_{
		\substack{p: \ 
			p^T \hat{M} p\leq 1  
	}}
	(p(\x))^2> B_0 \right\}$.
	\item While $\max_{
		\substack{p: \ 
			p^T \hat{M} p\leq 1  
	}}
	\left(
	\frac{1}{N}
	\sum_{\x \in S_{\text{filtered}}^i}
	(p(\x))^2
	\right)
	> \frac{50}{\alpha} \left( 1 + \Delta_0 \cdot B_0 \right)
	$. 
	\begin{enumerate}
		\item $p_i\leftarrow
		\argmax_{
			\substack{p: \ 
				p^T \hat{M} p\leq 1  
		}}
		\left(
		\frac{1}{N}
		\sum_{\x \in S_{\text{filtered}}^i} (p(\x))^2\right)$\\
		\item Set
		$\tau_i$ to be the smallest value of $\tau$ subject to:
		\begin{multline*}
		\frac{1}{N}
		\bigg\lvert 
		\{
		\x \in S_{\text{filtered}}^i:\: (p_i(\x))^2 > \tau
		\}
		\bigg\rvert
		\geq
		\frac{10}{\alpha}
		\left(
		\pr_{\x \sim S_{\DP}}
		[B_0 \geq (p_i(\x))^2 > \tau]
		+\Delta_0
		\right)
		\end{multline*}
		\item $
		S_{\text{filtered}}^{i+1}
		\leftarrow
		S_{\text{filtered}}^{i}
		\setminus
		\{
		\x:\: (p_i(\x))^2 > \tau_i
		\}$
		\item $i \leftarrow i+1$
	\end{enumerate}
	\item Output 
	$(S_{\text{accept}}, S_{\text{reject}})	=(S_{\text{filtered}}^{i}, S_{\DQ} \setminus S_{\text{filtered}}^{i})
	$.
\end{enumerate}

Note that the procedure ESTIMATE-MOMENTS produces a good spectral approximation for the degree-$k$ moment matrix of $\DP$. Formally, Lemma \ref{lem: estimating moments of a tame distribution} says that the matrix $\hat{M}$ is symmetric positive-semidefinite and with probability at least $1-\delta/10$ it is the case that every degree-$k$ polynomial $p$ satisfies.
\begin{equation}\label{eq: the thing that M hat satisfies}
	\frac{9}{10}
	\ex_{\x \sim \DP} [(p(\x))^2]
	\leq
	p^T
	\hat{M} p
	\leq
	\frac{11}{10}
	\ex_{\x \sim \DP} [(p(\x))^2].
\end{equation}

\subsubsection{Efficient implementation}
\label{sec: how to implement}
We now explain how to execute certain steps of our algorithm in polynomial time:
\begin{itemize}
	\item The quantity $\max_{
		\substack{p: \ 
			p^T \hat{M} p\leq 1  
	}}
	(p(\x))^2$ equals to the largest eigenvalue of the matrix \[(\hat{M})^{-1/2} \left(
	(\x^{\otimes d})(\x^{\otimes d})^T\right) (\hat{M})^{-1/2},\]
	which can be computed in polynomial time in the dimension $m$ of the matrix. (Note that if the matrix $\hat{M}$ is not full-rank, then the above is still true if long as $(\hat{M})^{-1/2}$ is replaced by the Moore–Penrose pseudo-inverse of $(\hat{M})^{1/2}$, which again can be computed efficiently. Also note that we used the fact that the matrix $\hat{M}$ is symmetric.)
	\item The quantity $\max_{
		\substack{p: \ 
			p^T \hat{M} p\leq 1  
	}}
	\left(
	\frac{1}{N}
	\sum_{\x \in S_{\text{filtered}}^i}
	(p(\x))^2
	\right)$ equals to the largest eigenvalue of the matrix \[(\hat{M})^{-1/2} \left(\frac{1}{N}
	\sum_{\x \in S_{\text{filtered}}^i} (\x^{\otimes d})(\x^{\otimes d})^T\right) (\hat{M})^{-1/2},\]
	which can be computed in polynomial time in the dimension $m$ of the matrix. (Again, if the matrix $\hat{M}$ is not full-rank, then the above is still true if long as $(\hat{M})^{-1/2}$ is replaced by the Moore–Penrose pseudo-inverse of $(\hat{M})^{1/2}$. Also note that we again used the fact that the matrix $\hat{M}$ is symmetric.)
	\item The polynomial $p_i\leftarrow
	\argmax_{
		\substack{p: \ 
			p^T \hat{M} p\leq 1  
	}}
	\left(
	\frac{1}{N}
	\sum_{\x \in S_{\text{filtered}}^i} (p(\x))^2\right)$ can be computed by taking the leading eigenvector of $(\hat{M})^{-1/2} \left(\frac{1}{N}
	\sum_{\x \in S_{\text{filtered}}^i} (\x^{\otimes d})(\x^{\otimes d})^T\right) (\hat{M})^{-1/2}$ and multiplying this vector by $(\hat{M})^{-1/2}$ (again, one takes the Moore–Penrose pseudo-inverse if $\hat{M}$ is not full-rank). 
	\item The value of $\tau_i$ can be computed in time $\poly(N)$ by considering all the candidate values $\tau$ of the form $(p_i(\x))^2$ for all elements $\x$ in $S_{\text{filtered}
	}^i$, and setting $\tau_i$ to be the smallest candidate that satisfies the condition in the algorithm.
\end{itemize}
Note that to prove the run-time bound of $\poly(N)$ for the algorithm as a whole we will need to bound the total number of iterations in the main while loop, which is done in Section \ref{sec: run-time analysis}.  


\subsubsection{Correctness analysis}\label{sec: correctness}
We now proceed to proving first the correctness of the algorithm in Section \ref{sec: correctness}. Then, in Section \ref{sec: run-time analysis} we show the required run-time bound.

In this section we show that with probability at least $1-\delta$ the algorithm $\mathcal{A}$ satisfies the two correctness guarantees in Theorem \ref{thm: outlier removal}. We begin by arguing that with probability at least $1-\delta$ the sets $S_{\DP}$ and $S_{\text{clean}}$ are well-behaved.
\begin{claim}
	\label{claim: bad events are unlikely}
	Let the set $S$ be formed of $N$ i.i.d. samples from a $k$-tame distribution $\DP$, where $N \geq C\left(\frac{(kd)^k}{\epsilon} 
	\log \frac{1}{ \delta}
	\right)^C$ and $C$ is a sufficiently large absolute constant. Also, let $B_0 = \frac{4d^{3k}}{\epsilon}$. Then
	with probability at least $1-\delta/10$ the set $S$ satisfies the following properties for any polynomial $p$ over $\bR^d$ of degree at most $k$ and any pair of values of $\tau_1, \tau_2$ in $\bR:$
		\[
		\left\lvert 
		\frac{
			|\{\x\in S:\: 
			\tau_2 \geq (p(\x))^2>\tau_1
			\}|}{N}
		-
		\pr_{\x \sim \DP}[\tau_2\geq (p(\x))^2>\tau_1]
		\right \rvert
		\leq 
		100
		\sqrt{\frac{d^{2k} \log N}{N}
			\log \frac{1}{\delta}}\tag{1}
		\]
		\[
		    \frac{1}{N}
		\left\lvert
		\left\{
		\x \in S  : \max_{
			\substack{p'\text{ of degree }k:\  \ex_{\x \sim \DP}[(p'(\x))^2]\leq 1  
		}}
		(p'(\x))^2> \frac{2d^{3k}}{\epsilon}
		\right\}
		\right\rvert
		\leq 
		\frac{3\epsilon}{4}\tag{2}
		\]
		\[
		\ex_{\x \sim S_{\DP}} \left[ (p(\x))^2 \cdot \mathbf{1}_{(p(\x))^2 \leq B_0}  \right]
		\leq 2 \ex_{\x \sim \DP}[(p(\x))^2] \tag{3}
		\] 
\end{claim}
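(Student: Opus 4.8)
The plan is to view all three items as uniform-convergence (VC) statements, proved from Facts \ref{fact: VC dimension of PTFs}, \ref{fact: VC dimension of ands and ors} and \ref{fact: VC dimension implies uniform convergence} together with moment estimates that follow from $k$-tameness of $\DP$, and then to conclude by a union bound over the (at most two) random failure events, each invoked with failure probability $\delta/20$; item (3) will follow deterministically from items (1) and (2) and $k$-tameness. For item (1), the indicator $\x\mapsto \mathbf{1}[\tau_2\ge (p(\x))^2>\tau_1]$ is the logical AND of the two polynomial threshold functions $\mathbf{1}[(p(\x))^2-\tau_1>0]$ and $\mathbf{1}[\tau_2-(p(\x))^2\ge 0]$, each of degree at most $2k$; by Fact \ref{fact: VC dimension of PTFs} each such PTF class has VC dimension at most $d^{2k}+1$, so by Fact \ref{fact: VC dimension of ands and ors} the class of all these interval indicators (ranging over all degree-$k$ polynomials $p$ and all $\tau_1,\tau_2\in\bR$) has VC dimension $O(d^{2k})$. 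Plugging $V=O(d^{2k})$ into the uniform-convergence bound of Fact \ref{fact: VC dimension implies uniform convergence} gives exactly the stated inequality, the absolute constant $100$ absorbing the hidden constants and the change of $\delta$ to $\delta/20$, once $N$ is at least a sufficiently large power of $(kd)^k\log(1/\delta)/\epsilon$.

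For item (2), write $F(\x):=\max_{p:\,\ex_{\x\sim\DP}[(p(\x))^2]\le 1}(p(\x))^2$. This equals the single quadratic form $(\x^{\otimes k})^\top M^\dagger(\x^{\otimes k})$, where $M=\ex_{\x\sim\DP}[(\x^{\otimes k})(\x^{\otimes k})^\top]$ and $M^\dagger$ is the Moore--Penrose pseudo-inverse: the identity holds whenever $\x^{\otimes k}$ lies in the column span of $M$, which, by the nullspace argument used in the proof of Lemma \ref{lem: estimating moments of a tame distribution}, is the case for $\DP$-almost every $\x$ and hence, with probability $1$, for every point of $S$. In particular $F$ is a polynomial of degree at most $2k$, and writing it in the $\DP$-orthonormal basis $\{r_j\}$ of that same proof gives $\ex_{\x\sim\DP}[F(\x)]=\sum_j\ex_{\DP}[(r_j(\x))^2]=\mathrm{rank}(M)\le d^k$. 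Markov's inequality then yields $\pr_{\x\sim\DP}[F(\x)>2d^{3k}/\epsilon]\le \epsilon/(2d^{2k})\le \epsilon/2$. Since $\{\x:F(\x)>2d^{3k}/\epsilon\}$ is a single region cut out by a polynomial of degree at most $2k$, a Chernoff bound (equivalently, Fact \ref{fact: VC dimension implies uniform convergence} applied to the degree-$2k$ PTF class) shows that its empirical mass on $S$ exceeds its true mass by at most $\epsilon/4$ with probability at least $1-\delta/20$, which gives the bound $3\epsilon/4$.

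For item (3), fix a degree-$k$ polynomial $p$ and use the layer-cake identity $\ex_{\x\sim S}[(p(\x))^2\mathbf{1}_{(p(\x))^2\le B_0}]=\int_0^{B_0}\pr_{\x\sim S}[(p(\x))^2>t]\,dt$. Item (1), applied with $\tau_1=t$ and $\tau_2=B_0$, bounds each integrand by $\pr_{\x\sim\DP}[(p(\x))^2>t]$ plus $\eta:=100\sqrt{d^{2k}(\log N/N)\log(1/\delta)}$, so the integral is at most $\int_0^{B_0}\pr_{\DP}[(p(\x))^2>t]\,dt+B_0\eta\le \ex_{\DP}[(p(\x))^2]+B_0\eta$; the sample-size hypothesis makes $B_0\eta$ negligible, and combining this with the trivial estimate $\ex_S[(p(\x))^2\mathbf{1}_{(p(\x))^2\le B_0}]\le B_0$ and the $k$-tameness tail bound $\pr_{\DP}[(p(\x))^2>B_0]\le e^{-\Omega(B_0^{1/2k})}$ (applied after rescaling $p$ to unit $\DP$-second moment, to control the top levels of the integral) yields $\ex_S[(p(\x))^2\mathbf{1}_{(p(\x))^2\le B_0}]\le 2\ex_{\DP}[(p(\x))^2]$. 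I expect this last truncation/tail bookkeeping to be the main obstacle: when $\ex_{\DP}[(p(\x))^2]$ is small the additive slack $B_0\eta$ is not automatically dominated by $\ex_{\DP}[(p(\x))^2]$, so one must rescale $p$ to unit second moment (which raises the effective truncation level) and then use the rapid tail decay from $k$-tameness together with item (2) — which caps the empirical mass of the region where any unit-second-moment polynomial is large — to argue that the contribution of the high levels stays negligible. The remaining work is routine: verifying that the hypothesized power of $(kd)^k\log(1/\delta)/\epsilon$ on $N$ is large enough to push each error term ($\eta$, $B_0\eta$, and the $\epsilon/4$ of item (2)) below the thresholds required above, which follows directly from the explicit form of Fact \ref{fact: VC dimension implies uniform convergence}.
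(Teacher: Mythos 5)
Your treatment of item (1) coincides with the paper's (AND of two degree-$2k$ PTFs, Facts \ref{fact: VC dimension of PTFs}, \ref{fact: VC dimension of ands and ors} and \ref{fact: VC dimension implies uniform convergence}). For item (2) you take a genuinely different and cleaner route: the paper expands a unit-norm $p'$ in the $\DP$-orthonormal basis $\{r_j\}$, applies Markov plus a union bound to each $r_j$ separately, and uses the crude estimate $\bigl(\sum_j|r_j(\x)|\bigr)^2\le \frac{2d^k(m')^2}{\epsilon}$; you instead observe that the supremum equals the single degree-$2k$ polynomial $F(\x)=(\x^{\otimes k})^\top M^{\dagger}\x^{\otimes k}=\sum_j r_j(\x)^2$ with $\ex_{\DP}[F]=\mathrm{rank}(M)\le d^k$, so one application of Markov gives the tail bound $\epsilon/(2d^{2k})$, and a plain Chernoff bound on this single fixed event finishes. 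This is tighter and shorter than the paper's argument (which also needs a per-point ``good event'' bookkeeping step), and your handling of the nullspace directions matches the paper's. For item (3) the paper appeals to Lemma \ref{lemma: spectrac concentration for VC functions} (a discretize-and-optimize version of exactly your layer-cake computation), so the two derivations are morally the same and both produce an \emph{additive} error term ($B_0\eta$ in your version, $O(\sqrt{B_0}(\cdot)^{1/4})$ in the lemma's), which the sample-size hypothesis makes $o(1)$.

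The one genuine gap is the step you yourself flag, and your proposed repair does not close it. Rescaling $p=\sigma q$ to unit norm raises the truncation level to $B_0/\sigma^2$, and the layer-cake bound then reads $\ex_S[p^2\mathbf{1}_{p^2\le B_0}]\le \sigma^2\bigl(\ex_{\DP}[q^2]+\tfrac{B_0}{\sigma^2}\eta\bigr)=\ex_{\DP}[p^2]+B_0\eta$: the additive slack is exactly scale-invariant, so neither the rescaling, nor the tameness tail bound, nor item (2) (which caps the \emph{number} of outlier points, not their second-moment contribution) dominates $B_0\eta$ by $\ex_{\DP}[p^2]$ when the latter is tiny. You should be aware, however, that this is a deficiency of the claim as stated rather than of your argument relative to the paper's: the paper's route through Lemma \ref{lemma: spectrac concentration for VC functions} has the same additive error and likewise only yields the multiplicative form $2\ex_{\DP}[(p(\x))^2]$ when $\ex_{\DP}[(p(\x))^2]=\Omega(1)$. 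Since property (3) is only ever invoked (Equation \ref{eq: restricted variance small}) for polynomials $p_i$ with $p_i^\top\hat M p_i\le 1$, hence $\ex_{\DP}[(p_i(\x))^2]=\Theta(1)$, the additive form that both you and the paper actually prove suffices for all downstream uses; the clean fix is to restate (3) with an additive $o(1)$ error or to restrict it to polynomials of unit $\DP$-second moment.
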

\begin{proof} 
	Since $(p(\x))^2$ is a polynomial of degree at most $2k$, the every function of the form $\{\mathbf{1}_{\tau_2 \geq (p(\x))^2 >\tau}\}$ is an AND of two degree-$2k$ polynomial threshold functions. Since degree-$2k$ polynomial threshold functions have a VC dimension of at most $d^{2k}+1$, we can use Fact \ref{fact: VC dimension of ands and ors} and Fact \ref{fact: VC dimension implies uniform convergence} to conclude that property (1) holds with probability at least $1-\delta/30$. 
	
	Now, we show that property (2) is likely to be satisfied.
	Then there is a collection $\{r_1, \cdots, r_{m'}\}$ of degree-$k$ polynomials that satisfy \[
	\ex_{\x \sim \DP}
	[
	r_{j}(\x) r_{j'}(\x)
	]
	=
	\begin{cases}
		1&\text{ if }j=j'\\
		0 &\text{ otherwise}.
	\end{cases}
	\]
	(Such collection necessarily exists via the Gram-Schmidt process.) We let $M$ denote the matrix $\ex_{\x \sim \DP} (\x^{\otimes d}) (\x^{\otimes d})^T$.
	Additionally, we consider a basis $\{g_1,\cdots, g_{m-m'}\}$ for the nullspace of $M$.
	Now, for $\x$ sampled from $\DP$ we have:
	\begin{itemize}
		\item For a specific index $j$, we have $\ex_{\x \sim \DP}[(r_j(\x)^2]=1$ and therefore by Markov's inequality we have
		\[
		\pr_{\x \sim \DP}\left[(r_j(\x))^2\leq \frac{2d^k}{\epsilon}\right]\geq 1- \frac{\epsilon}{2d^k}
		\]
		\item Each $g_j$ has  $\ex_{\x \sim \DP}[(g_j(\x)^2]=0$, and therefore $\pr_{\x \sim \DP}[g_j(\x)=0]=1$.
	\end{itemize}
	By union bound, all the events above take place for $\x$ sampled from $\DP$ with probability at least $1-m\frac{\epsilon}{2d^k}\geq 1-\frac{\epsilon}{2}$. Via the standard Hoeffding bound, with probability at least $1-\delta$ the events above take place for at least $\frac{\epsilon}{2}+\sqrt{\frac{20}{N}\log\frac{20}{\delta}}$ fraction of elements $\x$ in $S$.
	Since $\{r_j\}$ are orthonormal with respect to $M$, every degree-$k$ polynomial $p'$ satisfying $\ex_{\x \sim \DP}[(p'(\x))^2]\leq 1$ can be decomposed as $p'=\sum_{i=0}^{m'} \alpha_i r_i+\sum_{i=0}^{m-m'} \beta_i g_i$, where each $\alpha_i$ is in $[-1,1]$. Therefore if the events above take place for a point $\x$, then 
	\begin{multline}
		\max_{
			\substack{p'\text{ of degree }k:\ \\  \ex_{\x \sim \DP}[(p'(\x))^2]\leq 1  
		}}
		(p'(\x))^2
		=
		\max_{
			\substack{\alpha_1, \cdots, \alpha_{m'} \in [-1,1]\\
				\beta_1, \cdots \beta_{m-m'}\in \bR
		}}
		\left(\sum_{i=0}^{m'} \alpha_i r_i(\x)+ \overbrace{\sum_{i=0}^{m-m'} \beta_ig_i(\x)}^{=0}\right)^2
		\leq\\
		\left(\sum_{i=0}^{m'} \overbrace{|r_i(\x)|}^{\leq \sqrt{2d^k/\epsilon}}\right)^2
		\leq
		\frac{2d^k (m')^2}{\epsilon}
		\leq
		\frac{2d^{3k}}{\epsilon},
	\end{multline}
	from which Property (2) follows.
	
	Finally, we remark that Property (3) holds with probability at least $1-\delta/30$ as a consequence of the more general Lemma \ref{lemma: spectrac concentration for VC functions}.
\end{proof}
Now, we proceed to argue that if all the properties in Claim \ref{claim: bad events are unlikely} and Equation \ref{eq: the thing that M hat satisfies} hold then the algorithm $\mathcal{A}$ will satisfy the $(\alpha, \epsilon/2)$-validity property. In other words, we show that the set $S_{\text{accept}} \cap S_{\text{uncorrupted}}$ has a size of at most $\alpha |S_{\text{adversarial}}|  +\frac{\epsilon}{2}N$. The set $S_{\text{accept}} \cap S_{\text{uncorrupted}}$ consists of two components: 
\begin{itemize}
	\item  The elements in $\left\{\x\in S_{\text{uncorrupted}}: \max_{
		\substack{p \text{ of degree }k\\
			p^T \hat{M} p\leq 1  
	}}
	(p(\x))^2>B_0\right\}$, whose number is upper-bounded by $2\epsilon N/3$ for the following reason. Equation \ref{eq: the thing that M hat satisfies} implies that whenever $p^T \hat{M} p$ holds, we also have $\ex_{\x \sim \DP}[(p(\x))^2]\leq 10/9$ and since $S_{\text{uncorrupted}}$ is assumed to satisfy Claim \ref{claim: bad events are unlikely}, for at least $1-2\epsilon/3$ fraction of elements $x$ in $S_{\text{uncorrupted}}$ we have 
	$
	(p(\x))^2\leq \frac{10}{9} \frac{2d^{3k}}{\epsilon}
	$, which is less than $B_0$.
	
	\item The elements in $\bigcup_{i} \left((S_{\text{filtered}}^{i} \setminus S_{\text{filtered}}^{i+1}) \cap S_{\text{uncorrupted}}\right)$, the number of which is bounded by $\frac{2\alpha}{5} |S_{\text{adversarial}}|$ by the following claim. 
\end{itemize}
\begin{claim}
	Suppose the sets $S_{\DP}$ and $S_{\text{clean}}$ satisfy the properties in Claim \ref{claim: bad events are unlikely}.
	Then, for each iteration $i$ of the main loop of the algorithm, it is the case that \[
	\left\lvert
	(S_{\text{filtered}}^{i} \setminus S_{\text{filtered}}^{i+1}) \cap S_{\text{uncorrupted}}\right\rvert 
	\leq 
	\frac{2\alpha}{5} 
	\left\lvert
	(S_{\text{filtered}}^{i} \setminus S_{\text{filtered}}^{i+1})
	\cap S_{\text{adversarial}}
	\right\rvert. 
	\]
\end{claim}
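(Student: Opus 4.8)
The plan is to prove the stronger per-iteration bound $|R_i \cap S_{\text{uncorrupted}}| \le \tfrac{\alpha}{10}\,|R_i|$, where $R_i := S^{i}_{\text{filtered}} \setminus S^{i+1}_{\text{filtered}} = \{\x \in S^{i}_{\text{filtered}} : (p_i(\x))^2 > \tau_i\}$ denotes the set of points deleted in iteration $i$. Since $S_{\DQ}$ is the disjoint union of $S_{\text{uncorrupted}}$ and $S_{\text{adversarial}}$ and $R_i \subseteq S_{\DQ}$, we have $|R_i| = |R_i \cap S_{\text{uncorrupted}}| + |R_i \cap S_{\text{adversarial}}|$, so the per-iteration bound rearranges to $(1-\tfrac{\alpha}{10})\,|R_i \cap S_{\text{uncorrupted}}| \le \tfrac{\alpha}{10}\,|R_i \cap S_{\text{adversarial}}|$; and since $\alpha \le 1$ the coefficient $\tfrac{\alpha/10}{1-\alpha/10}$ is at most $\tfrac{\alpha}{9} \le \tfrac{2\alpha}{5}$, which is exactly what the claim demands.

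To establish the per-iteration bound I would first observe that every point surviving into iteration $i$ has $p_i$-value bounded by $B_0$: the polynomial $p_i$ is feasible for the constraint $p^\top \hat M p \le 1$ (being the maximizer of that iteration's quadratic program), and $S^{i}_{\text{filtered}} \subseteq S^{0}_{\text{filtered}}$, so the definition of $S^{0}_{\text{filtered}}$ gives $(p_i(\x))^2 \le B_0$ for all $\x \in S^{i}_{\text{filtered}}$. Hence $R_i \cap S_{\text{uncorrupted}} \subseteq \{\x \in S_{\text{uncorrupted}} : B_0 \ge (p_i(\x))^2 > \tau_i\}$, and because $S_{\text{uncorrupted}} \subseteq S_{\text{clean}}$ we may upper bound $|R_i \cap S_{\text{uncorrupted}}|$ by $|\{\x \in S_{\text{clean}} : B_0 \ge (p_i(\x))^2 > \tau_i\}|$.

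The next step is to pass from this count over $S_{\text{clean}}$ to a count over $S_{\DP}$ using the uniform-convergence estimate of property~(1) in Claim~\ref{claim: bad events are unlikely}. Applying that property to $S_{\text{clean}}$ and then to $S_{\DP}$ (both drawn i.i.d.\ from $\DP$), with the polynomial $p_i$ and thresholds $\tau_1 = \tau_i$, $\tau_2 = B_0$, and using that its additive error equals $\Delta_0/2$, I get $\tfrac{1}{N}\,|R_i \cap S_{\text{uncorrupted}}| \le \pr_{\x \sim S_{\DP}}[B_0 \ge (p_i(\x))^2 > \tau_i] + \Delta_0$. On the other hand, the definition of $\tau_i$ (the smallest threshold satisfying the displayed inequality in the algorithm), evaluated at $\tau = \tau_i$, says precisely that $\tfrac{1}{N}\,|R_i| \ge \tfrac{10}{\alpha}\big(\pr_{\x \sim S_{\DP}}[B_0 \ge (p_i(\x))^2 > \tau_i] + \Delta_0\big)$. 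Combining the two displays yields $|R_i \cap S_{\text{uncorrupted}}| \le \tfrac{\alpha}{10}\,|R_i|$, completing the argument.

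I do not anticipate a real obstacle here — it is a bookkeeping argument once the quantities are aligned. The only subtle point is that property~(1) is being invoked with the \emph{data-dependent} polynomial $p_i$ and the data-dependent threshold $\tau_i$; this is legitimate because property~(1) is a uniform bound, a supremum over all degree-$\le k$ polynomials and all pairs of thresholds, so it applies to $p_i$ and $(\tau_i, B_0)$ as a special case. A secondary, even more minor point is that I use that $\tau_i$ satisfies its defining inequality, which holds because the algorithm sets $\tau_i$ to be the smallest such threshold and such a threshold exists whenever the loop is entered (otherwise the while-condition would already have failed and this iteration would not occur); I would simply note this rather than belabor it.
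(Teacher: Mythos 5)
Your proof is correct and follows essentially the same route as the paper's: bound the removed clean points by the event $\{B_0 \ge (p_i(\x))^2 > \tau_i\}$ over $S_{\text{clean}}$, transfer to $S_{\DP}$ via the uniform (VC-based) property (1) of Claim~\ref{claim: bad events are unlikely} applied to both samples, and compare against the defining inequality of $\tau_i$; the only difference is cosmetic (you first derive $|R_i \cap S_{\text{uncorrupted}}| \le \tfrac{\alpha}{10}|R_i|$ and then rearrange, whereas the paper lower-bounds $|R_i|$ and subtracts). Your explicit remarks on why the uniform bound may be applied to the data-dependent $p_i,\tau_i$ and on why the minimum defining $\tau_i$ is attained are both valid and, if anything, slightly more careful than the paper's own write-up.
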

\begin{proof}
	Since the set $S_{\text{uncorrupted}}$ is a subset of the set $S_{\text{clean}}$, we have
	\begin{equation}
		\label{eq: uncorrupted is a subset of clean}
		\left\lvert
		(S_{\text{filtered}}^{i} \setminus S_{\text{filtered}}^{i+1}) \cap S_{\text{uncorrupted}}\right\rvert
		\leq
		\left\lvert
		(S_{\text{filtered}}^{i} \setminus S_{\text{filtered}}^{i+1}) \cap S_{\text{clean}}\right\rvert  
	\end{equation}
	Also, based on how the algorithm chooses the set $S_{\text{filtered}}^{i+1}$ and the parameter $\tau_i$, we have:
	\begin{multline*}
	\frac{|S_{\text{filtered}}^{i} \setminus S_{\text{filtered}}^{i+1}|}{N}
	=
	\frac{1}{N}
	\bigg\lvert 
	\{
	\x \in S_{\text{filtered}}^i:\:(p_i(\x))^2 > \tau_i
	\}
	\bigg\rvert
	\geq \\
	\frac{10}{\alpha}
	\left(
	\pr_{\x \sim S_{\DP}}
	[B_0 \geq (p_i(\x))^2 > \tau_i]
	+\margin
	\right).
	\end{multline*}
	but since $S_{\DP}$ and $S_{\text{clean}}$ satisfy Property 1 in Claim \ref{claim: bad events are unlikely}, we also have
	\[
	\frac{1}{N}
	\bigg\lvert 
	\{
	\x \in S_{\text{clean}}:\: B_0 \geq (p_i(\x))^2 > \tau_i
	\}
	\bigg\rvert
	\leq
	\pr_{\x \sim S_{\DP}}
	[B_0\geq (p_i(\x))^2 > \tau_i]
	+\margin.
	\]
	Combining the preceding two inequalities yields
	\[
	\bigg\lvert S_{\text{filtered}}^{i} \setminus S_{\text{filtered}}^{i+1} \bigg\rvert
	\geq
	\frac{10}{ \alpha }
	\bigg\lvert 
	\{
	\x \in S_{\text{clean}}:\: B_0 \geq (p_i(\x))^2 > \tau_i
	\}
	\bigg\rvert.
	\]
	We argue that every $\x$ in $(S_{\text{filtered}}^{i} \setminus S_{\text{filtered}}^{i+1}) \cap S_{\text{clean}}$ satisfies $B_0 \geq (p_i(\x))^2 > \tau_i$. Indeed, if $\x$ belongs to $S_{\text{filtered}}^{i}$, it also belongs to $S_{\text{filtered}}^{0}$ and therefore $(p_i(\x))^2\leq B_0((p_i)^T \hat{M} (p_i)) \leq B_0$. It also has to be that $(p_i(\x))^2 > \tau_i$ because of how $S_{\text{filtered}}^{i+1}$ is defined inside the algorithm. Thus, we have
	\[
	\bigg\lvert S_{\text{filtered}}^{i} \setminus S_{\text{filtered}}^{i+1} \bigg\rvert
	\geq
	\frac{10}{\alpha} \bigg\lvert (S_{\text{filtered}}^{i} \setminus S_{\text{filtered}}^{i+1}) \cap S_{\text{clean}}
	\bigg \rvert.
	\]
	Since $ S_{\text{filtered}}^{i} \setminus S_{\text{filtered}}^{i+1} $ is the disjoint union of $(S_{\text{filtered}}^{i} \setminus S_{\text{filtered}}^{i+1}) \cap S_{\text{clean}}$ and $S_{\text{filtered}}^{i} \setminus S_{\text{filtered}}^{i+1}) 
	\cap S_\text{adversarial}$ we further conclude that
	\[
	\bigg\lvert (S_{\text{filtered}}^{i} \setminus S_{\text{filtered}}^{i+1}) 
	\cap S_\text{adversarial}
	\bigg\rvert
	\geq  \left(\frac{10}{\alpha}+1\right) \bigg\lvert (S_{\text{filtered}}^{i} \setminus S_{\text{filtered}}^{i+1}) \cap S_{\text{clean}}
	\bigg \rvert.
	\]
	Finally, recalling that $S_{\text{uncorrupted}}$ is contained in $S_{\text{clean}}$, we conclude the proposition. 
\end{proof}
Overall, the above claim concludes the proof of $(\alpha,\epsilon)$-validity. Now we proceed to proving the spectral $\frac{100}{\alpha}$-boundedness.
\begin{claim}
	\label{claim: algo outbuts spectrally bounded set}
	Suppose the algorithm terminates and produces a partition $
	(S_{\text{accept}}, S_{\text{reject}})$ and the matrix $\hat{M}$ satisfies Equation \ref{eq: the thing that M hat satisfies}. Also, suppose that $C$ exceeds a certain absolute constant. Then, for every polynomial $p$ of degree at most $k$  satisfying \[\ex_{\x \sim \DP}[(p(\x))^2]\leq 1,\] it is the case that 
	\begin{equation}
		\label{eq: polynomial certifiability}
		\frac{1}{N}
		\sum_{\x \in S_{\text{accept}}}
		p(\x)^2
		\leq \frac{100}{\alpha}.
	\end{equation}
\end{claim}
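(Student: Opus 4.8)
The plan is to read the bound off directly from the stopping rule of the main while loop and then transfer it from the $\hat M$-normalization to the $\DP$-normalization using the spectral accuracy of $\hat M$. Suppose the algorithm terminates at iteration $i_{\max}$, so that $S_{\text{accept}} = S_{\text{filtered}}^{i_{\max}}$ and the guard of the while loop has failed on $S_{\text{accept}}$; this means that for every polynomial $q$ of degree at most $k$ with $q^\top \hat M q \le 1$ we have
\[
\frac{1}{N}\sum_{\x \in S_{\text{accept}}} (q(\x))^2 \;\le\; \frac{50}{\alpha}\,(1 + \Delta_0 B_0).
\]
There is a small point here when $\hat M$ is singular: the feasible set $\{q : q^\top \hat M q \le 1\}$ is then unbounded along $\ker \hat M$. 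But since $S_{\text{accept}} \subseteq S_{\text{filtered}}^{0}$, every $\x \in S_{\text{accept}}$ already satisfies $(q(\x))^2 \le B_0$ for all feasible $q$, which forces every polynomial in $\ker \hat M$ to vanish on $S_{\text{accept}}$; hence the displayed maximum is finite and attained, and the statement is unambiguous. In any case we never invoke this, since in the argument below we only substitute one explicit feasible $q$.

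Next I would fix a polynomial $p$ of degree at most $k$ with $\ex_{\x \sim \DP}[(p(\x))^2] \le 1$ and rescale it so it lies inside the ellipsoid defining the maximization. By the upper-bound direction of Equation~\eqref{eq: the thing that M hat satisfies}, $p^\top \hat M p \le \tfrac{11}{10}\,\ex_{\x \sim \DP}[(p(\x))^2] \le \tfrac{11}{10}$, so $q := \sqrt{10/11}\,p$ satisfies $q^\top \hat M q \le 1$. Applying the termination bound to $q$ and multiplying through by $11/10$ yields
\[
\frac{1}{N}\sum_{\x \in S_{\text{accept}}} (p(\x))^2 \;\le\; \frac{55}{\alpha}\,(1 + \Delta_0 B_0).
\]

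It then remains to check that $\Delta_0 B_0 \le 9/11$, which makes the right-hand side at most $\tfrac{55}{\alpha}\cdot\tfrac{20}{11} = \tfrac{100}{\alpha}$, establishing \eqref{eq: polynomial certifiability}. Plugging in $\Delta_0 = 200\sqrt{d^{2k}\tfrac{\log N}{N}\log\tfrac1\delta}$ and $B_0 = \tfrac{4 d^{3k}}{\epsilon}$ gives $\Delta_0 B_0 = \tfrac{800\, d^{4k}}{\epsilon}\sqrt{\tfrac{\log N}{N}\log\tfrac1\delta}$, and the hypothesis $N \ge C\big(\tfrac{(kd)^k}{\epsilon}\log\tfrac1\delta\big)^C$ drives this below any fixed constant once $C$ is large enough, since the factor $\sqrt{(\log N)/N}$ decays faster than the $\poly(d^k, 1/\epsilon, \log(1/\delta))$ prefactor grows.

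I do not expect a genuine obstacle: the content of the claim is bookkeeping around the stopping rule together with the $(9/10, 11/10)$-spectral accuracy of $\hat M$. The only things to be careful about are using the \emph{correct} direction of Equation~\eqref{eq: the thing that M hat satisfies} (we need $p^\top \hat M p$ \emph{upper} bounded by the $\DP$-moment, not the reverse), and making explicit that the lower bound on $N$ is exactly what pushes $\Delta_0 B_0$ below the threshold $9/11$; everything else is immediate.
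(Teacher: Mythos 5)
Your proposal is correct and follows essentially the same route as the paper's proof: read the bound $\frac{50}{\alpha}(1+\Delta_0 B_0)$ off the failed while-loop guard, rescale $p$ into the feasible ellipsoid using the upper-bound direction of Equation~\eqref{eq: the thing that M hat satisfies} to pick up the factor $11/10$, and then use $B_0 = 4d^{3k}/\epsilon$ together with the lower bound on $N$ to drive $\Delta_0 B_0$ below a constant. Your treatment is in fact slightly more explicit than the paper's (which leaves the final numerical check to the reader), and your aside about the kernel of $\hat M$ is a harmless extra observation that the argument does not need.
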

\begin{proof}
	Since the matrix $\hat{M}$ satisfies Equation \ref{eq: the thing that M hat satisfies}, we have $p^T \hat{M} p\leq 11/10$ and since the main while loop of the algorithm has terminated, for the final value $i_{\max}$ of $i$ it is the case that 
	\[
	\frac{1}{N} \sum_{\x \in S_{\text{filtered}}^i} (p(\x))^2 \leq
	\frac{11}{10}\cdot \frac{50}{\alpha}\cdot \left(1 + \margin \cdot B_0\right).
	\]
	Substituting $B_0 = \frac{4d^{3k}}{\epsilon}$ and $N\geq C\left(\frac{(kd)^k}{\epsilon} 
	\log \frac{1}{ \delta}
	\right)^C$, we see that the inequality above yields Equation \ref{eq: polynomial certifiability} when $C$ exceeds a large enough absolute constant.
\end{proof}
\subsubsection{Run-time analysis}
\label{sec: run-time analysis}
We now prove the run-time bound of $\poly(N)$. Since each step of the algorithm takes time $\poly(N d^k/\epsilon)=\poly(N)$ (see Section \ref{sec: how to implement}), in order to obtain a required run-time bound, it is enough to show that the number of iterations of the main while loop is at most $N$. We argue this via the following claim: 
\begin{claim}
	\label{claim: after iteration polynomial pi is not a problem anymore}
	Suppose the sets $S_{\DP}$ and $S_{\text{clean}}$ satisfy the properties in Claim \ref{claim: bad events are unlikely}
	and the matrix $\hat{M}$ satisfies Equation \ref{eq: the thing that M hat satisfies}. Then,
	for every $i< i_{\max}$, we have
	$
	\frac{1}{N}
	\sum_{\x \in S_{\text{filtered}}^{i+1}}
	(p_i(\x))^2
	\leq \frac{50}{\alpha}
	\left(1 
	+\margin \cdot \frac{4d^{3k}}{\epsilon}\right)
	$
\end{claim}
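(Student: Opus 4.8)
The plan is to bound $\frac{1}{N}\sum_{\x \in S_{\text{filtered}}^{i+1}}(p_i(\x))^2$ by a layer-cake (Fubini) computation that exploits exactly the threshold $\tau_i$ chosen in iteration $i$. Throughout write $L_j(\tau) = \frac{1}{N}\,|\{\x \in S_{\text{filtered}}^{j} : (p_i(\x))^2 > \tau\}|$ and $R(\tau) = \pr_{\x \sim S_{\DP}}[B_0 \ge (p_i(\x))^2 > \tau]$. First I would record three structural facts, all under the good events of Claim~\ref{claim: bad events are unlikely} and Equation~\ref{eq: the thing that M hat satisfies}. (i) Since $S_{\text{filtered}}^{i+1}\subseteq S_{\text{filtered}}^{i}\subseteq S_{\text{filtered}}^{0}$ and $p_i^{T}\hat M p_i \le 1$, every $\x$ in $S_{\text{filtered}}^{i+1}$ has $(p_i(\x))^2 \le B_0$ by the definition of $S_{\text{filtered}}^{0}$, and $(p_i(\x))^2\le\tau_i$ by the definition of $S_{\text{filtered}}^{i+1}$; hence $L_{i+1}(\tau)=0$ for $\tau\ge\tau_i$ and $L_{i+1}(\tau)\le L_i(\tau)$ for all $\tau$. (ii) $\tau_i < B_0$: for any $\tau\ge B_0$ the set defining $L_i(\tau)$ is empty by (i), so the defining inequality for $\tau_i$ reads $0 \ge \frac{10}{\alpha}(0+\Delta_0)$, which fails. (iii) The set of admissible thresholds is nonempty, so that $\tau_i$, ``the smallest such $\tau$'', is well-defined and the algorithm reaches iteration $i+1$; this follows by integrating the defining inequality over $[0,B_0]$, since $i<i_{\max}$ forces $\frac{1}{N}\sum_{\x\in S_{\text{filtered}}^i}(p_i(\x))^2 > \frac{50}{\alpha}(1+\Delta_0 B_0)$, whereas $\frac{10}{\alpha}\int_0^{B_0}(R(\tau)+\Delta_0)\,d\tau$ is small by the estimate $\int_0^{B_0} R \le 3$ established below, so $L_i(\tau) > \frac{10}{\alpha}(R(\tau)+\Delta_0)$ for at least one $\tau$.

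Next, the layer-cake identity together with (i) gives
\[
\frac{1}{N}\sum_{\x \in S_{\text{filtered}}^{i+1}}(p_i(\x))^2 \;=\; \int_0^{\tau_i} L_{i+1}(\tau)\, d\tau \;\le\; \int_0^{\tau_i} L_i(\tau)\, d\tau .
\]
By the minimality of $\tau_i$, for every $\tau<\tau_i$ the defining inequality fails, i.e.\ $L_i(\tau) < \frac{10}{\alpha}\bigl(R(\tau)+\Delta_0\bigr)$; plugging this into the integral yields
\[
\int_0^{\tau_i} L_i(\tau)\, d\tau \;\le\; \frac{10}{\alpha}\int_0^{\tau_i}\bigl(R(\tau)+\Delta_0\bigr)\, d\tau \;=\; \frac{10}{\alpha}\Bigl(\int_0^{\tau_i} R(\tau)\, d\tau + \Delta_0\,\tau_i\Bigr).
\]

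Finally I would estimate the two pieces. Using $\tau_i<B_0$ from (ii), the second term is at most $\frac{10}{\alpha}\Delta_0 B_0$. For the first, a further application of Fubini gives $\int_0^{\infty} R(\tau)\, d\tau = \ex_{\x\sim S_{\DP}}[(p_i(\x))^2\,\mathbf{1}\{(p_i(\x))^2\le B_0\}]$, which is at most $2\,\ex_{\x\sim\DP}[(p_i(\x))^2]$ by Property~(3) of Claim~\ref{claim: bad events are unlikely}, and at most $\frac{20}{9}\le 3$ by Equation~\ref{eq: the thing that M hat satisfies} together with $p_i^{T}\hat M p_i\le 1$. Combining, $\frac{1}{N}\sum_{\x\in S_{\text{filtered}}^{i+1}}(p_i(\x))^2 \le \frac{10}{\alpha}(3+\Delta_0 B_0) \le \frac{50}{\alpha}(1+\Delta_0 B_0)$, which is the claim (recall $\Delta_0=\margin$ and $B_0=\frac{4d^{3k}}{\epsilon}$).

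The step I expect to need the most care is the bookkeeping around $\tau_i$: checking that the admissible set of thresholds is nonempty (fact (iii) above), that $\tau_i<B_0$, and that the infimum defining $\tau_i$ is actually attained — a short right-continuity argument, since $L_i$ and $R$ are step functions — so that ``the smallest $\tau$'' is meaningful and minimality yields the strict inequality $L_i(\tau)<\frac{10}{\alpha}(R(\tau)+\Delta_0)$ used above for all $\tau<\tau_i$. Everything else is the routine layer-cake manipulation and direct invocations of Property~(3) of Claim~\ref{claim: bad events are unlikely} and Equation~\ref{eq: the thing that M hat satisfies}.
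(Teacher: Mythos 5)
Your proposal is correct and follows essentially the same route as the paper's proof: the layer-cake identity over $[0,\tau_i]$, the minimality of $\tau_i$ to bound the level sets of $S_{\text{filtered}}^{i}$ by $\frac{10}{\alpha}(R(\tau)+\Delta_0)$, Property~(3) of Claim~\ref{claim: bad events are unlikely} together with Equation~\ref{eq: the thing that M hat satisfies} to bound the truncated empirical second moment, and $\tau_i\le B_0$ to control the $\Delta_0$ term. The extra bookkeeping you flag (non-emptiness and attainment of the admissible threshold set) is left implicit in the paper but your argument for it is sound.
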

Indeed, since in $i$-th loop of the algorithm we have \[\frac{1}{N}
\sum_{\x \in S_{\text{filtered}}^{i}}
(p_i(\x))^2
> \frac{50}{\alpha}
\left(1 
+\margin \cdot \frac{4d^{3k}}{\epsilon}\right),
\] the claim above implies that necessarily $S_{\text{filtered}}^{i+1} \neq S_{\text{filtered}}^{i}$, which means that $|S_{\text{filtered}}^{i+1}| \leq |S_{\text{filtered}}^{i}|-1$. Therefore the total number of iterations $i_{\max}$ is upper-bounded by $N$. Now, we prove Claim \ref{claim: after iteration polynomial pi is not a problem anymore}.

\begin{proof}[Proof of Claim \ref{claim: after iteration polynomial pi is not a problem anymore}]
	Since every element $\x$ of $S_{\text{filtered}}^{i+1}$ satisfies $(p_i(\x))^2 \leq \tau_i$ and the set $S_{\text{filtered}}^{i+1}$ is a subset of $S_{\text{filtered}}^{i}$ we have
	\begin{multline}
		\label{eq: bound on p squared in terms of integral}
		\frac{1}{N}
		\sum_{x \in S_{\text{filtered}}^{i+1}}
		(p_i(x))^2
		=
		\int_{\tau=0}^{\infty}
		\frac{1}{N}
		\bigg\lvert 
		\{
		x \in S_{\text{filtered}}^{i+1}:\: (p_i(x))^2 > \tau
		\}\bigg\rvert \ d\tau
		=\\
		\int_{\tau=0}^{\tau_i}
		\frac{1}{N}
		\bigg\lvert 
		\{
		x \in S_{\text{filtered}}^{i+1}:\: (p_i(x))^2 > \tau
		\}\bigg\rvert \ d\tau
		\leq
		\int_{\tau=0}^{\tau_i}
		\frac{1}{N}
		\bigg\lvert 
		\{
		x \in S_{\text{filtered}}^i:\: (p_i(x))^2 > \tau
		\}\bigg\rvert \ d\tau.
	\end{multline}
	Now, recall that  $\tau_i$ is the smallest value of $\tau$ subject to:
	\[
	\frac{1}{N}
	\bigg\lvert 
	\{
	\x \in S_{\text{filtered}}^i:\: (p_i(\x))^2 > \tau
	\}
	\bigg\rvert
	\geq
	\frac{10}{\alpha}
	\left(
	\pr_{\x \sim S_{\DP}}
	[B_0 \geq (p_i(\x))^2 > \tau]
	+\margin
	\right).
	\]
	Therefore, for all values of $\tau$ smaller than $\tau_i$ we have
	\[
	\frac{1}{N}
	\bigg\lvert 
	\{
	\x \in S_{\text{filtered}}^i:\: (p_i(\x))^2 > \tau
	\}
	\bigg\rvert
	<
	\frac{10}{\alpha}
	\left(
	\pr_{\x \sim S_{\DP}}
	[B_0 \geq (p_i(\x))^2 > \tau]
	+\margin
	\right),
	\]
	which combined with Equation \ref{eq: bound on p squared in terms of integral} implies
	
	\begin{multline}
		\label{eq: initial bound on new average of polynomial}
		\frac{1}{N}
		\sum_{x \in S_{\text{filtered}}^{i+1}}
		(p_i(x))^2\leq\\
		\frac{10}{\alpha}\left(
		\left(\margin\right) \tau_i+
		\int_{\tau=0}^{\infty}
		\pr_{x \sim S_{\DP}}
		[B_0 \geq (p_i(x))^2 > \tau] \ d\tau \right)=\\
		\frac{10}{\alpha}\left(
		\left(\margin\right) \tau_i+
		\ex_{x \sim S_{\DP}} \left[ p_i(x))^2 \cdot 1_{x \leq B_0}  \right] \right)
	\end{multline}
	Additionally, since $S_{\DP}$ is assumed to satisfy property (3) in Claim \ref{claim: bad events are unlikely}, and $\hat{M}$ is assumed to satisfy Equation \ref{eq: the thing that M hat satisfies}, we have
	\begin{equation}
		\label{eq: restricted variance small}
		\ex_{\x \sim S_{\DP}} \left[ (p_i(\x))^2 \cdot \mathbf{1}_{(p(\x))^2 \leq B_0}  \right]
		\leq 2 \ex_{\x \sim \DP}[(p_i(\x))^2]
		\leq \frac{11}{5} (p_i)^T \hat{M} (p_i)^T\leq \frac{11}{5}.
	\end{equation}
	Combining Equation \ref{eq: initial bound on new average of polynomial} and Equation \ref{eq: restricted variance small} we get
	
	\[
	\frac{1}{N}
	\sum_{\x \in S_{\text{filtered}}^{i+1}}
	(p_i(\x))^2\leq  \frac{10}{\alpha}\left(
	\left(\margin\right) \tau_i+
	\frac{11}{5} \right).
	\]
	Recall that $ S_{\text{filtered}}^{i+1}$ is a subset of $ S_{\text{filtered}}^{0}$ and therefore for every element $\x$ of $ S_{\text{filtered}}^{i+1}$ it is the case that $(p_i(\x))^2\leq B \cdot (p_i)^T\hat{M} (p_i) \leq B$, which combinded with the definition of $\tau_i$ implies that $\tau_i \leq B =\frac{4d^{3k}}{\epsilon}$. Substituting this above allows us to conclude the claim. 
\end{proof}

\subsection{Outlier Removal in the PQ setting}\label{appendix:outlier-removal-pq}

We restate Theorem \ref{theorem:outlier-removal-main}:
\begin{theorem}
	\label{thm: outlier removal selector version}
	There exists an algorithm that, given sample access to an arbitrary distribution $\DQ$ over $\bR^d$, sample access to a $k$-tame probability distribution $\DP$ over $\bR^d$, parameters $\epsilon, \alpha, \delta$ in $(0,1)$, and a positive integer $k$, runs in time  $\poly\left(\frac{(kd)^k}{\epsilon} \log \frac{1}{\delta}\right)$ and outputs a succinct $\poly\left(\frac{(kd)^k}{\epsilon} \log \frac{1}{\delta}\right)$-time-computable description of a function $g: \bR^d \rightarrow \{0,1\}$ that satisfies the following properties with probability at least $1-\delta$:
	\begin{itemize}
		\item \textbf{Degree-$k$ spectral $\frac{200}{\alpha}$-boundedness}: For every polynomial $p$ of degree at most $k$ 
		it is the case that 
		\[
		\bE_{\x \sim \DQ}
		\left[
		(p(\x))^2 g(\x)
		\right]
		\leq \frac{200}{\alpha} \ex_{\x \sim \DP}[(p(\x))^2].
		\]
		\item $(\alpha,\epsilon)$-\textbf{validity}: we have
		\[
		\pr_{\x \sim \DP} [g(\x) = 0]
		\leq  
		\alpha \, \text{dist}_{\text{TV}}(\DQ, \DP)
		+
		\frac{\epsilon}{2}, 
		\]
		which in particular implies that $
		\pr_{\x \sim \DQ} [g(\x) = 0]
		\leq (1+\alpha)\text{dist}_{\text{TV}}(\DQ, \DP)+\epsilon/2
		$.
	\end{itemize}
\end{theorem}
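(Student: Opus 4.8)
The plan is to run exactly \Cref{algorithm:outlier-removal-main} on sets $S_{\DP}$ and $S_{\DQ}$ consisting of $N=\poly((kd)^k\eps^{-1}\log(1/\delta))$ i.i.d.\ samples from $\DP$ and $\DQ$ respectively, and to output the succinct selector $g$ it already defines from $\hat M$, the scale $B_0$, and the pairs $(\p_i,\tau_i)_{i\le i_{\max}}$. All of the finite-sample facts proved for the adversarial version carry over unchanged: \Cref{lem: estimating moments of a tame distribution} gives that $\hat M$ satisfies \Cref{eq: the thing that M hat satisfies} with high probability, \Cref{claim: bad events are unlikely} applies to $S_{\DP}$ (it uses only that $S_{\DP}$ is i.i.d.\ from a $k$-tame distribution), and \Cref{claim: algo outbuts spectrally bounded set} gives, upon termination, $\frac1N\sum_{\x\in S_{\DQ}:\,g(\x)=1}(p(\x))^2\le\frac{100}{\alpha}$ for every degree-$k$ polynomial $p$ with $\ex_{\x\sim\DP}[(p(\x))^2]\le1$. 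The two genuinely new ingredients are (i) a bound on the number of iterations $i_{\max}$ that is \emph{independent of $N$}, and (ii) the passage from these empirical statements to the required population statements about $\DP$ and $\DQ$, which I would carry out via VC-based uniform convergence and, for the spectral-boundedness property, via \Cref{lemma: spectrac concentration for VC functions}.

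For (i) I would track the potential $\Phi_i=\trc(\hat M^{+}M_i)$, where $M_i=\frac1N\sum_{\x\in S^i}(\x^{\otimes k})(\x^{\otimes k})^\top$ and $\hat M^{+}$ is the Moore--Penrose pseudo-inverse; equivalently $\Phi_i$ is the sum of the $(M_i,\hat M)$-generalized eigenvalues. Every point surviving the initial pruning has $(\x^{\otimes k})^\top\hat M^{+}(\x^{\otimes k})\le B_0$, so $\Phi_0\le B_0=\Theta(d^{3k}/\eps)$. At an iteration that actually runs, $\p_i$ is the top generalized eigenvector with value $\mu_i$ exceeding the stopping threshold $\Theta(1/\alpha)$, and deleting the points with $(\p_i^\top\x^{\otimes k})^2>\tau_i$ drops the squared $\p_i$-mass from $\mu_i$ to $O(1/\alpha)$ (this is \Cref{claim: after iteration polynomial pi is not a problem anymore}); since deletions only shrink $M_i$ in the PSD order, $\Phi_{i-1}-\Phi_i\ge\mu_i-O(1/\alpha)$, and as $\mu_i$ is the \emph{largest} of at most $m\le d^k$ generalized eigenvalues we also have $\Phi_{i-1}\le m\mu_i$. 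For iterations with $\mu_i$ bounded away from the threshold this yields a multiplicative decrease $\Phi_i\le(1-\Omega(d^{-k}))\Phi_{i-1}$, and the remaining ``low-mass'' iterations are bounded separately; since the loop keeps $\Phi_{i-1}\ge\mu_i=\Omega(1/\alpha)$, this forces $i_{\max}=(kd)^{O(k)}\poly\log(1/\eps)$, independent of $N$. The payoff is that $g^{-1}(0)$ is then a Boolean combination of $O(i_{\max})$ degree-$2k$ polynomial threshold functions, so $g$ belongs to a class $\F$ of VC dimension $V=(kd)^{O(k)}$ (by \Cref{fact: VC dimension of PTFs} and \Cref{fact: VC dimension of ands and ors}) all of whose members satisfy $g(\x)=1\Rightarrow(p(\x))^2\le\tfrac{11}{10}B_0$ for $\ex_{\x\sim\DP}[(p(\x))^2]\le1$, by \Cref{eq: the thing that M hat satisfies}.

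For (ii), the spectral-boundedness property follows by feeding the empirical bound above into \Cref{lemma: spectrac concentration for VC functions} with the family $\F$ and scale $B=\tfrac{11}{10}B_0$: since $S_{\DQ}$ is i.i.d.\ from $\DQ$, this gives $\ex_{\x\sim\DQ}[(p(\x))^2g(\x)]\le\frac{100}{\alpha}+o_N(1)\le\frac{200}{\alpha}$ for $\DP$-normalized $p$, and rescaling gives the claim for arbitrary $p$ (the degenerate case $\ex_{\x\sim\DP}[(p(\x))^2]=0$ is separate: the pruning step already forces $p(\x)=0$ wherever $g(\x)=1$). For the validity property, write $g^{-1}(0)=R_0\sqcup\bigsqcup_{i\ge1}\tilde R_i$, where $R_0=\{\x:\max_{\p:\ \p^\top\hat M\p\le1}(\p^\top\x^{\otimes k})^2>B_0\}$ and $\tilde R_i$ is the fixed region of points removed at step $i$ (pairwise disjoint, each a Boolean combination of $O(i_{\max})$ degree-$2k$ PTFs, hence in an $N$-independent-VC class). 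The population analogue of property (2) of \Cref{claim: bad events are unlikely} --- Markov's inequality over an $\ex_{\x\sim\DP}$-orthonormal polynomial basis --- gives $\pr_{\x\sim\DP}[\x\in R_0]\le\eps/4$ once $B_0=\Theta(d^{3k}/\eps)$. The defining property of $\tau_i$, namely $\pr_{\x\sim S_{\DQ}}[\x\in\tilde R_i]\ge\tfrac{10}{\alpha}\big(\pr_{\x\sim S_{\DP}}[B_0\ge(\p_i^\top\x^{\otimes k})^2>\tau_i]+\Delta\big)$, together with the fact that the slack $\Delta$ is a fixed $\poly(d^k)\eps^{-1}$ multiple of the per-PTF sampling error (so $\Delta$ dominates the \Cref{fact: VC dimension implies uniform convergence} error for this class while still being $o_N(1)$), upgrades to the population inequality $\pr_{\x\sim\DQ}[\x\in\tilde R_i]\ge\tfrac{10}{\alpha}\pr_{\x\sim\DP}[\x\in\tilde R_i]$ for every $i$. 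Summing over the disjoint $\tilde R_i$ and using $\sum_i\big(\pr_{\DQ}[\tilde R_i]-\pr_{\DP}[\tilde R_i]\big)\le\tv(\DP,\DQ)$ yields $\sum_i\pr_{\DP}[\tilde R_i]\le\frac{\alpha}{10-\alpha}\tv(\DP,\DQ)\le\alpha\,\tv(\DP,\DQ)$, hence $\pr_{\x\sim\DP}[g(\x)=0]\le\alpha\,\tv(\DP,\DQ)+\eps/2$; the bound $\pr_{\x\sim\DQ}[g(\x)=0]\le(1+\alpha)\tv(\DP,\DQ)+\eps/2$ follows from $\pr_{\x\sim\DQ}[\cdot]\le\pr_{\x\sim\DP}[\cdot]+\tv(\DP,\DQ)$. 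Running time and description size are $\poly(N)=\poly((kd)^k\eps^{-1}\log(1/\delta))$ once $i_{\max}$ is $(kd)^{O(k)}$, and all failure events are union-bounded into $\delta$.

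The step I expect to be the main obstacle is exactly the tension between (i) and (ii): the VC dimension of the class of candidate rejection sets --- and hence the sampling error one must control --- scales with $i_{\max}$, so the $N$-independent iteration bound has to be established \emph{first} (this is the core new technical content beyond \Cref{thm: outlier removal}, and it needs care in handling the ``low-mass'' iterations that the naive multiplicative-potential argument does not immediately control), and only then can one check that the algorithm's slack $\Delta$ is simultaneously large enough to absorb the resulting sampling error and small enough ($o_N(1)$) not to degrade the final constants; pinning down the polynomial-in-$d^k$ factors and the exponent in $N=\poly((kd)^k\eps^{-1}\log(1/\delta))^C$ is where the bulk of the remaining work lies.
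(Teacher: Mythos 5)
Your proposal matches the paper's proof essentially step for step: the same reduction to the adversarial algorithm of \Cref{thm: outlier removal}, the same $N$-independent iteration bound via the potential $\trc(\hat M^{+}M_i)$ (the paper's $\phi_i=\sum_j\lambda_j^i$ in \Cref{claim: iteration bound independent of N}), the same VC-dimension bound on the class containing $g$, and the same use of \Cref{lemma: spectrac concentration for VC functions} and the $\tau_i$-defining inequality to transfer the empirical spectral and validity bounds to the population level. The obstacle you flag around ``low-mass'' iterations does not actually arise: whenever the loop body executes, the stopping criterion guarantees $\mu_i\geq\frac{50}{\alpha}(1+\Delta B_0)$, which exceeds the post-removal bound of \Cref{claim: after iteration polynomial pi is not a problem anymore} by a constant factor, so every iteration yields the multiplicative decrease $\Phi_i\leq(1-\Omega(d^{-k}))\Phi_{i-1}$ with no separate case needed.
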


We also restate Algorithm \ref{algorithm:outlier-removal-main} as follows:
\begin{enumerate}
	\item Draw sets $S_{\DP}$ and $S_{\DQ}$ of $N=C\left(\frac{(kd)^k B}{\epsilon} \log \frac{1}{\delta}\right)^C$ samples from distributions $\DP$ and $\DQ$ respectively, where $C$ is a sufficiently large absolute constant.
	\item Run the algorithm of Theorem \ref{thm: outlier removal} on the input $S_{\DQ}$. Set $i_{\max}$ to be the number of iterations of the main loop in the algorithm of  Theorem \ref{thm: outlier removal}, and store the polynomials $\{p_i\}$, values $\{\tau_i\}$ computed at each iteration of the main loop, as well as the matrix $\hat{M}$.
	\item Output the function $g: \bR^d \rightarrow \{0,1\}$ that does the following given an input $\x$ in $\bR^d$:
	\begin{enumerate}
		\item If $\max_{
			\substack{p \text{ of degree }k\\
				p^T \hat{M} p\leq 1  
		}}
		(p(\x))^2>B_0$, then  $g(\x)$ is defined to be $0$. (See  Section \ref{sec: how to implement} to see how to compute this quantity in time $\poly(N)$.) 
		\item If for some $i$ it is the case that $(p_i(\x))^2$ is greater than $\tau_i$, then $g(\x)$ is defined to be $0$.
		\item Otherwise, $g(\x)$ is defined to be $1$.
	\end{enumerate}

\end{enumerate}
It is immediate from Theorem \ref{thm: outlier removal} that the algorithm above runs in time  $\poly\left(\frac{(kd)^k B}{\epsilon} \log \frac{1}{\delta}\right)$ with probability at least $1-\delta$. Furthermore, we also see that the function $g$ can be described using $\poly\left(\frac{(kd)^k B}{\epsilon} \log \frac{1}{\delta}\right)$ bits and can be computed using this description on a given input $\x$ in time $\poly\left(\frac{(kd)^k B}{\epsilon} \log \frac{1}{\delta}\right)$.

We need the following claim bounding the number of iterations in the algorithm of Theorem \ref{thm: outlier removal}, proof of which is deferred until the end of this section. We remark that for Theorem \ref{thm: outlier removal} we bounded the total number of iterations by $N$, but in this section we will need a bound that depends only on $d, k$ and $\epsilon$ and not on $N$.

\begin{claim}
	\label{claim: iteration bound independent of N}
	If the set $S_{\DP}$ satisfies the condition of Claim \ref{claim: bad events are unlikely}, then the number of iterations $i_{\max}$ of the main while loop in the algorithm of Theorem \ref{thm: outlier removal} satisfies
	$
	i_{\max}
	=
	O\left( 
	k d^k 
	\log (B_0 d)
	\right).
	$
\end{claim}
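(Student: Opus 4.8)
The plan is to bound $i_{\max}$ by a potential argument based on the (pseudo‑)whitened second–moment matrix of the working set. I would set $N_i:=\hat{M}^{-1/2}M_i\hat{M}^{-1/2}$, where $M_i=\frac1N\sum_{\x\in S_{\text{filtered}}^i}(\x^{\otimes k})(\x^{\otimes k})^\top$ and $\hat M^{-1/2}$ is the Moore--Penrose pseudo-inverse of $\hat M^{1/2}$ if $\hat M$ is singular (exactly as in Section~\ref{sec: how to implement}), and take the potential $\Phi_i:=\trc(N_i)=\frac1N\sum_{\x\in S_{\text{filtered}}^i}\|\hat M^{-1/2}\x^{\otimes k}\|_2^2$. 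Two easy facts come first: (i) $N_i$ is PSD, so $\mu_i:=\lambda_{\max}(N_i)$ — which is precisely the optimal value of the quadratic program solved at iteration $i$ — satisfies $\mu_i\le\Phi_i\le m\,\mu_i$, where $m\le d^k$ is the number of degree-$\le k$ monomials; and (ii) $\Phi_0\le B_0$, since every $\x\in S_{\text{filtered}}^0$ has $\|\hat M^{-1/2}\x^{\otimes k}\|_2^2=\max_{p:\,p^\top\hat M p\le1}(p(\x))^2\le B_0$ by the definition of $S_{\text{filtered}}^0$.

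The core step is to show that each active iteration multiplies $\Phi$ by at most $(1-\tfrac1{2m})$. Fix $i<i_{\max}$ and let $v_i:=\hat M^{1/2}p_i$ be the unit-norm leading eigenvector of $N_i$, so $v_i^\top N_i v_i=\frac1N\sum_{\x\in S_{\text{filtered}}^i}(p_i(\x))^2=\mu_i$ and $v_i^\top N_{i+1}v_i=\frac1N\sum_{\x\in S_{\text{filtered}}^{i+1}}(p_i(\x))^2$. Since every removed point satisfies $\|\hat M^{-1/2}\x^{\otimes k}\|_2^2\ge(v_i^\top\hat M^{-1/2}\x^{\otimes k})^2=(p_i(\x))^2$, I get
\[
\Phi_i-\Phi_{i+1}\;\ge\;\frac1N\sum_{\x\in S_{\text{filtered}}^i\setminus S_{\text{filtered}}^{i+1}}(p_i(\x))^2\;=\;\mu_i-v_i^\top N_{i+1}v_i ,
\]
so it suffices to prove $v_i^\top N_{i+1}v_i\le\tfrac12\mu_i$. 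For this I would not quote the bound in Claim~\ref{claim: after iteration polynomial pi is not a problem anymore} directly — its right-hand side $\frac{50}{\alpha}(1+\Delta_0 B_0)$ only matches the stopping threshold and is too weak — but use the sharper intermediate estimate from that claim's proof: combining Equations~\eqref{eq: initial bound on new average of polynomial} and~\eqref{eq: restricted variance small} gives $v_i^\top N_{i+1}v_i\le\frac{10}{\alpha}\!\left(\Delta_0\tau_i+\tfrac{11}5\right)$, and since $\tau_i\le B_0$ (every $\x\in S_{\text{filtered}}^i\subseteq S_{\text{filtered}}^0$ has $(p_i(\x))^2\le B_0(p_i^\top\hat M p_i)\le B_0$), this is at most $\tfrac12\cdot\frac{50}{\alpha}(1+\Delta_0 B_0)$. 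As iteration $i$ is active, $\mu_i>\frac{50}{\alpha}(1+\Delta_0 B_0)$, hence $v_i^\top N_{i+1}v_i\le\tfrac12\mu_i$, and therefore $\Phi_i-\Phi_{i+1}\ge\tfrac12\mu_i\ge\frac1{2m}\Phi_i$, i.e.\ $\Phi_{i+1}\le(1-\tfrac1{2m})\Phi_i$.

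Finally I would iterate: $\Phi_{i_{\max}-1}\le B_0(1-\tfrac1{2m})^{\,i_{\max}-1}\le B_0\,e^{-(i_{\max}-1)/(2m)}$, while at the last active iteration $\Phi_{i_{\max}-1}\ge\mu_{i_{\max}-1}>\frac{50}{\alpha}(1+\Delta_0 B_0)\ge 50$ (using $\alpha\le1$). Combining, $e^{(i_{\max}-1)/(2m)}<B_0/50<B_0$, which forces $i_{\max}-1<2m\ln B_0$ (if $i_{\max}=0$ there is nothing to prove), so $i_{\max}=O(m\log B_0)=O(kd^k\log(B_0 d))$ using $m\le d^k$.

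The step I expect to be the main obstacle is precisely this multiplicative (rather than merely monotone) decrease of $\Phi$: in an iteration where $\mu_i$ only barely exceeds the stopping threshold, $\Phi$ need not drop much, so a naive use of Claim~\ref{claim: after iteration polynomial pi is not a problem anymore} yields only the trivial $N$-dependent bound. The resolution is the observation that the filtering drives the $p_i$-direction energy down to $\frac{10}{\alpha}(\Delta_0\tau_i+\tfrac{11}5)$, a constant factor \emph{below} the threshold, so that combined with $\mu_i$ being above the threshold at least half of $\mu_i$ — hence a $\Theta(1/m)$ fraction of $\Phi_i$, via $\Phi_i\le m\mu_i$ — is removed. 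A secondary point to handle carefully is a singular $\hat M$, where $\hat M^{-1/2}$, $v_i=\hat M^{1/2}p_i$ and $\Phi_i$ are all read through the pseudo-inverse, using that the algorithm's $p_i$ lies in $\mathrm{range}(\hat M)$ so that $v_i^\top\hat M^{-1/2}\x^{\otimes k}=p_i^\top\x^{\otimes k}=p_i(\x)$ still holds.
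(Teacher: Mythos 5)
Your proposal is correct and follows essentially the same route as the paper: the paper's potential $\phi_i=\sum_j\lambda^i_j$ (defined via an $\hat M$-orthonormal polynomial basis) is exactly your $\Phi_i=\trc(\hat M^{-1/2}M_i\hat M^{-1/2})$, and both arguments derive the multiplicative decrease $\Phi_{i+1}\le(1-\Omega(1/m))\Phi_i$ from the fact that filtering drops the $p_i$-direction energy to a constant factor below the stopping threshold, then iterate from $\Phi_0\le B_0 m$ (you get the slightly tighter $\Phi_0\le B_0$) down to $\Phi_{i_{\max}-1}\gtrsim 1$. Your handling of the constants via the intermediate bound $\frac{10}{\alpha}(\Delta_0\tau_i+\tfrac{11}{5})$ is in fact a bit more careful than the paper's appeal to Equation \eqref{eq: pi now has small average square}, and your pseudo-inverse remarks match Section \ref{sec: how to implement}.
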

We now proceed to use Claim \ref{claim: iteration bound independent of N} to argue the spectral $\frac{200}{\alpha}$-boundedness and $(\alpha, \epsilon/2)$-validity. As the first step, we show the following:
\begin{observation}
	\label{obs: function g has small VC dimension}
	There exists a function class $\mathcal{G}$ with a VC dimension of at most $O(i_{\max}\  d^{2k} \log (i_{\max}\ ))$, such that all possible values of the function $g$ belong to $\mathcal{G}$.
\end{observation}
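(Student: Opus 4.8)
The plan is to observe that, by construction, the selector $g$ is a conjunction of a small number of low-degree polynomial threshold functions of $\x$, so that the claimed VC-dimension bound follows by combining the standard bound on the VC dimension of low-degree PTFs (Fact \ref{fact: VC dimension of PTFs}) with the bound on the VC dimension of AND's of function classes (Fact \ref{fact: VC dimension of ands and ors}). The only structural features of $g$ that matter are that it is completely determined by the matrix $\hat M$, the polynomials $p_1,\dots,p_{i_{\max}}$, the thresholds $\tau_1,\dots,\tau_{i_{\max}}$, and the iteration count $i_{\max}$, and that the bound on $\vc(\mathcal{G})$ is permitted to depend on $i_{\max}$.

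First I would examine the two kinds of rejection conditions in the definition of $g$. For a fixed symmetric PSD matrix $\hat M$, the quantity $\max_{p:\ p^\top \hat M p \le 1}(p^\top \x^{\otimes k})^2$ equals the quadratic form $(\x^{\otimes k})^\top \hat M^{+}\x^{\otimes k}$ (with $\hat M^{+}$ the Moore--Penrose pseudo-inverse) whenever $\x^{\otimes k}$ lies in the column space of $\hat M$, and equals $+\infty$ otherwise; since the membership of $\x^{\otimes k}$ in $\mathrm{col}(\hat M)$ is decided by the quadratic form $\|(I-P)\x^{\otimes k}\|^2=0$, where $P$ is the orthogonal projection onto $\mathrm{col}(\hat M)$, and every coordinate of $\x^{\otimes k}$ is a monomial in $\x$ of degree at most $k$, the set $\{\x:\ \max_{p:\ p^\top \hat M p \le 1}(p^\top \x^{\otimes k})^2 > B_0\}$ is a Boolean combination of $O(1)$ degree-$2k$ PTFs in $\x$; hence it lies in a fixed class $\mathcal{H}_0$ of VC dimension $O(d^{2k})$ (using Fact \ref{fact: VC dimension of PTFs} together with Fact \ref{fact: VC dimension of ands and ors} applied with $m=O(1)$). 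For the second kind, each $p_i$ has degree at most $k$, so $(p_i(\x))^2-\tau_i$ is a polynomial in $\x$ of degree at most $2k$ and $\{\x:\ (p_i(\x))^2>\tau_i\}$ is a single degree-$2k$ PTF, lying in the class $\mathcal{H}$ of degree-$2k$ PTFs with $\vc(\mathcal{H})\le d^{2k}+1$ (Fact \ref{fact: VC dimension of PTFs}, cf.\ the analogous use in Claim \ref{claim: bad events are unlikely}).

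Next, since $g(\x)=1$ exactly when all of these conditions fail, $g$ is the logical AND of $i_{\max}+1$ functions: the complement of the first rejection condition (again a Boolean combination of $O(1)$ degree-$2k$ PTFs, hence in a class of VC dimension $O(d^{2k})$) together with the complement of $\{(p_i(\x))^2>\tau_i\}$ for each $i\in[i_{\max}]$. Taking $V=O(d^{2k})$ as a common upper bound on the VC dimensions of these constituent classes and $m=i_{\max}+1$, Fact \ref{fact: VC dimension of ands and ors} yields that the class $\mathcal{G}$ of all such AND's has VC dimension $O(Vm\log m)=O(i_{\max}\, d^{2k}\log i_{\max})$, and every function $g$ produced by any execution of the algorithm — over the randomness of $S_{\DP},S_{\DQ}$ and the adversarial corruptions, which affect only $\hat M$, the $p_i$'s, the $\tau_i$'s and $i_{\max}$ — belongs to $\mathcal{G}$; when a bound on $i_{\max}$ independent of $N$ is also needed one invokes Claim \ref{claim: iteration bound independent of N}.

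I expect the main (mild) obstacle to be the bookkeeping for a possibly rank-deficient $\hat M$: one must verify that the ``$\exists p$'' formulation of the first rejection condition is genuinely captured by finitely many low-degree polynomial inequalities — namely the pseudo-inverse quadratic form together with the null-space membership test — rather than being an infimum over an unbounded family of constraints whose description could a priori have larger complexity. Once this is checked, and it reduces to elementary linear algebra on the $\binom{d+k}{k}$-dimensional monomial embedding, the rest is a direct application of the quoted VC-dimension facts.
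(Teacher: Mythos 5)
Your proposal is correct and follows essentially the same route as the paper: decompose $g$ as a logical AND of $i_{\max}+1$ constituents, each expressible via $O(1)$ degree-$2k$ polynomial threshold functions, and then apply Fact \ref{fact: VC dimension of PTFs} together with Fact \ref{fact: VC dimension of ands and ors}. Your extra care with the pseudo-inverse and the null-space membership test for a rank-deficient $\hat M$ is a valid (and slightly more rigorous) elaboration of what the paper summarizes as ``the indicator of a ball.''
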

\begin{proof}
	The function $g$ is necessarily a logical AND of at most $i_{\max}\, +1$ functions, one of which is the function indicator of a ball in $\bR^d$ and the other $i_{\max}\, $ are logical OR-s of pairs of degree-$2k$ polynomial threshold functions. Combining this with Fact \ref{fact: VC dimension of ands and ors} and Fact \ref{fact: VC dimension of PTFs} yields the observation.
\end{proof}

We start with arguing the $(\alpha, \epsilon/2)$-validity, as well as the stronger condition of Remark \ref{remark: strengtened completeness} (implied by Equation \ref{eq: strengtened abstension rate in terms of TV distance}):

\begin{claim}
	With probability at least $1-\delta/2$ over the choice of the sets $S_{\DP}$ and $S_{\DQ}$, we have 
	\begin{equation}
		\label{eq: abstension rate in terms of TV distance}
		\pr_{\x \sim \DP} [g(\x) = 0]
		\leq  \alpha \, \text{dist}_{\text{TV}}(\DQ, \DP)
		+
		\frac{\epsilon}{2}.
	\end{equation}
Furthermore, for $\sigma>\alpha/2$ and any distribution $\D''$ that is ${1}/{\sigma}$-smooth w.r.t. $\D$, (i.e. for any measurable set $T\subset \R^d$ we have $\Pr_{x\sim \D''}[x\in T] \leq \frac{1}{\sigma} \Pr_{x\sim \D}[x\in T]$) it is the case that
     \begin{equation}
     \label{eq: strengtened abstension rate in terms of TV distance}
     \pr_{\x \sim \DP} [g(\x) = 0]
			\leq 
     \frac{\alpha}{\sigma} \, \tv(\D'',\DQ)
			+
			\frac{\epsilon}{2}, 
     \end{equation}
\end{claim}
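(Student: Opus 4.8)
The plan is to transfer the combinatorial validity guarantee of \Cref{thm: outlier removal} (i.e., $(\alpha,\epsilon/2)$-validity on the empirical sets) to a population-level statement about $g$ by using uniform convergence over the low-VC-dimension class $\mathcal{G}$ of \Cref{obs: function g has small VC dimension}, and then to convert the empirical comparison $|S_{\text{reject}}\cap S_{\text{uncorrupted}}|\le \alpha|S_{\text{adversarial}}|+\tfrac\epsilon2 N$ into the TV-distance bound via the standard coupling between $S_\DQ$ and an i.i.d. sample $S_{\text{clean}}$ from $\DP$. First I would invoke \Cref{fact: VC dimension implies uniform convergence} applied to $\mathcal{G}$: since $\mathrm{VC}(\mathcal{G}) = O(i_{\max} d^{2k}\log i_{\max})$ and $i_{\max} = O(kd^k\log(B_0 d))$ by \Cref{claim: iteration bound independent of N}, the VC dimension is $\poly(\tfrac1\epsilon (kd)^k)$, so with $N = C\big(\tfrac{(kd)^k B}{\epsilon}\log\tfrac1\delta\big)^C$ samples from $\DP$ we get, with probability $\ge 1-\delta/4$, that $\big|\tfrac1N|\{\x\in S_\DP: g(\x)=0\}| - \pr_{\x\sim\DP}[g(\x)=0]\big|$ is at most, say, $\epsilon/8$, uniformly over all $g\in\mathcal{G}$ (and hence for the actual $g$ produced). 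Similarly, a uniform-convergence bound over $\mathcal{G}$ controls how well empirical counts on $S_\DQ$ (and on the coupled clean sample) approximate the corresponding probabilities.

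Next I would set up the coupling. View the adversarial construction of $S_\DQ$ as starting from the i.i.d. clean sample $S_{\text{clean}}\sim\DP^N$ and replacing the subset $S_{\text{clean}}\setminus S_{\text{uncorrupted}}$ (of size $|S_{\text{adversarial}}|$) by $S_{\text{adversarial}}$. For the PQ setting, the relevant point is that we may take $S_\DQ$ itself to be i.i.d.\ from $\DQ$ and choose the "adversarial part" to be exactly those draws that land where $\DQ$ has more mass than $\DP$: concretely, fix a maximal coupling of $\DP$ and $\DQ$, so a single draw agrees with probability $1-\tv(\DP,\DQ)$; over $N$ draws, the number of "disagreeing" coordinates is, up to a fluctuation of $O(\sqrt{N\log(1/\delta)})\le \tfrac{\epsilon}{8}N$ with probability $\ge 1-\delta/4$, at most $(\tv(\DP,\DQ)+\tfrac{\epsilon}{8})N$. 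Taking $S_{\text{adversarial}}$ to be these disagreeing draws and $S_{\text{uncorrupted}}$ the rest realizes the adversarial model of \Cref{thm: outlier removal} with $|S_{\text{adversarial}}|\le(\tv(\DP,\DQ)+\tfrac\epsilon8)N$, and $S_{\text{uncorrupted}}\subseteq S_{\text{clean}}\sim\DP^N$. Now $(\alpha,\epsilon/2)$-validity gives $|S_{\text{reject}}\cap S_{\text{uncorrupted}}|\le \alpha|S_{\text{adversarial}}|+\tfrac\epsilon2 N \le \alpha\,\tv(\DP,\DQ)\,N+\tfrac{\alpha\epsilon}{8}N+\tfrac\epsilon2 N$. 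Since $g$ vanishes on $S_{\text{reject}}$ and the points of $S_{\text{clean}}$ not in $S_{\text{uncorrupted}}$ number at most $|S_{\text{adversarial}}|$, we get $\tfrac1N|\{\x\in S_{\text{clean}}: g(\x)=0\}|\le \alpha\,\tv(\DP,\DQ) + \tfrac\epsilon2 + O(\epsilon)$ after absorbing the lower-order terms; combining with the uniform-convergence bound on $S_{\text{clean}}$ (which is i.i.d.\ $\DP$) yields $\pr_{\x\sim\DP}[g(\x)=0]\le \alpha\,\tv(\DP,\DQ)+\tfrac\epsilon2$ after rescaling the internal $\epsilon$ parameter by the universal constant $C$; the spectral boundedness part is inherited verbatim from \Cref{thm: outlier removal} together with \Cref{lemma: spectrac concentration for VC functions} (applied to $\mathcal{G}$) to pass from empirical second moments to $\E_{\x\sim\DQ}[(p(\x))^2 g(\x)]$.

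For the strengthened statement, Equation~\eqref{eq: strengtened abstension rate in terms of TV distance}, I would repeat the coupling argument but couple $\DQ$ against $\D''$ instead of against $\DP$, exploiting $1/\sigma$-smoothness of $\D''$ w.r.t.\ $\DP$. The key observation is that in the validity argument of \Cref{thm: outlier removal}, the bound $|S_{\text{reject}}\cap S_{\text{uncorrupted}}|\le\alpha|S_{\text{adversarial}}|$ comes from each removed outlier-rich region containing a $(1+10/\alpha)$-to-$1$ majority of adversarial points relative to clean ($\DP$) points in that region; if instead we measure clean points relative to $\D''$, smoothness says a region with $\DP$-mass $q$ has $\D''$-mass at most $q/\sigma$, so the same region-removal inequality, rescaled, gives a ratio $\alpha/\sigma$ against $\D''$-mass. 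Carrying this through the same coupling — now writing $S_\DQ$ as $S_{\text{clean}}'\sim(\D'')^N$ with the disagreeing coordinates (of number $\le(\tv(\D'',\DQ)+\tfrac\epsilon8)N$) designated adversarial — and re-running uniform convergence over $\mathcal{G}$ for the distribution $\DP$ on the unchanged (clean) coordinates, yields $\pr_{\x\sim\DP}[g(\x)=0]\le\tfrac{\alpha}{\sigma}\tv(\D'',\DQ)+\tfrac\epsilon2$. The main obstacle I anticipate is making the region-by-region rescaling rigorous: one must verify that the internal comparison in the proof of the validity claim (the step bounding $|(S^i_{\text{filt}}\setminus S^{i+1}_{\text{filt}})\cap S_{\text{clean}}|$ via Property~(1) of \Cref{claim: bad events are unlikely}) still goes through when the "reference measure" for the clean points is $\D''$ rather than $\DP$ — this needs a version of the uniform-convergence property~(1) stated for $\D''$, which follows because $S_{\text{clean}}'$ is i.i.d.\ from $\D''$ and the relevant events are AND/OR's of bounded-degree PTFs of bounded VC dimension, but the bookkeeping of constants (the $10/\alpha$ vs.\ $10\sigma/\alpha$ factors, and ensuring the $\Delta_0$ slack term remains negligible after division by $\sigma$, which is why the hypothesis $\sigma>\alpha/2$ appears) requires care.
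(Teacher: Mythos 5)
Your reduction to the adversarial guarantee via a maximal coupling has a genuine loss that the claim cannot absorb. In the coupling, $S_{\text{clean}}\sim\DP^N$ and the disagreeing coordinates (about $\tv(\DP,\DQ)\,N$ of them) are designated adversarial; the $(\alpha,\epsilon/2)$-validity of \Cref{thm: outlier removal} then controls only $|S_{\text{reject}}\cap S_{\text{uncorrupted}}|$, i.e.\ the rejections among the clean points that \emph{survive} into $S_\DQ$. The clean points that were replaced --- all $|S_{\text{adversarial}}|\approx \tv(\DP,\DQ)N$ of them --- never enter $S_\DQ$, so nothing in the adversarial theorem constrains whether $g$ rejects them, and they are distributed according to the excess of $\DP$ over $\DQ$, where $g$ may well vanish. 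Hence your step ``the points of $S_{\text{clean}}$ not in $S_{\text{uncorrupted}}$ number at most $|S_{\text{adversarial}}|$'' only yields $\frac{1}{N}|\{\x\in S_{\text{clean}}: g(\x)=0\}|\le (1+\alpha)\tv(\DP,\DQ)+\epsilon/2+O(\epsilon)$, not $\alpha\,\tv+\epsilon/2$. The extra $\tv$ term cannot be ``absorbed as lower order'': it dominates $\alpha\,\tv$ whenever $\alpha<1$, and it is exactly what the claim is designed to avoid --- in the PQ application one sets $\alpha=\rejrate/2$ and needs $\pr_{\x\sim\DP}[g(\x)=0]\le\rejrate$ even when $\tv(\DP,\DQ)$ is close to $1$, which a bound of the form $(1+\alpha)\tv+\epsilon/2$ does not give.

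The paper's proof avoids the coupling entirely and works at the distribution level: a union bound over the rejected slabs plus the VC uniform convergence (in both directions, for $S_\DP$ and for $S_\DQ$) together with the defining property of each threshold $\tau_i$ (each rejected slab carries at least $\tfrac{10}{\alpha}$ times more empirical $\DQ$-mass than $S_\DP$-mass) yields the self-referential inequality $\pr_{\x\sim\DP}[g(\x)=0]\le \tfrac{\alpha}{10}\pr_{\x\sim\DQ}[g(\x)=0]+\tfrac{\epsilon}{10}$; bounding $\pr_{\x\sim\DQ}[g(\x)=0]\le \tv(\DP,\DQ)+\pr_{\x\sim\DP}[g(\x)=0]$ and rearranging (possible because $\alpha/10<1$) delivers the $\alpha\,\tv$ coefficient. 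The strengthened statement is then a two-line corollary: replace the last step by $\pr_{\x\sim\DQ}[g(\x)=0]\le \tv(\D'',\DQ)+\pr_{\x\sim\D''}[g(\x)=0]$ and use smoothness to bound $\pr_{\x\sim\D''}[g(\x)=0]\le\tfrac{1}{\sigma}\pr_{\x\sim\DP}[g(\x)=0]$, with $\sigma>\alpha/2$ ensuring the rearrangement still closes. Your proposed re-derivation of the region-by-region argument with $\D''$ as reference is therefore both unnecessary and, as it routes through the same coupling, subject to the same loss.
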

\begin{proof}
	
	%
	With probability at least $1-\delta/4$ the set  $S_{\DP}$ satisfies the condition of Claim \ref{claim: bad events are unlikely}. Assuming this, we have:
	\begin{multline}
		\label{eq: bounding validity}
		\pr_{\x \sim \DP} [g(\x) = 0]
		\leq
		\sum_{i=0}^{i_{\max}}
		\pr_{\x \sim \DP}\left[(p_i(\x))^2 \geq \tau_i\right]
		\leq \\
		\sum_{i=0}^{i_{\max}}
		\pr_{\x \sim S_{\DP}}\left[(p_i(\x))^2 \geq \tau_i\right]
		+  i_{\max}\left(\margin \right),
	\end{multline}
	where the last step used the premise that $S_{\DP}$ satisfies the condition of Claim \ref{claim: bad events are unlikely}. Recalling that by Observation \ref{obs: function g has small VC dimension} the function $g$ belongs to a function class with VC dimension of $O(i_{\max}\,  d^{O(k)} \log (i_{\max}\, ))$, and combining this with Fact \ref{fact: VC dimension implies uniform convergence}, we see that with probability at least $1-\delta/4$
	\[
	\pr_{\x \sim \DQ} [g(\x) = 0]
	\geq
	\pr_{\x \sim S_{\DQ}} [g(\x) = 0]
	-O\left(
	\sqrt{\frac{i_{\max}\,  d^{O(k)} \log (i_{\max}\, ) \log N}{N}\log \frac{1}{\delta}}\right)
	\]
	Recalling the definition of $g$, we see that for $\x$ in $S_{\DQ}$, we have $g(\x)=0$ when for some iteration $i$, the point $x$ is in the set $ 
	\{
	\x \in S_{\text{filtered}}^i:\: (p_i(\x))^2 > \tau_i
	\}$, we conclude that
	\begin{multline}
		\label{eq: bound on pr g is zero on D}
		\pr_{\x \sim \DQ} [g(\x) = 0]
		\geq \\
		\sum_{i=0}^{i_{\max}}   \frac{|\{
			\x \in S_{\text{filtered}}^i:\: (p_i(\x))^2 > \tau_i
			\}|}{N}
		-O\left(
		\sqrt{\frac{i_{\max}\,  d^{O(k)} \log (i_{\max}\, ) \log N}{N}\log \frac{1}{\delta}}\right)
	\end{multline}
	
	Now, we recall that for every iteration $i$ we have:
	
	\[
	\frac{1}{N}
	\bigg\lvert 
	\{
	\x \in S_{\text{filtered}}^i:\: (p_i(\x))^2 > \tau_i
	\}
	\bigg\rvert
	\geq
	\frac{10}{\alpha}
	\left(
	\pr_{\x \sim S_{\DP}}
	[(p_i(\x))^2 > \tau_i]
	+\margin
	\right),
	\]
	and therefore:
	\begin{multline}
		\label{eq: remove a lot more non-reference points than reference points}
		\sum_{i=0}^{i_{\max}}   \pr_{\x \sim S_{\DP}}
		[(p_i(\x))^2 > \tau_i]
		\leq \\
		\frac{\alpha}{10}
		\left(
		\sum_{i=0}^{i_{\max}} 
		\frac{1}{N}
		\bigg\lvert 
		\{
		\x \in S_{\text{filtered}}^i:\: (p_i(\x))^2 > \tau_i
		\} \bigg\rvert\right)
		+ 200 i_{\max} \sqrt{\frac{d^{2k} \log N}{N}\log \frac{1}{\delta}}.
	\end{multline}
	Thus, combining Equation \ref{eq: bounding validity}, Equation \ref{eq: bound on pr g is zero on D} and Equation \ref{eq: remove a lot more non-reference points than reference points} we get:
	\begin{equation}
  \label{eq: after combining three equations}
		\pr_{\x \sim \DP} [g(\x) = 0]\leq
		\frac{\alpha}{10}
		\underbrace{
			\bigg(\pr_{\x \sim \DQ} [g(\x) = 0] \bigg)}_{\leq \text{dist}_{\text{TV}}(\DQ, \DP)+\pr_{\x \sim \DP} [g(\x) = 0]}
		+
		O\left(
		\  i_{\max}\  \sqrt{\frac{d^{O(k)} \log N}{N}\log \frac{1}{\delta}}
		\right).
	\end{equation}
	We now the bound on $i_{\max}$ from Claim \ref{claim: iteration bound independent of N}, and recall that $N=C\left(\frac{(kd)^k B}{\epsilon} \log \frac{1}{\delta}\right)^C$. Overall, we see that for sufficiently large absolute constant $C$ the error term above is upper-bounded by $\epsilon/10$, so
	\begin{equation}
		\pr_{\x \sim \DP} [g(\x) = 0]\leq
		\frac{\alpha}{10}\bigg(\text{dist}_{\text{TV}}(\DQ, \DP)+\pr_{\x \sim \DP} [g(\x) = 0]\bigg)
		+
		\frac{\epsilon}{10}
	\end{equation}
	Rearranging the inequality above and recalling that $\alpha<1$, we conclude that Equation \ref{eq: abstension rate in terms of TV distance} holds.

 Finally, we prodeed to argue Equation \ref{eq: strengtened abstension rate in terms of TV distance}. For $\sigma>\alpha/2$ suppose that the distribution $\D''$ is ${1}/{\sigma}$-smooth w.r.t. $\D$, (i.e. for any measurable set $T\subset \R^d$ we have $\Pr_{x\sim \D''}[x\in T] \leq \frac{1}{\sigma} \Pr_{x\sim \D}[x\in T]$). Then, from Equation \ref{eq: after combining three equations} we have
 	\begin{equation}
  \underbrace{
		\pr_{\x \sim \DP} [g(\x) = 0]}_{
  \substack{
  \leq \sigma \pr_{\x \sim \D''} [g(\x) = 0]\\
  \text{by $\sigma$-smoothness}}
  }
  \leq
		\frac{\alpha}{10}
		\underbrace{
			\bigg(\pr_{\x \sim \DQ} [g(\x) = 0] \bigg)}_{\leq \text{dist}_{\text{TV}}(\D'', \DP)+\pr_{\x \sim \D''} [g(\x) = 0]}
		+
  \underbrace{
		O\left(
		\  i_{\max}\  \sqrt{\frac{d^{O(k)} \log N}{N}\log \frac{1}{\delta}}
		\right).
  }_{
  \substack{
  \leq \eps/10\\
  \text{for constant $C$ sufficiently large.}}
  }
	\end{equation}
Rearranging the inequality above and recalling that $\alpha<1$ and $\sigma>\alpha/2$, we conclude that Equation \ref{eq: strengtened abstension rate in terms of TV distance} holds.
 
\end{proof}

Now, we argue the spectral $\frac{200}{\alpha}$-boundedness. Recall that with probability at least $\delta/20$ the matrix $\hat{M}$ satisfies Equation \ref{eq: the thing that M hat satisfies}, which we will henceforth assume.
Also recall that Claim \ref{claim: algo outbuts spectrally bounded set} says that for every polynomial $p$ of degree at most $k$  satisfying $\ex_{\x \sim \DP}[(p(\x))^2]\leq 1$, the set $S_{\text{accept}}$ given by the algorithm in Theorem \ref{thm: outlier removal} satisfies $\frac{1}{N}
\sum_{\x \in S_{\text{accept}}}
p(\x)^2
\leq \frac{100}{\alpha}$. By inspecting the definition of the function $g$, we see that for $\x$ in $S_{\DQ}$ we have $g(\x)=1$ if an only if $x$ is in $S_{\text{accept}}$. Therefore, 
\begin{equation}
	\label{eq: spectral concentration in-dataset}
	\max_{
		\substack{
			\text{$p$ of degree $k$  s.t:}\\ \ex_{\x \sim \DP}[(p(\x))^2]\leq 1 }
	}
	\left[
	\bE_{\x \sim S_{\DQ}}
	[g(\x) p(\x)^2] \right]
	\leq \frac{100}{\alpha}
\end{equation}

In order to conclude the the spectral $\frac{200}{\alpha}$-boundedness condition we need to be able to conclude that the equation above is likely to generalize, i.e. it approximately holds when one replaces the expectation w.r.t. $S_{\DQ}$ with the expectation w.r.t. the distribution $\DQ$. To show this, we first recall that via Observation \ref{obs: function g has small VC dimension} the function $g$ belongs to a function class $\mathcal{G}$ with a VC dimension of at most $O(i_{\max}\,  d^{2k} \log (i_{\max}\, ))$. We also see that $g(\x)=0$ for all $\x$ satisfying $\underset{p: \ 
	\ex_{\x \sim \DP} [(p(\x))^2)] \leq 1 }{\max}
(p(\x))^2>10 B_0$, because the matrix $\hat{M}$ satisfies Equation \ref{eq: the thing that M hat satisfies} and therefore if $\ex_{\x \sim \DP} [(p(\x))^2)] \leq 1$ and $(p(\x))^2>10 B_0$, then also $(\sqrt{0.9}p)^T M (\sqrt{0.9}p)\leq 1$ and $\sqrt{0.9} p(\x)>B_0$, which implies that $g(x)=0$ by the definition of $g$. We show in Lemma \ref{lemma: spectrac concentration for VC functions} that with probability at least $1-\delta/20$ such function classes satisfy 
\begin{multline}
	\label{eq: spectral bound generalizes}
	\sup_{
		\substack{
			g \in \mathcal{G}\\
			p \text{ of degree $k$ s.t:}\:\bE_{x\sim \N}(p(x))^2\leq 1
		}
	}
	\left\lvert
	\frac{1}{N}
	\sum_{x \sim S_{\DQ}}
	\left[f(x) (p(x))^2\right]
	-
	\bE_{x \sim \DQ}\left[
	f(x) (p(x))^2 \right]
	\right\rvert
	\leq \\
	O\left(\sqrt{B_0}
	\left(
	i_{\max}\,  d^{2k} \log (i_{\max}\, )\ d^{2k}  \frac{\log N}{N} \log \frac{1}{\delta} \right)^{1/4}
	\right)\leq  1 \leq\frac{1}{\alpha},
\end{multline}
where the penultimate inequality above is achieved by
substituting the bound 
$i_{\max}
=
O\left( 
k d^k
\log B_0 d
\right)
$
from Claim \ref{claim: iteration bound independent of N} into the expression above, substituting $B_0 = \frac{4d^{3k}}{\epsilon}$, recalling that $N=C\left(\frac{(kd)^k }{\epsilon} \log \frac{1}{\delta}\right)^C$ and taking $C$ to be a sufficiently large absolute constant. Combining Equation \ref{eq: spectral bound generalizes} with Equation \ref{eq: spectral concentration in-dataset} we conclude that with probability at least $1-\delta/20$ it is the case that 
\[
\max_{
	\substack{
		\text{$p$ of degree $k$  s.t.}\\ \text{ $\ex_{\x \sim \DP}[(p(\x))^2]\leq 1$}
}}
\bE_{\x \sim \DQ}
\left[
(p(\x))^2 g(\x)
\right]
\leq \frac{101}{\alpha},
\]
Overall, with probability at least $1-\delta$ the function $g$ satisfies spectral $\frac{200}{\alpha}$-boundedness, $(\alpha,\epsilon)$-validity, as well as the required run-time bound.

Finally, we come back to Claim \ref{claim: iteration bound independent of N}, proving which concludes this section.
\begin{proof}[Proof of Claim \ref{claim: iteration bound independent of N}.]
	Let $i$ be an iteration such that $i<i_{\max}$. Since the while loop did not terminate on step $i$, we have
	\begin{equation}
		\label{eq: pi is had big average square}
		\sum_{\x\in S_{\text{filtered}}^i} (p_i(\x))^2 > \frac{100}{\alpha}.
	\end{equation}
	At the same time,
	Claim \ref{claim: after iteration polynomial pi is not a problem anymore}  implies that  
	\begin{equation}
		\label{eq: pi now has small average square}
		\frac{1}{N}
		\sum_{\x\in S_{\text{filtered}}^{i+1}} (p_{i}(\x))^2 \leq \frac{20}{\alpha}
	\end{equation}
	Let $m\leq d^k$ denote the dimension of the vector space of degree-$k$ polynomials. For values of $i$ between $0$ and $i_{\max}$ and for values of $j$ between $1$ and $m$,let the collection of polynomials $\{R^i_j\}$ and non-negative real values $\{\lambda^i_j\}$ be defined as
	\begin{align}
		R^i_j = \argmax_{
			\substack{
				R\text{ of degree $k$ s.t:}\\
				\forall j'< j:\:
				(R_{j'}^i)^T \hat{M}R  = 0\\
				R^T \hat{M}R \leq 1
		}}
		\frac{1}{N}
		\sum_{\x\in S_{\text{filtered}}^i} 
		\left[
		(R(\x))^2
		\right] && \lambda^i_j= \frac{1}{N}
		\sum_{\x\in S_{\text{filtered}}^i} 
		\left[
		(R^i_j(\x))^2
		\right]
	\end{align}
	In particular\footnote{Speaking precisely, there might be multiple choices for the collection of the polynomials $\{R^t_j\}$. In this case, still, we can choose these polynomials without loss of generality in such a way that $p_i=R^i_1$.}, we have $p_i=R^i_1$.
	We will use the quantity $\phi_i:= \sum_{j=1}^m \lambda^i_j$ as a potential function, for which we have:	
	\begin{multline}
		\phi_i-\phi_{i+1}
		=\\
		\frac{1}{N}
		\sum_{j=1}^m
		\sum_{x \in S_{\text{filtered}}^i\setminus S_{\text{filtered}}^{i+1}} (R^i_j(x))^2
		\geq
		\frac{1}{N}
		\sum_{\x \in S_{\text{filtered}}^i\setminus S_{\text{filtered}}^{i+1}} (R^i_1(\x))^2
		=\frac{1}{N}
		\sum_{\x \in S_{\text{filtered}}^i\setminus S_{\text{filtered}}^{i+1}} (p_i(\x))^2 
		=\\
		=
		\sum_{\x \in S_{\text{filtered}}^i} (p_i(\x))^2 
		-
		\sum_{S_{\text{filtered}}^{i+1}} (p_i(\x))^2 
		\geq \lambda_1^i - \frac{20}{\alpha}
	\end{multline}
	Where in the end we substituted Equation \ref{eq: pi now has small average square}.
	Since $\lambda_1^i$ equals to $\frac{1}{N}
	\sum_{\x\in S_{\text{filtered}}^i} 
	\left[
	(p_i(\x))^2
	\right]$ and has a value of at least $100/\alpha$ by Equation \ref{eq: pi is had big average square}, the inequality above allows us to conclude 
	\begin{equation}
		\label{eq: trace decreases every iteration}
		\phi_{i+1}
		\leq
		\sum_{j=1}^m \lambda_j^i-0.8\lambda_1^i\leq \sum_{j=1}^m \lambda_j^i-\frac{0.8}{m}\sum_j \lambda_j^i
		\leq
		\left(
		1-\frac{0.9}{d^k}
		\right)
		\phi_i
	\end{equation}
	We now combine the inequality above with the following two observations:
	\begin{itemize}
		\item We have 
		\[
		\phi_{i_{\max}-1} > \frac{100}{\alpha}>1,
		\]
		because the algorithm did not terminate in the $(i_{\max}-1)$-th iteration, and therefore Equation \ref{eq: pi is had big average square} holds.
		\item We have
		\[
		\phi_0
		=
		\frac{1}{N}
		\sum_{j=1}^m
		\sum_{\x \in S_{\text{filtered}}^0} (R^i_j(\x))^2
		\leq
		B_0 m,
		\]
		where the last inequality follows from  the fact that every element $\x$ in $S_\text{filtered}^0$ satisfies \[\max_{
			{p \text{ of degree }k:\ 
				p^T \hat{M} p
				\leq 1  
		}}
		(p(\x))^2\leq B_0.\]
	\end{itemize}
	Overall, the two bounds above, together with Equation \ref{eq: trace decreases every iteration} allow us to conclude that: 
	\[
	i_{\max}
	\leq
	O\left( 
	d^k 
	\log(B_0m)
	\right)=
	O\left( 
	k d^k 
	\log(B_0d)
	\right),
	\]
	where the last step follows by substituting the definitions of $m$ and $\epsilon$.
\end{proof}


\end{document}